\documentclass[10pt,rqno, a4paper]{article}



\usepackage{xr}
\externaldocument{SSCCII}


\usepackage{titlesec}
\titleformat*{\section}{\bf\large\centering} 
\titleformat*{\subsection}{\bf} 
\titleformat*{\subsubsection}{\it} 



\usepackage{psfrag}
\usepackage{cite}
\usepackage[small]{caption}
\usepackage{amsmath}
\usepackage{amssymb}
\usepackage{amsfonts}
\usepackage{setspace}
\usepackage{latexsym}
\usepackage{mathrsfs}
\usepackage{epsfig}
\usepackage{amsthm}
\usepackage{yfonts}
\usepackage{graphicx}
\usepackage{color}
\usepackage{verbatim}
\usepackage{enumitem}
\usepackage[margin=2cm]{caption}





\newcommand{\be}{\begin{equation}}
\newcommand{\ee}{\end{equation}}

\newcommand{\vs}{\vspace{0.2cm}}


\usepackage{pstool}
\usepackage{fancyhdr}
\pagestyle{fancy}

\AtBeginDocument{\thispagestyle{plain}}
\fancypagestyle{plain}
	{\fancyhead{}
	\fancyfoot{}
	\fancyhead[LE,RO]{}
	\fancyhead[LO,RE]{}
	\fancyfoot[C]{\thepage}
	
	\headsep = 20pt}

\fancyfoot{}
\fancyhead[LE,RO]{}
\fancyhead[LO,RE]{}
\fancyfoot[C]{\thepage}



\numberwithin{equation}{subsection}
\newtheorem{Theorem}{Theorem}[subsection]
\newtheorem{Definition}[Theorem]{Definition}
\newtheorem{Proposition}[Theorem]{Proposition}

\newtheorem{Lemma}[Theorem]{Lemma}
\newtheorem{Corollary}[Theorem]{Corollary}
\newtheorem{Problem}[Theorem]{Problem}



\addtolength{\hoffset}{-1.2cm}
\addtolength{\textwidth}{1.5cm}
\addtolength{\textheight}{1.5cm}
\addtolength{\oddsidemargin}{.7cm}
\addtolength{\evensidemargin}{-0.2cm}
\addtolength{\voffset}{-.8cm}
\linespread{1.1}\headsep = 9pt 
\addtolength{\headwidth}{1.3cm}



\newcommand{\dist}{d} 


\newcommand{\sg}{g} 
\newcommand{\hg}{{\mathfrak{g}}} 

\newcommand{\length}{L} 
\newcommand{\diam}{{\rm  diam}} 

\newcommand{\T}{{\rm T}}
\newcommand{\Sa}{{\rm S}^{1}} 

\newcommand{\sM}{\Sigma} 

\newcommand{\gcur}{\kappa} 


\usepackage{setspace, tocloft}

\usepackage{tocloft} 

\setlength\cftparskip{1pt}
\setlength\cftbeforesecskip{1pt}
\setlength\cftaftertoctitleskip{1pt}


\setlength\cftparskip{-.1pt}
\setlength\cftbeforesecskip{1pt}
\setlength\cftaftertoctitleskip{1pt}

\allowdisplaybreaks


\begin{document}

\thispagestyle{empty}

\begin{center}
{\Large\bf A classification theorem for static vacuum black holes
\vspace{.2cm}

Part I: the study of the lapse
}

\vspace{.5cm}

{\sc Mart\'in Reiris Ithurralde}

{mreiris@cmat.edu.uy}
\vs

{\it Centro de Matem\'atica/Universidad de la Rep\'ublica}

{\it Montevideo, Uruguay}
\vs

\begin{abstract}
The celebrated uniqueness's theorem of the Schwarzschild solution by Israel, Robinson et al, and Bunting/Masood-ul-Alam, asserts that the only asymptotically flat static solution of the vacuum Einstein equations with compact but non-necessarily connected horizon is Schwarzschild. Between this article and its sequel we extend this result by proving a  classification theorem for all (metrically complete) solutions of the static vacuum Einstein equations with compact but non-necessarily connected horizon without making any further assumption on the topology or the asymptotic. It is shown that any such solution is either: (i) a Boost, (ii) a Schwarzschild black hole, or (iii) is of Myers/Korotkin-Nicolai type, that is, it has the same topology and Kasner asymptotic as the Myers/Korotkin-Nicolai black holes. In a broad sense, the theorem classifies all the static vacuum black holes in 3+1-dimensions. 

In this Part I we use introduce techniques in conformal geometry and comparison geometry \'a la Bakry-\'Emery to prove, among other things, that vacuum static black holes have only one end, and, furthermore, that the lapse is bounded away from zero at infinity. The techniques have interest in themselves and could be applied in other contexts as well, for instance to study higher-dimensional static black holes.
\end{abstract}

\end{center}

\newpage

\begin{center}
\begin{minipage}[l]{11cm}
{\small\tableofcontents}
\end{minipage}
\end{center}
\vs

\section{Introduction}  

The vacuum static solutions of the Einstein equations have played since early days a fundamental role in the study of Einstein's theory and the classification theorems have been at the center of the work. In this context, the celebrated {\it uniqueness theorem of the Schwarzschild solution} asserts that the Schwarzschild black holes are the only asymptotically flat vacuum static solutions with compact but non-necessarily connected horizon (Israel \cite{Israel}, Robinson et al \cite{RobinsonII}, Bunting/Masood-ul-Alam \cite{MR876598}; for a review on the history of this theorem see \cite{Robinson}). Between this article and its sequel it is proved a classification theorem extending Schwarzschild's uniqueness theorem to vacuum static solutions having compact but non-necessarily connected horizon without making further assumptions on their topology or asymptotic. 

Static solutions appear in many contexts. In Riemannian geometry they model for instance the blow up of singularities forming along sequences of Yamabe metrics \cite{MR1452867},\cite{MR1726233}, \cite{MR1837365}, and provide interesting examples of Ricci-flat Riemannian metrics with a warped $\Sa$-factor \cite{MR1809792}. In physics they are crucial for example in the study of mass, quasi-local mass and initial data sets \cite{MR996396}, \cite{MR3064190}, \cite{MR3037574}, or in the exploration of certain high-dimensional theories \cite{PhysRevD.35.455}. A classification theorem can be relevant in any of these contexts.

Stated below is the classification theorem that we shall prove. The objects to classify are {\it static black hole data sets} that condensate the notion of static black hole at the initial data level\footnote{which is the viewpoint adopted in these articles. We classify static black hole spacetimes having a Cauchy hypersurface orthogonal to the static Killing field. The problem of classifying static spacetimes without such condition is not treated here, see for instance \cite{MR3077927}.}. Their definition and the discussion of the three main families in the theorem is given right after. Full technical details can be found in the background subsection \ref{SDSMT}. Previous work and references related to these articles are discussed at the end of this section. For better clarity the proof's structure of the classification theorem is explained separately in the next subsection \ref{TSOTP}. A detailed account of the contents of this Part I is given in subsection \ref{CSTA}.
\begin{Theorem}[The classification Theorem]\label{TCTHM} Any static black hole data set is either,
\begin{enumerate}[labelindent=\parindent, leftmargin=*, label={\rm (\Roman*)}, widest=a, align=left]
\item\label{FTII} A Schwarzschild black hole, or,
\item\label{FTI} a Boost, or,
\item\label{FTIII} is of Myers/Korotkin-Nicolai type.
\end{enumerate}
\end{Theorem}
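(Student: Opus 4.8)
The plan is to assemble the classification out of the two structural theorems announced for this Part~I together with the classical Schwarzschild uniqueness theorem. So let me start from an arbitrary static black hole data set: a complete $3$-manifold $(\sM,\sg,N)$ solving the static vacuum equations $N\,\text{Ric}(\sg)=\nabla^{2}N$ and $\Delta N=0$, whose horizon $\{N=0\}$ is compact but possibly disconnected, with no hypothesis on topology or asymptotic. The first move is to reduce the whole problem to the analysis of a single end. By the first main theorem of Part~I the data set has exactly one end $E$, so the horizon cannot conceal extra ends and the global picture is forced to be a compact core, bounded by the horizon, capped off by $E$; all the remaining work is then asymptotic analysis on $E$.

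The second move is to read the coarse geometry of $E$ off the lapse. Here I would invoke the second main theorem of Part~I, that $N$ is bounded away from zero at infinity, $N\geq c>0$ on $E$. The mechanism that makes this usable is the Bakry-\'Emery reformulation: the weight $f=-\log N$ makes the $1$-Bakry-\'Emery Ricci tensor $\text{Ric}(\sg)-N^{-1}\nabla^{2}N$ vanish identically, hence nonnegative, by the static equation. This nonnegative-curvature structure powers volume-comparison and splitting arguments on $E$, and with $N$ now known to be a positive harmonic function bounded below, the decisive dichotomy is whether $N$ stays bounded or grows without bound along $E$. In each regime the conformal and comparison estimates should pin the pair $(\sg,N)$ to a model asymptotic geometry: asymptotically flat, a boost (cylindrical) model, or a Kasner model.

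With the models in hand the three cases close. If $E$ is asymptotically flat then, the horizon being compact, the uniqueness theorem of Israel, Robinson et al., and Bunting/Masood-ul-Alam applies verbatim and forces the Schwarzschild black hole, case~\ref{FTII}. If $N$ remains bounded but $E$ is not asymptotically flat, the splitting/comparison estimates should pin the end to the boost model, giving case~\ref{FTI}. Finally, if $N$ grows without bound the asymptotic geometry degenerates onto a Kasner model, and matching this asymptotic against the compact-horizon topology yields the topology and Kasner asymptotic of the Myers/Korotkin-Nicolai family, case~\ref{FTIII}; it is precisely this last global identification that is carried out in the sequel.

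The hard part is not the asymptotically flat case, which is delegated to the classical theorem, but the rigidity in the two non-flat regimes. First I would have to upgrade ``$N$ bounded away from zero'' to genuine limiting behavior of $(\sg,N)$, extracting a well-defined asymptotic model rather than a merely subsequential or oscillatory limit; doing this with only the weak sign of the Bakry-\'Emery curvature, and with no a priori injectivity-radius bound or curvature decay, is the delicate point, and is exactly where the conformal and comparison techniques of Part~I are needed. Second, I would have to separate cleanly the borderline between the bounded (boost) and unbounded (Kasner) regimes and then show the limiting model is attained globally. This last global step---promoting the Kasner asymptotic of case~\ref{FTIII} to the full topology and metric of a Myers/Korotkin-Nicolai black hole---is the deepest obstacle, which is why it is deferred to Part~II.
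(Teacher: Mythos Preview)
Your outline has a genuine structural gap. The dichotomy you propose---$N$ bounded versus $N$ unbounded at infinity---does not separate the three families the way you claim. For the Boost one has $N=x$, which grows linearly and is therefore unbounded; the Boost is in fact itself a Kasner space (the one with $\gamma=1$). So your case ``$N$ bounded but not asymptotically flat $\Rightarrow$ Boost'' is empty, while both the Boost and the Myers/Korotkin--Nicolai solutions fall into your ``$N$ unbounded'' bin. The growth of $N$ alone cannot distinguish \ref{FTI} from \ref{FTIII}.

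The paper's scheme is different in a way that matters. The two structural inputs from Part~I are (a) $\Sigma$ has one end, and (b) every horizon is \emph{weakly outermost} (Proposition~\ref{SOFOR}); the lower bound on $N$ at infinity (Theorem~\ref{BNFB}) is explicitly noted in the paper as \emph{not} used in the classification proof. Once (b) is known, the Galloway/Schoen--Galloway rigidity forces each horizon to be either a totally geodesic sphere or a torus, and the torus case already implies the data is a Boost. The end is then shown in Part~II to be asymptotically flat or asymptotically Kasner; asymptotically flat gives Schwarzschild by the classical uniqueness theorem, and in the Kasner case (with spherical horizons) one finds a mean-convex torus far out and applies Galloway's topological theorem to force the solid-torus-minus-balls topology, which is the definition of M/KN type. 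The Boost/M-KN split thus comes from horizon topology via the weakly-outermost step, not from the asymptotic behaviour of $N$. Your proposal is missing this weakly-outermost ingredient entirely, and without it there is no mechanism available to isolate the Boost.
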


Formally, a {\it (vacuum) static data set} $(\Sigma; g, N)$ consists of an orientable three manifold $\Sigma$, a function $N$ called the lapse and positive in the interior $\Sigma^{\circ}=\Sigma\setminus \partial \Sigma$ of $\Sigma$, and a Riemannian metric $g$ on $\Sigma$ satisfying the vacuum static equations,
\begin{gather}
\label{EEII} NRic = \nabla\nabla N,\quad \Delta N=0
\end{gather}
A static data set $(\Sigma;g,N)$ gives rise to a vacuum static spacetime (${\bf Ric}=0$), 
\be\label{SPTE}
{\bf \Sigma}=\mathbb{R}\times \Sigma,\quad {\bf g}=N^{2}dt^{2}+g,
\ee
where $\partial_{t}$ is the static Killing field. Conversely, a static spacetime of the form (\ref{SPTE}), gives rise to a static data set $(\Sigma;g,N)$. Throughout this article we will work with static data sets rather than their associated spacetimes. 

A {\it static black hole data set} is defined as a static data $(\Sigma;g,N)$ such that $\partial \Sigma=\{N=0\}\neq \emptyset$ is compact and $(\Sigma;g)$ is metrically complete. In this definition no special asymptotic or global topological structure is assumed. The boundary of $\Sigma$ is non-necessarily connected and is called the horizon. Without further justification, we will say that the spacetime of a static black hole data set is a `black hole spacetime', \footnote{Indeed the outer-communication region.}. We stress that all the analysis in these articles is carried only on static data sets, leaving the spacetime picture aside. 

Let us discuss now the families \ref{FTII}, \ref{FTI} and \ref{FTIII} of static black hole data sets. 

The Schwarzschild static black hole data sets are spherically symmetric and asymptotically flat, and are given explicitly by,
\be\label{SCHDT}
\Sigma=\mathbb{R}^{3}\setminus B(0,2m),\quad g=\frac{1}{1-2m/r}dr^{2}+r^{2}d\Omega^{2}\quad {\rm and}\quad N=\sqrt{1-2m/r}
\ee
where $m>0$ is the mass and $B(0,2m)$ is the open ball of radius $2m$\footnote{The spacetime (\ref{SPTE}) corresponding to (\ref{SCHDT}) is just the region of exterior communication of a Schwarzschild black hole of mass $m$. The horizon is the boundary $\partial \Sigma = \{N=0\}$. Restricted to $r\geq R(t)> 2m$, the Schwarzschild space models the gravitational field of any isolated but spherically symmetric physical body of radius $R(t)$. The object itself may be transiting a dynamical process (for instance in a star), but the spacetime outside remains spherically symmetric and thus Schwarzschild by Birkhoff's theorem. If the radius $R(t)$ goes below the threshold of $2m$, no equilibrium is possible, the body undergoes a complete gravitational collapse and a Schwarzschild black hole remains.}. The family is parameterised by the mass $m>0$. It is of course the paradigmatic family of static black hole data sets.
\begin{figure}[h]
\centering
\includegraphics[width=3.7cm,height=3.7cm]{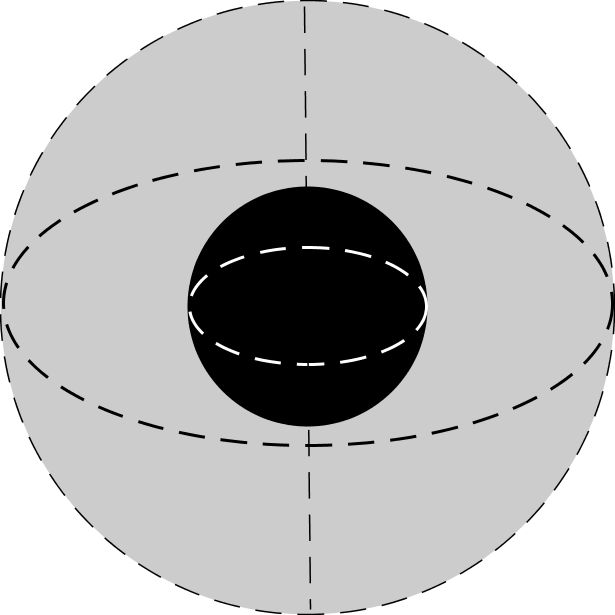}
\caption{A Schwarzschild black hole. The grey region is $\Sigma$ and is diffeomorphic to $\mathbb{R}^{3}$ minus the open (black) ball $B(0,2m)$. The solution is spherically symmetric and thus axisymmetric.}
\label{FigureBoost}
\end{figure}

The flat static data
\be\label{BOOSTDEF} 
\Sigma=[0,\infty)\times \mathbb{R}^{2};\quad g= dx^{2}+dy^{2}+dz^{2},\quad N=x,
\ee
is called the {\it Boost}. The spacetime (\ref{SPTE}) associated to (\ref{BOOSTDEF}) is the Rindle-wedge of the Minkowski spacetime and the static Killing field is the boost generator $x\partial_{t}$, hence the name. The quotients of the Boost by any $\mathbb{Z}^{2}$ group of isometries generated by two translations along the factor $\mathbb{R}^{2}$, are data of the form,
\be\label{BOOSTQUOTIENT}
\Sigma=[0,\infty)\times \T^{2},\quad g=dx^{2}+h,\quad N=x
\ee
where $h$ is a flat metric on the two-torus $\T^{2}=\Sa\times \Sa$. As the lapse $N$ is zero on the boundary of $\Sigma$, these are static black hole data sets. They define the Boost family in the classification theorem, and is parametrised by the set of flat two-tori.

Other relevant examples of static data sets are the {\it Kasner data sets} (a complete discussion is given in subsection \ref{SSKK} of Part II),
\be\label{Kasner}
\Sigma=(0,\infty)\times \mathbb{R}^{2};\quad g= dx^{2}+x^{2\alpha}dy^{2}+x^{2\beta}dz^{2},\quad N=x^{\gamma},
\ee
where $y$ and $z$ are coordinates on each of the factors $\mathbb{R}$ of $\mathbb{R}^{2}$, and $\alpha, \beta$ and $\gamma$ are any numbers satisfying,
\be
\alpha+\beta+\gamma=1,\qquad \alpha^{2}+\beta^{2}+\gamma^{2}=1
\ee
\begin{figure}[h]
\centering
\includegraphics[width=6cm, height=6cm]{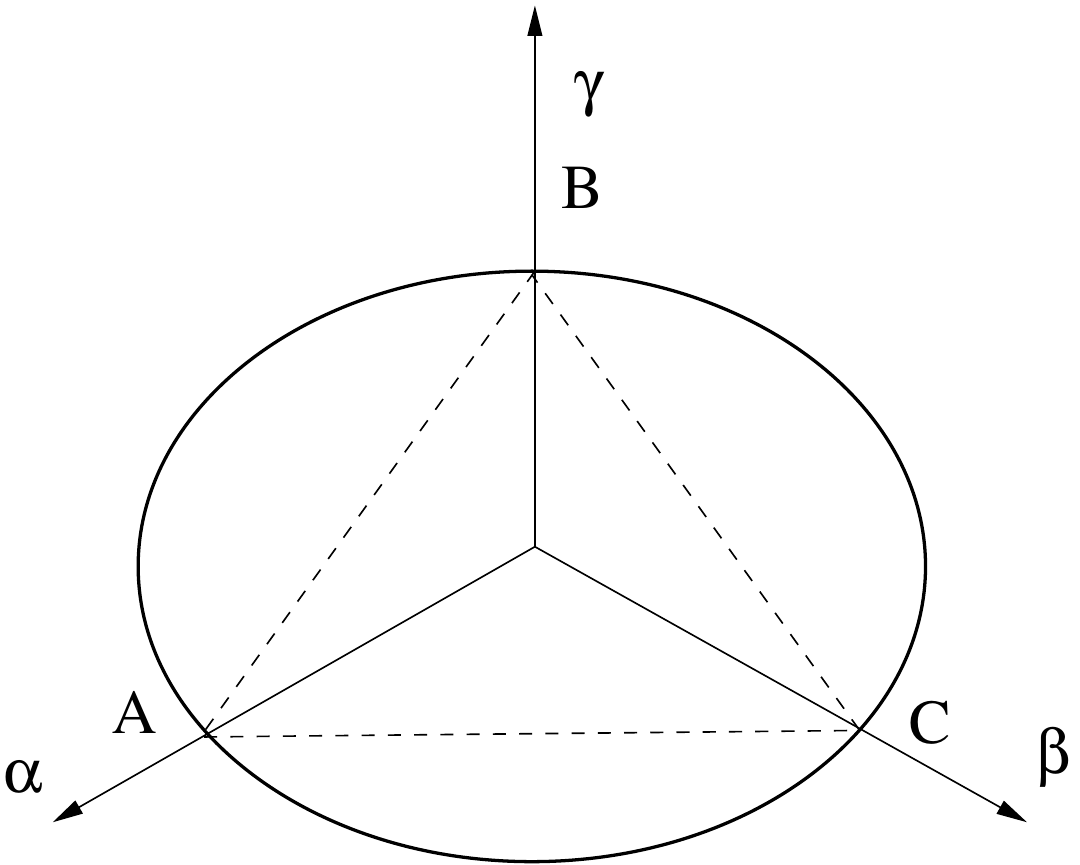}
\caption{The circle that defines the range of the Kasner parameters $\alpha$, $\beta$, $\gamma$.}
\label{Figure21}
\end{figure}
(see Figure \ref{Figure21}). The Kasner space $(\alpha,\beta,\gamma)=(0,0,1)$ is the Boost\footnote{One must add indeed the set $\{0\}\times \mathbb{R}^{2}$.} and is the Kasner data with faster growth of the lapse (linear). We denote it by the letter $B$. The Kasner spaces $(1,0,0)$ and $(0,1,0)$, that have constant lapse and are therefore flat, are denoted respectively by the letters $A$ and $C$. 

As with the Boost, one can quotient a general Kasner data to obtain data of the form,
\be\label{KASNERQUOTIENT}
\Sigma=(0,\infty)\times \T^{2},\quad g=dx^{2}+h(x),\quad N=x^{\gamma}
\ee
where, $h(x)$ is a certain path of flat metrics on $\T^{2}$. This is the Kasner family and is parametrised by the set of possible Kasner triples $(\alpha,\beta,\gamma)$ (a circle) times the set of flat two-tori up to isometry. The Myers/Korotkin-Nicolai data sets, that we describe a few lines below, are asymptotic to them. Finally, we denote also by $A$, $B$, $C$, to the quotients of the spaces $A$, $B$, $C$ respectively.
\begin{figure}[h]
\centering
\includegraphics[width=3.6cm,height=3.6cm]{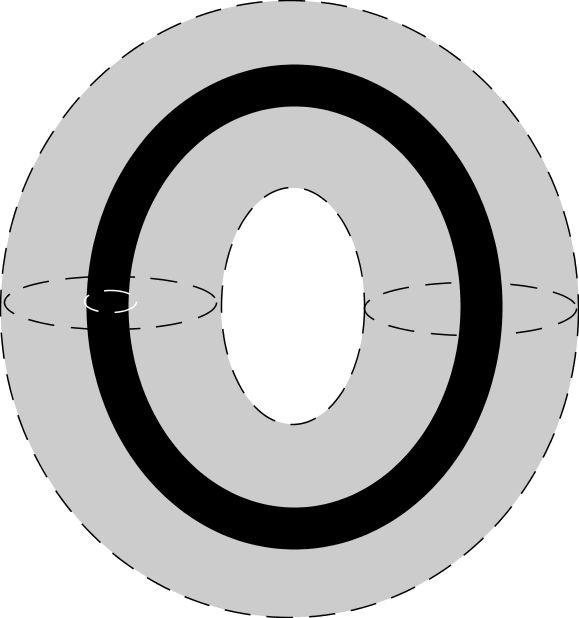}
\caption{A Boost black hole. The grey region is $\Sigma$ and is diffeomorphic to a solid torus minus an open (black) solid torus.}
\label{FigureBoost}
\end{figure}       

Let us see the last family in the classification theorem, namely the static black hole data sets of Myers/Korotkin-Nicolai type. A static black hole data set is said to be of {\it Myers/Korotkin-Nicolai type} if its topology is that of a solid three-torus minus a finite number of balls and is asymptotic to a Kasner space (\ref{KASNERQUOTIENT}), (see Definition \ref{KADEF}).  Black holes with such properties were found by Myers in \cite{PhysRevD.35.455} and were rediscovered and further investigated by Korotkin and Nicolai in \cite{94aperiodic}, \cite{KOROTKIN1994229}. Myers and Korotkin/Nicolai's construction used first Weyl's method to find a `periodic' static solution by superposing along a common axis an infinite number of Schwarzschild solutions separated by the same distance $L$ (see Figure \ref{UMKN}). Simple quotients give then the desired solutions with any number of holes (see Figure \ref{Myers-Korotkin-Nicolai}),
\footnote{As the Schwarzschild solutions are axisymmetric, they can be superposed along an axis by Weyl's method. When superposing a finite number of holes, angle deficiencies appear on the axis between them and the solution resulting is non-smooth. This deficiency can be understood from the fact that a repulsive force must keep the holes in equilibrium. However when infinitely many of them are superposed along the axis, say at a distance $L$ from each other, no extra force is needed and the angle deficiency is no longer present. This gives a `periodic' solution that can be quotient to obtain M/KN solutions with any number of holes.}. 

The details of such data sets $(\Sigma; g ,N)$ are mainly irrelevant to us but for the sake of completeness the main features of the data in the universal cover space can be summarised as follows (see \cite{PhysRevD.35.455},\cite{94aperiodic}).
\begin{figure}[h]
\centering
\includegraphics[width=2cm,height=7cm]{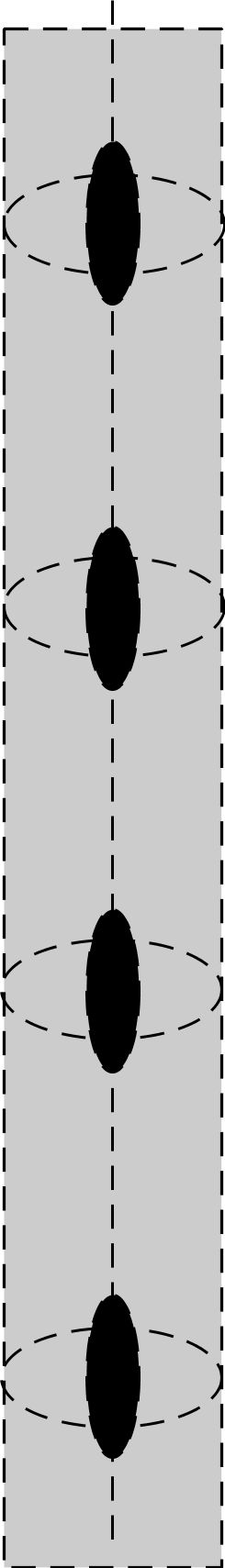}
\caption{A 'universal M/KN data'. The grey region is $\Sigma$ and is diffeomorphic to $\mathbb{R}^{3}$ minus an infinite number of (black) open balls. The solution is axisymmetric.}
\label{UMKN}
\end{figure} 
\begin{figure}[h]
\centering
\includegraphics[width=3.6cm,height=3.6cm]{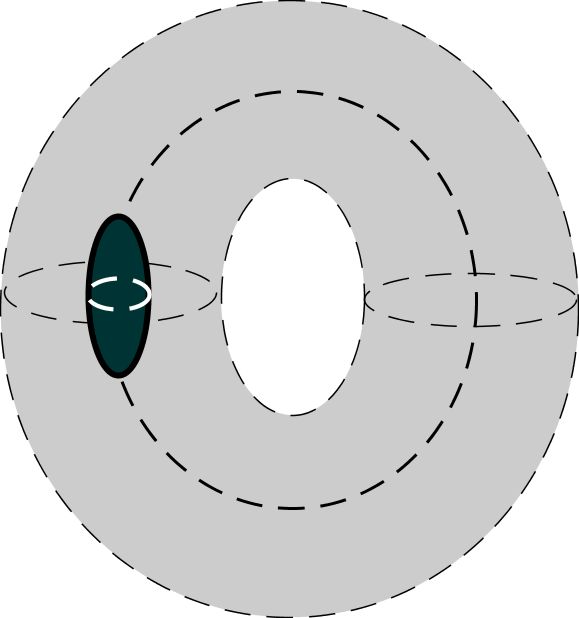}
\caption{A M/KN data with one hole. The grey region is $\Sigma$ and is diffeomorphic to a solid torus minus an open (black) ball. The solution is axisymmetric.}
\label{Myers-Korotkin-Nicolai}
\end{figure} 
The metric and the lapse have the form,
\be
g =e^{-\omega}(e^{2k}(dx^{2}+d\rho^{2})+\rho^{2}d\phi^{2}),\qquad N=e^{\omega/2},
\ee
where $(x,\rho)$ are Weyl coordinates ($\rho>0$ is the radial coordinate) and $\phi\in [0,2\pi)$ is the angular coordinate. The function $\omega$ is defined through the convergent series,
\be
\omega(x,\rho)=\omega_{0}(x,\rho)+\sum_{n=1}^{\infty}[\omega_{0}(x+nL,\rho)+\omega_{0}(x-nL,\rho)+\frac{4M}{nL}]
\ee
where $\omega_{0}(x,\rho)$ is,
\be
\omega_{0}=\ln \mathcal{E}_{0},\qquad \mathcal{E}_{0}(x,\rho)=\frac{\sqrt{(x-M)^{2}+\rho^{2}}+\sqrt{(x+M)^{2}+\rho^{2}}-2M}{\sqrt{(x-M)^{2}+\rho^{2}}+\sqrt{(x+M)^{2}+\rho^{2}}+2M}
\ee
and the function $k(x,\rho)$ is found by quadratures through the equations,
\be
k_{\rho}=\frac{\rho}{4}(\omega_{\rho}^{2}-\omega_{x}^{2}),\qquad k_{x}=\frac{\rho}{2}\omega_{x}\omega_{\rho},
\ee
The metric $g$, the lapse $N$ and the function $k$ are invariant under the translations $x\rightarrow x+L$, hence periodic. The asymptotic of the solution is Kasner and has the form,
\be
g \approx c_{1}\rho^{\alpha^{2}/2-\alpha}(dx^{2}+d\rho^{2})+c_{2}\rho^{2-\alpha}d\phi^{2},\qquad N \approx c_{3}\rho^{\alpha/2}
\ee
where $\alpha=4M/L$ and so $0< \alpha<2$. Note that the range of $\alpha$ excludes the Kasner spaces $A$, $B$ and $C$, and clearly those with $\gamma<0$ for which $N\rightarrow 0$ at infinity. Therefore the asymptotic of such static black hole data sets is Kasner but different from $A$, $B$, $C$ and those Kasner with $\gamma<0$. This fact was not incorporated in the definition of static black hole data set of M/KN type. It will be shown however in Part II that the Kasner asymptotic of a black hole of M/KN type is indeed different from $A$ and $C$, although we cannot exclude the possibility of being asymptotic to $B$. Of course by the maximum principle, the Kasner asymptotic cannot be one with $\gamma<0$ (if so then it must be $N=0$ on $\Sigma$ because $N=0$ on $\partial \Sigma$ and $N\rightarrow 0$ at infinity). We leave it as an open problem to prove that the only static black hole data sets asymptotic to a Boost are in fact the Boosts. 

The construction of Myers/Korotkin-Nicolai that we briefly described above can be generalised to allow a periodic superposition of Schwarzschild holes of different masses provided they are kept separated from each other at the right distances. The outcome, (after quotient), are static black hole data sets of M/KN type different from the ones just described. To embrace all the possibilities we define the {\it Myers/Korotkin-Nicolai data sets} as any axisymmetric static black hole data set obtained using Myers/Korotkin-Nicolai's method. It could be that such data sets are the only black hole static data sets of M/KN type. We leave this as an open problem (see Problem \ref{OPENPRO}). Note that the precise global geometry of the M/KN data sets won't be discussed in this article and won't play a role (for a discussion see \cite{KOROTKIN1994229}) as we will deal only with data sets of M/KN-type that are defined by abstracting the main geometric features of the M/KN data sets.  
\vs

The proof of the classification theorem is divided between Part I (this article) and Part II (its sequel), and each article has a clear and distinct motivation. The main purpose of this Part I, that we elaborate in detail in the subsections \ref{TSOTP} and \ref{CSTA} below, is to study global properties of  the lapse of static black hole data sets and its implications on the global geometry. Part II discusses, on one side, $\Sa$-symmetric static data sets and, on the other side, provides a detailed study of the asymptotic of static ends. Part I uses techniques in conformal geometry and comparison geometry \'a la Bakry \'Emery, whereas Part II uses techniques in standard comparison geometry and convergence and collapse of Riemannian manifolds. Several sections inside each part are new and have their own interest going behind the main purpose of these articles. To make it more clear, the proof's structure of the classification theorem is explained separately in subsection \ref{TSOTP} below.
\vs

These articles continue in a sense our work on static solutions in \cite{MR2919527}, \cite{Reiris2017}, \cite{MR3233266}, and \cite{MR3233267}. In particular, in \cite{MR3233266} and \cite{MR3233267} it was shown that asymptotic flatness in Schwarzschild's uniqueness theorem can be replaced (still preserving uniqueness) by the metric completeness of $(\Sigma;g)$ plus the condition that, outside a compact set, $\Sigma$ is diffeomorphic to $\mathbb{R}^{3}$ minus a ball. Without any topological hypothesis Schwarzschild's uniqueness of course fails. Thus \cite{MR3233266} and \cite{MR3233267} prove a classification theorem somehow in between Schwarzschild's uniqueness theorem and the classification Theorem \ref{TCTHM}. We do not know of any attempt in the literature pointing to a general classification theorem of static vacuum black holes, except, perhaps, a conjecture stated by Anderson in \cite{MR1452867} (Conjecture 6.2), that appears to be incomplete. Still, vacuum static solutions have been deeply investigated along the years, so to conclude this introduction let us recall former developments that are related technically or conceptually to this work. We point out connections when it is appropriate. 

Vacuum static solutions with symmetries have been investigated since early days by Schwarzschild \cite{Schwarzschild}, Levi-Civita \cite{Levi-Civita2011a}, \cite{Levi-Civita2011b}, Kasner \cite{MR1501305}, \cite{MR1501301}, Weyl \cite{ANDP:ANDP19173591804} and many others, and there is an advanced understanding of them (for a review see \cite{Jordan2009} and references therein). Understanding static solutions without any a priori symmetry is vast more complex. Schwarzschild's uniqueness theorem was perhaps the first general classification theorem although it demands global assumptions. Israel's seminal work required that the lapse $N$ can be chosen as a global coordinate and therefore required a connected spherical horizon. This technical global condition on the lapse was removed later by M\"uller, Robinson and Seifert in \cite{Hagen1973}, but keeping the hypothesis of a connected horizon. A simpler proof of their result was found later by Robinson by means of a remarkable integral formula \cite{RobinsonII} (the proof used also previous work by K\"unzle \cite{kunzle1971}). Altogether, this proved that the only asymptotically flat solution with a connected compact horizon is Schwarzschild. The analysis of the geometry of the level sets of the lapse function that play a fundamental role in \cite{Israel} and \cite{RobinsonII} and in other works on static solutions as well, will be also relevant here when we study Kasner asymptotic in subsection \ref{ENDSAK} of Part II. We will follow however different techniques. Other proofs of the Israel-Robinson theorem were given more recently by the author in \cite{MR2919527} and by Agostiniani and Mazzieri in \cite{2015arXiv150404563A}. In \cite{MR2919527} techniques in comparison geometry were used and in \cite{2015arXiv150404563A} monotonic quantities along the level sets of the lapse were introduced. Some of the arguments in this article will follow similar ideas though technically distinct. The uniqueness of Schwarzschild even when multiple horizons are in principle allowed was settled by Bunting/Masood-ul-Alam\cite{MR876598}, using the positive mass theorem. 

As mentioned earlier, there seems to be no previous attempt in the literature to classify static black holes data sets that are not asymptotically flat, except perhaps, the conjecture in \cite{MR1452867}. Connected to that work, Anderson performed a general study of static and stationary solutions in \cite{MR1809792} and \cite{MR1806984} respectively, obtaining a fundamental decay estimate for the curvature and the gradient of the logarithm of the lapse. Among other things, this establishes the first uniqueness theorem of the Minkowski solution (as a static solution) without assuming any type of asymptotic but just geodesic completeness. In \cite{Reiris2017} it was shown that Anderson's estimate holds too in any dimension by importing techniques in comparison geometry \'a la Backry-\'Emery that were introduced by J. Case in \cite{MR2741248} in a context somehow related to that of static solutions. These new techniques in comparison geometry a la Bakry-Emery play a fundamental role in this Part I as we will explain below. The global study of the lapse function that we do is based largely upon these ideas.

\subsection{The proof's structure of the classification theorem}\label{TSOTP}

The proof of the classification theorem is divided in three steps. Say $(\Sigma;g,N)$ is a static black hole data set. Then the proof requires proving that,
\begin{enumerate}
\item\label{Step-a} $\Sigma$ has only one end.
\item\label{Step-b} The horizons are {\it weakly outermost} (see Definition \ref{DWO}).
\item\label{Step-c} The end is asymptotically flat or asymptotically Kasner. 
\end{enumerate}
Once this is achieved the proof of the classification theorem is direct from known results. Indeed, assume \ref{Step-a}-\ref{Step-c} hold. If the data is asymptotically flat, it follows that it must be Schwarzschild by the uniqueness theorem. If the data is asymptotically Kasner, then it is deduced that it is either a Boost or is of M/KN type as follows. First, by step \ref{Step-b} the horizons are weakly outermost, and thus by Schoen-Galloway \cite{MR2238889} and Galloway \cite{4b6cb19bc94d4cf485e58571e3062f77}, either the data is a Boost or every horizon is a totally geodesic sphere. Let us assume the data is not a Boost. If the Kasner asymptotic is different from $B$, then, as any constant $x$-coordinate torus of any Kasner space different from $B$ has positive outwards mean curvature (from (\ref{Kasner}) the mean curvature is $\theta=(\alpha+\beta)/x$ with $\alpha+\beta>0$ if $(\alpha,\beta,\gamma)\neq (0,0,1)$), we can clearly find (using the fast decay into the Kasner space) a two-torus $T$ separating $\Sigma$ into two manifolds, $\Sigma_{1}$ and $\Sigma_{2}$, with $\overline{\Sigma}_{2}$ diffeomorphic to $[0,\infty)\times T$ and $\overline{\Sigma}_{1}$ a compact manifold whose boundary consist of $T$, of positive outwards mean curvature, and a finite number of spherical-horizons. It then follows from Galloway's \cite{MR1201655} that $\overline{\Sigma}_{1}$ is diffeomorphic to a solid three-torus minus a finite number of open three-balls\footnote{Galloway's results precisely asserts that if a static data set $(\Sigma;N,g)$ is such that $\Sigma$ is compact and $\partial \Sigma$ consists of a convex sphere plus $h$ horizons, then $\Sigma$ is diffeomorphic to a closed three-ball minus $h$-open three-balls. If instead of having a convex spherical component of $\partial \Sigma$ there is a convex toroidal component, the one can use Galloway's argumentation (without any substantial change) to show that $\Sigma$ is diffeomorphic to a closed solid three-torus minus a finite number of open three-balls.\label{FN}}. Hence, $\Sigma$ is diffeomorphic to an open three-torus minus a finite number of open three-balls. This type of topology and the Kasner asymptotic imply, by definition, that the data is of M/KN type. If the Kasner asymptotic is $B$, then there are no obvious embedded tori $T$ of positive outwards mean curvature, but it will be proved that there are in fact tori $T$ separating $\Sigma$ in $\Sigma_{1}$ and $\Sigma_{2}$ as before, but having area strictly less than the asymptotic area of the `transversal' tori over the end. This is enough to repeat Galloway's argument and conclude that indeed $\Sigma$ has the desired topology. 

The main motivation of this article (Part I) is to prove the steps \ref{Step-a}, \ref{Step-b}. We do that in section \ref{CTPL}. The proof of step \ref{Step-c} is done in section \ref{VWAE} of Part II and requires using section \ref{S1S} of Part II at some particular instances. Part II uses Part I as follows. Until subsection \ref{FTKASS}, it is either not used, or it is used only that if $\partial \Sigma$ is compact, then the metric $\hg$ is complete at infinity. This is shown in Theorem \ref{COMN2} of subsection \ref{CMMC} of Part I. Subsection \ref{POKA}, proving the Kasner asymptotic of static black hole ends with sub-cubic volume growth, uses the completeness of $\hg$ at infinity, and steps \ref{Step-a} and \ref{Step-b}.

We pass now to discuss the structure of the different sections of this article and the main points behind the various proofs. 

\subsection{The contents and the structure of this article (Part I)}\label{CSTA}

Section \ref{BACKGROUNDMATERIAL} contains the background material, including notation and terminology. Subsection \ref{SDSMT} contains the main definitions, as the one of static black hole data set or Kasner asymptotic, and states again the classification theorem as Theorem \ref{TCTHM}. Subsection \ref{SAP} defines annuli and partitions cuts, that are useful to study asymptotic properties.
 
The body of the article begins in section \ref{CTPL} where we discuss the properties of metrics $\overline{g}$ conformally related to a static metric $g$ by powers of the lapse, namely $\overline{g}=N^{-2\epsilon}g$ where $\epsilon$ is just a constant. The reasons why we study these conformal metrics are mainly the following. First, we will use the metrics $\overline{g}=N^{-2\epsilon}g$ with $\epsilon>0$ to accomplish step \ref{Step-a} (of subsection \ref{TSOTP}), that is, proving that static black hole data sets have only one end, Theorem \ref{KUNO}. Second, the proof of step \ref{Step-b}, that the horizons of black hole data sets are weakly outermost, requires proving in particular the metric completeness of $\hg=N^{2}g$ (i.e. $\epsilon=-1$) away from the boundary\footnote{Namely $(\Sigma_{\delta};\overline{g})$ is metrically complete where $\Sigma_{\delta}$ is $\Sigma$ with a collar around the boundary removed. Note that the metric $\hg$ is singular at $\partial \Sigma$, so to speak about completeness we need to remove a collar around $\partial \Sigma$.}. This is done in Proposition \ref{SOFOR} again using the metrics $\overline{g}=N^{-2\epsilon}g$ with $\epsilon$ in a certain range, Theorem \ref{COMN2}. Third, in section \ref{VWAE} of Part II, and because of its nice properties, we will use mainly $\hg$ to study the asymptotic of black hole data sets. Once more, it is necessary to grant that $\hg$ is complete at infinity.

The results of Section \ref{CTPL}, in particular the investigation of the conformal metrics $\overline{g}$, rely in casting the static equations in a framework \'a la Bakry-\'Emery, and then using some general properties of these spaces in a suitable way. Let us make this more precise. Using $f=-\ln N$ instead of the variable $N$, the static equations read,
\be\label{EESTR}
Ric^{1}_{f}=0,\qquad \Delta_{f}f=0
\ee
where for any $\alpha$  the $\alpha$-Bakry-\'Emery Ricci tensor $Ric^{\alpha}_{f}$ is,
\be
Ric^{\alpha}_{f}:=Ric+\nabla\nabla f-\alpha\nabla f \nabla f,\\
\ee
whereas the $f$-Laplacian $\Delta_{f} \phi$ of a function $\phi$ is,
\be
\Delta_{f}\phi:=\Delta\phi-\langle \nabla f,\nabla \phi\rangle
\ee
If instead of $g$ and $f=-\ln N$ we use the variables $\overline{g}=N^{-2\epsilon}g$ and $f=-(1+\epsilon)\ln N$, then the static equations are,
\be\label{EESTR2}
\overline{Ric}^{\alpha}_{f}=0,\qquad \overline{\Delta}_{f} f=0
\ee
where $\alpha=(1-2\epsilon-\epsilon^{2})/(1+\epsilon)^{2}$. The constant $\alpha$ is positive for $\epsilon$ in the range $-1-\sqrt{2}<\epsilon<-1+\sqrt{2}$. The equations (\ref{EESTR}) and (\ref{EESTR2}) share the same structure (only the $\alpha$ is different), and is the right way to present these equations to apply techniques \'a la Bakry-\'Emery. Spaces having $Ric^{\alpha}_{f}\geq 0$ with $\alpha>0$, have been studied in recent years under the context of comparison geometry (see \cite{MR2577473} and references therein). The crucial fact is that several well known results that hold for spaces with $Ric\geq 0$ hold too for spaces with $Ric^{\alpha}_{f}\geq 0$, $\alpha>0$, no matter the form of $f$. Thus, one can obtain geometric information without assuming any a priori knowledge on $N$. In turn, that information is then used to prove properties of $N$. 

The detailed contents of Section \ref{CTPL} are as follows. Subsection \ref{BESEC} explains the structure of the conformal equations, Proposition \ref{FELIZ}. Subsection \ref{CMACD} proves the crucial Lemma \ref{LEMMAME} (essentially due to Case) and from it it is obtained a generalised Anderson's decay estimate for the conformally related data, Lemma \ref{CDLEMMA}. These estimates are used in subsection \ref{CMMC} to show the metric completeness of the manifolds $(\Sigma; \overline{g}=N^{-2\epsilon}g)$ for $-1-\sqrt{2}<\epsilon<-1+\sqrt{2}$ (provided $\partial \Sigma$ is compact, $N|_{\Sigma}>0$ and $(\Sigma;g)$  is metrically complete), Theorem \ref{COMN2}. Until here the results are on general non-necessarily black hole data sets. Subsection \ref{APP} contains important applications to particular situations. First, in subsection \ref{CDPL} remarks are pointed out on the conformal data $(\Sigma; N^{-2\epsilon}g)$ of the data $(\Sigma;g)$ of a static black hole data set, Proposition \ref{PIV}. It is particularly stressed here that, when $\epsilon>0$ is small, the manifold $(\Sigma; N^{-2\epsilon}g)$ is still metrically complete, while the boundary becomes strictly convex (indeed the boundary of $\Sigma$ minus a small collar around $\partial \Sigma$). Then, in subsection \ref{STR} it is proved using the previous subsection and a generalised splitting theorem \'a la Backry-\'Emery that static black hole data sets have only one end, Proposition \ref{KUNO}. This accomplishes step \ref{Step-a}. In subsection \ref{HTT} it is proved using the completeness at infinity of $(\Sigma; N^{2}g=\hg)$ that either black hole data sets are boosts, or every horizon component is a sphere and weakly outermost. This accomplishes step \ref{Step-b}. 
Finally in subsection \ref{TAIS} it is proved that static isolated systems in GR are asymptotically flat. This application is independent of the rest of the article.

Section \ref{GLOBP} proves that the lapse on static black hole data sets is bounded away from zero at infinity. This result is not used per-se in the proof of the classification theorem, although it provides an alternative proof that the metric $\hg$ on static black hole ends is complete at infinity. The section \ref{GLOBP} relies on techniques introduced in the previous section \ref{CTPL} and in a sense can be seen as another application. It could be useful and interesting in other contexts as well, for instance to investigate higher dimensional black hole data sets.
\vspace{.3cm}

{\bf Acknowledgment} I would like to thank Herman Nicolai, Marc Mars, Marcus Kuhri, Gilbert Weinstein, Michael Anderson, Greg Galloway, Miguel Sanchez, Carla Cederbaum, Lorenzo Mazzieri, Virginia Agostiniani and John Hicks for discussions and support. Also my gratefulness to Carla Cederbaum for inviting me to the conference `Static Solutions of the Einstein Equations' (T\"ubingen, 2016), to Piotr Chrusciel for inviting me to the meeting in `Geometry and Relativity' (Vienna, 2017) and to Helmut Friedrich for the very kind invitation to visit the Albert Einstein Institute (Max Planck Institute, Potsdam, 2017). This work has been largely discussed at them. Finally my gratefulness to the support received from the Mathethamical Center at the Universidad de la Rep\'ublica, Uruguay.

\section{Background material}\label{BACKGROUNDMATERIAL}

\subsection{Static data sets and the main Theorem}\label{SDSMT}

Manifolds will always be smooth ($C^{\infty}$). Riemannian metrics as well as tensors will also be smooth.  If $g$ is a Riemannian metric on a manifold $\Sigma$, then 
\be
\dist_{g}(p,q)= \inf\big\{\length_{g}(\gamma_{pq}):\gamma_{pq}\ \text{smooth curve joining $p$ to $q$}\big\},
\ee
is a metric, where $L_{g}$ is the notation we will use for length (when it is clear from the context we will remove the sub-index $g$ and write simply $\dist$ and $L$). A Riemannian manifold $(\Sigma;g)$ is {\it metrically complete} if the metric space $(\Sigma; \dist)$ is complete.

\begin{Definition}[Static data set]\label{SDS} A static (vacuum) data set $(\Sigma;\sg,N)$ consists of an orientable three-manifold $\Sigma$, possibly with boundary, a Riemannian metric $\sg$, and a function $N$, such that,
\begin{enumerate}[labelindent=\parindent, leftmargin=*, label={\rm (\roman*)}, widest=a, align=left]
\item $N$ is strictly positive in the interior $\Sigma^{\circ}(=\Sigma\setminus \partial\Sigma)$ of $\Sigma$,
\item $(\sg,N)$ satisfy the vacuum static Einstein equations,
\be
\label{SEQ}  N Ric = \nabla\nabla N,\qquad \Delta N=0
\ee
\end{enumerate}
\end{Definition}
The definition is quite general. Observe in particular that $\Sigma$ and $\partial \Sigma$ could be compact or non-compact. To give an example, a data set $(\Sigma;\sg,N)$ can be simply the data inherited on any region of the Schwarzschild data. This flexibility in the definition of static data set allows us to write statements with great generality.

A horizon is defined as usual.
\begin{Definition}[Horizons] Let $(\Sigma;\sg,N)$ be a static vacuum data set. A horizon is a connected component of $\partial \Sigma$ where $N$ is identically zero.
\end{Definition}
Note that the Definition \ref{SDS} doesn't require $\partial \Sigma$ to be a horizon, though the data sets that we classify in this article are those with $\partial \Sigma$ consisting of a finite set of compact horizons ($\Sigma$ is a posteriori non compact). It is known that the norm $|\nabla N|$ is constant on any horizon and different from zero. It is called the surface gravity.

It is convenient to give a name to those spaces that are the final object of study of this article. Naturally we will call them {\it static black hole} data sets.

\begin{Definition}[Static black hole data sets]\label{DWO} A metrically complete static data set $(\Sigma;\sg,N)$ with $\partial \Sigma=\{N=0\}$ and $\partial \Sigma$ compact, is called a static black hole data set.
\end{Definition}  

The following definition, taken from \cite{4b6cb19bc94d4cf485e58571e3062f77}, recalls the notion of {\it weakly outermost} horizon.

\begin{Definition}[Galloway, \cite{4b6cb19bc94d4cf485e58571e3062f77}] Let $(\Sigma; \sg, N)$ be a static black hole data set. Then, a horizon $H$ is said weakly outermost if there are no embedded surfaces $S$ homologous to $H$ having negative outwards mean curvature. 
\end{Definition}

The following is the definition of Kasner asymptotic. It requires a decay into a background Kasner space faster than any inverse power of the distance. The definition follows the intuitive notion and it is written in the coordinates of the background Kasner, very much in the way AF is written in Schwarzschildian coordinates.

\begin{Definition}[Kasner asymptotic]\label{KADEF} A data set $(\Sigma; g,N)$ is asymptotic to a Kasner data $(\Sigma^{\mathbb{K}};g^{\mathbb{K}},N^{\mathbb{K}})$, $\Sigma_{\mathbb{K}}=(0,\infty)\times \T^{2}$, if for any $m\geq 1$ and $n\geq 0$ there is $C>0$, a bounded set $K\subset \Sigma$ and a diffeomorphism into the image $\phi:\Sigma\setminus K\rightarrow \Sigma_{\mathbb{K}}$ such that,
\begin{gather}
|\partial_{I}(\phi_{*}g)_{ij}-\partial_{I}g^{\mathbb{K}}_{ij}|\leq \frac{C}{x^{m}}\\
|\partial_{I}(\phi_{*}N)-\partial_{I}N^{\mathbb{K}}|\leq \frac{C}{x^{m}}
\end{gather}
for any multi-index $I=(i_{1},i_{2},i_{3})$ with $|I|=i_{1}+i_{2}+i_{3}\leq n$, where, if $x, y$ and $z$ are the coordinates in the Kasner space, then $\partial_{I}=\partial_{x}^{i_{1}}\partial_{y}^{i_{2}}\partial_{z}^{i_{3}}$. 
\end{Definition}
The next is the definition of data set of Myers/Korotkin-Nicolai type that we use.
\begin{Definition}[Black holes of M/KN type]\label{KNTDEF} A static-black hole data set $(\Sigma;\sg,N)$ is of Myers/Korotkin-Nicolai type if
\begin{enumerate}
\item $\partial \Sigma$ consist of $h\geq 1$ weakly outermost (topologically) spherical horizons,
\item $\Sigma$ is diffeomorphic to a solid three-torus minus  $h$-open three-balls,
\item the asymptotic is Kasner.
\end{enumerate}
\end{Definition}

It is worth to restate now the main classification theorem that we shall prove

\begin{Theorem}[The classification Theorem]\label{TCTHM2} Any static black hole data set is either,
\begin{enumerate}[labelindent=\parindent, leftmargin=*, label={\rm (\Roman*)}, widest=a, align=left]
\item\label{FTII} a Schwarzschild black hole, or,
\item\label{FTI} a Boost, or,
\item\label{FTIII} is of Myers/Korotkin-Nicolai type.
\end{enumerate}
\end{Theorem}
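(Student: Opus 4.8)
The plan is to reduce the classification to three structural statements about an arbitrary static black hole data set $(\Sigma;g,N)$ and then to appeal to known rigidity and topology results. The three statements are: (a) $\Sigma$ has exactly one end; (b) every horizon is weakly outermost; and (c) the unique end is asymptotically flat or asymptotically Kasner. Items (a) and (b) are the goal of this Part~I. For (a), I would pass to the conformal metric $\overline{g}=N^{-2\epsilon}g$ with $\epsilon>0$ small, for which metric completeness is preserved while the boundary (minus a thin collar) becomes strictly convex (Proposition~\ref{PIV}), and then apply a splitting theorem \'a la Bakry--\'Emery to conclude single-endedness (Proposition~\ref{KUNO}). For (b), I would use the completeness at infinity of $\hg=N^{2}g$ (Theorem~\ref{COMN2}) together with Proposition~\ref{SOFOR}. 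Item (c) is the analytic core of the sequel. Granting (a)--(c), the proof splits according to the asymptotic.

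If the end is asymptotically flat, then by (a) the data is a metrically complete, singly-ended, asymptotically flat static vacuum solution whose horizon $\partial\Sigma$ is compact. This is exactly the hypothesis of the uniqueness theorem of Israel~\cite{Israel}, Robinson~\cite{RobinsonII} and Bunting/Masood-ul-Alam~\cite{MR876598}, so the data must be a Schwarzschild black hole.

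If the end is asymptotically Kasner, I would first invoke the dichotomy coming from (b): since the horizons are weakly outermost, the rigidity results of Schoen--Galloway~\cite{MR2238889} and Galloway~\cite{4b6cb19bc94d4cf485e58571e3062f77} imply that either the data is a Boost, or every horizon is a totally geodesic topological sphere. Assume the latter. When the background Kasner triple $(\alpha,\beta,\gamma)$ is not $(0,0,1)$ one has $\alpha+\beta>0$, so the transversal tori $\{x=\mathrm{const}\}$ have positive outward mean curvature $\theta=(\alpha+\beta)/x$; by the fast decay of Definition~\ref{KADEF} this produces an embedded two-torus $T$, deep in the end and of positive outward mean curvature, that separates $\Sigma$ into a compact region $\Sigma_{1}$ (bounded by $T$ and the spherical horizons) and an end $\Sigma_{2}\cong[0,\infty)\times T$. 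Galloway's topology theorem~\cite{MR1201655}, in the toroidal-boundary version described in footnote~\ref{FN}, then forces $\Sigma_{1}$ to be a solid three-torus minus finitely many open balls; hence $\Sigma$ is an open three-torus minus finitely many balls, which together with the Kasner asymptotic means the data is of Myers/Korotkin-Nicolai type.

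The main obstacle is the Kasner type $B$ asymptotic, $(\alpha,\beta,\gamma)=(0,0,1)$, where $\alpha+\beta=0$ and the transversal tori are only asymptotically minimal, so no separating torus of positive outward mean curvature is available. Here the plan is to replace the mean-curvature hypothesis of Galloway's argument by an area hypothesis: I would construct a separating two-torus $T$ whose area is \emph{strictly} less than the limiting area of the transversal tori over the end, which is precisely what is needed to re-run the argument of~\cite{MR1201655} and obtain the same topological conclusion. Establishing the existence of such a torus and the strictness of the area comparison is the delicate point of this case; the remaining steps are a direct assembly of (a)--(c) and the cited results.
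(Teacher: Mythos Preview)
Your proposal is correct and follows essentially the same route as the paper: the reduction to steps (a)--(c), the specific tools for each step (Propositions~\ref{PIV}, \ref{KUNO}, \ref{SOFOR} and Theorem~\ref{COMN2} for Part~I, the sequel for (c)), the split into the AF and Kasner cases, the use of the Schoen--Galloway and Galloway rigidity results, the mean-curvature torus argument for Kasner asymptotics different from $B$, and the area-comparison substitute in the $B$ case are all exactly as in the paper's own outline in subsection~\ref{TSOTP}.
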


As an outcome of the proof (see Part II) it will be shown that the Kasner asymptotic of the static black holes of type \ref{FTIII}, that is of M/KN type, is different from the Kasner $A$ and $C$ (of course, as explained earlier, it can't be asymptotic to a Kasner with $\gamma<0$ by the maximum principle). We leave it as an open problem to prove that the only static black hole data sets asymptotic to $B$ are the Boosts. 

\begin{Problem} Prove that the Boosts are the only static black hole data sets asymptotic to a Boost.
\end{Problem}

It is also not known if the only static vacuum black holes of type \ref{FTIII} are the Myers/Korotkin-Nicolai static black holes. We state this as an open problem.

\begin{Problem}\label{OPENPRO} Prove (or disprove) that the only static vacuum black holes of type \ref{FTIII} are the Myers/Korotkin-Nicolai black holes.
\end{Problem}

On a large part of the article we will use the variables $(\hg,U)$ with $\hg=N^{2}g$ and $U=\ln N$, instead of the natural variables $(g,N)$. The data $(\Sigma;\hg,U)$ is the {\it harmonic presentation} of the data $(\Sigma;g,N)$. The static equations in these variables are,
\begin{gather}
Ric_{\hg}=2\nabla U\nabla U,\quad \Delta_{\hg} U=0
\end{gather}
and therefore the map $U:(\Sigma;\hg)\rightarrow \mathbb{R}$ is harmonic, (hence the name).

\subsection{Metric balls, annuli and partitions}\label{SAP}

\begin{enumerate}[leftmargin=*, label={\rm \arabic*}, widest=a, align=left]

\item {\sc Metric balls}. If $C$ is a set and $p$ a point then $\dist_{g}(C,p)=\inf\{\dist_{g}(q,p):q\in C\}$. Very often we take $C=\partial \Sigma$. If $C$ is a set and $r>0$, then, define the open ball of `center' $C$ and radius $r$ as,
\be
B_{g}(C,r)=\{p\in \Sigma:\dist_{g}(C,p)<r\}
\ee

\item {\sc Annuli}. Let $(\Sigma;g)$ be a metrically complete and non-compact Riemannian manifold with non-empty boundary $\partial \Sigma$. 

- Let $0<a<b$, then we define the open annulus $\mathcal{A}_{g}(a,b)$ as
\be
\mathcal{A}_{g}(a,b)=\{p\in \Sigma: a<\dist_{g}(p,\partial \Sigma)<b\}
\ee
We write just $\mathcal{A}(a,b)$ when the Riemannian metric $g$ is clear from the context. 

- If $C$ is a connected set included in $\mathcal{A}_{g}(a,b)$, then we write,
\be
\mathcal{A}^{c}_{g}(C;a,b)
\ee
to denote the connected component of $\mathcal{A}_{g}(a,b)$ containing $C$. The set $C$ could be for instance a point $p$ in which case we write $\mathcal{A}^{c}_{g}(p;a,b)$.

\item {\sc Partitions cuts and end cuts}. To understand the asymptotic geometry of data sets, we will study the geometry of scaled annuli. Sometimes however it will be more convenient and transparent to use certain sub-manifolds instead of annuli. For this purpose we define partitions, partition cuts, end cuts, and simple end cuts.

{\it Assumption}: Below we assume that $(\Sigma;g)$ is a metrically complete and non-compact Riemannian manifold with non-empty and compact boundary $\partial \Sigma$. 

\begin{Definition}[Partitions] A set of connected compact submanifolds of $\Sigma$ with non-empty boundary 
\be
\{\mathcal{P}^{m}_{j,j+1},\  j=j_{0},j_{0}+1,\ldots;\ m=1,2,\ldots,m_{j}\geq 1\}, 
\ee
($j_{0}\geq 0$), is a {\it partition} if,
\begin{enumerate}
\item $\mathcal{P}^{m}_{j,j+1}\subset \mathcal{A}(2^{1+2j},2^{4+2j})$ for every $j$ and $m$.
\item $\partial \mathcal{P}^{m}_{j,j+1}\subset (\mathcal{A}(2^{1+2j},2^{2+2j})\cup \mathcal{A}(2^{3+2j},2^{4+2j}))$ for every $j$ and $m$.
\item The union $\cup_{j,m}\mathcal{P}^{m}_{j,j+1}$ covers $\Sigma\setminus B(\partial \Sigma,2^{2+2j_{0}})$.
\end{enumerate}
\end{Definition}

\begin{figure}[h]
\centering
\includegraphics[width=7cm, height=9cm]{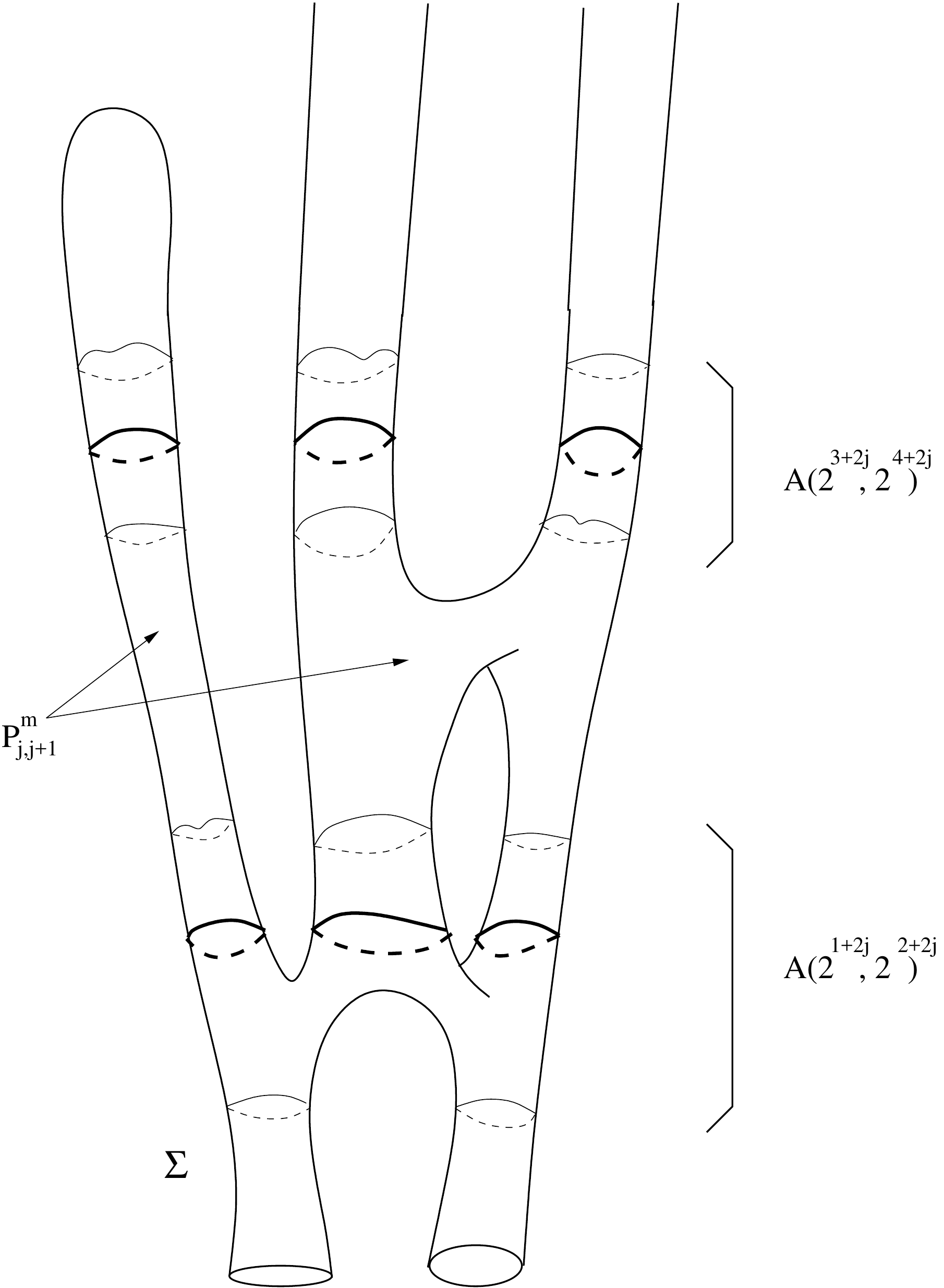}
\caption{The figure shows the annuli $\mathcal{A}(2^{1+2j},2^{2+2j})$, $\mathcal{A}(2^{3+2j},2^{4+2j})$ and the two components, for $m=1,2$ of $\mathcal{P}^{m}_{j,j+1}$.}
\label{PARTITIONF}
\end{figure}

Figure \ref{PARTITIONF} shows schematically a partition. The existence of partitions is done (succinctly) as follows. 
Let $j_{0}\geq 0$ and let $j\geq j_{0}$. Let $f:\Sigma\rightarrow [0,\infty)$ be a (any) smooth function such that $f\equiv 1$ on $\{p:\dist(p,\partial \Sigma)\leq 2^{1+2j}\}$ and $f\equiv 0$ on $\{p: \dist(p,\partial \Sigma)\geq 2^{2+2j}\}$, 
\footnote{Consider a partition of unity $\{\chi_{i}\}$ subordinate to a cover $\{\mathcal{B}_{i}\}$ where the neighbourhoods $\mathcal{B}_{i}$ are small enough that if $\mathcal{B}_{i}\cap \{p:\dist(p,\partial \Sigma)\leq 2^{1+2j}\}\neq \emptyset$ then $\mathcal{B}_{i}\cap \{p: \dist(p,\partial \Sigma)\geq 2^{2+2j}\}=\emptyset$. Then define $f=\sum_{i\in I}\chi_{i}$, where $i\in I$ iff $\mathcal{B}_{i}\cap \{p:\dist(p,\partial \Sigma_{i})\leq 2^{1+2j}\}\neq \emptyset$.}. 
Let $x$ be any regular value of $f$ in $(0,1)$. For each $j$ let $\mathcal{Q}_{j}$ be the compact manifold obtained as the union of the closure of the connected components of $\Sigma\setminus \{f=x\}$ containing at least a component of $\partial \Sigma$. Then the manifolds $\mathcal{P}^{m}_{j,j+1}$, $m=1,\ldots,m_{j}$, are defined as the connected components of $\mathcal{Q}_{j+1}\setminus \mathcal{Q}_{j}^{\circ}$. 

We let $\partial^{-}\mathcal{P}^{m}_{j,j+1}$ be the union of the connected components of $\partial \mathcal{P}^{m}_{j,j+1}$ contained in $\mathcal{A}(2^{1+2j},2^{2+2j})$. Similarly, we let $\partial^{+}\mathcal{P}^{m}_{j,j+1}$ be the union of the connected components of $\partial \mathcal{P}^{m}_{j,j+1}$ contained in $\mathcal{A}(2^{3+2j},2^{4+2j})$. 
\begin{Definition}[Partition cuts]
If $\mathcal{P}$ is a partition, then for each $j$ we let 
\be
\{\mathcal{S}_{jk},k=1,\ldots,k_{j}\} 
\ee
be the set of connected components of the manifolds $\partial^{-}\mathcal{P}^{m}_{j,j+1}$ for $m=1,\ldots,m_{j}$. The set of surfaces $\{\mathcal{S}_{jk},j\geq j_{0},\ldots,k=1,\ldots,k_{j}\}$ is called a {\it partition cut}. 
\end{Definition}

\begin{Definition}[End cuts] 
Say $\Sigma$ has only one end. Then, a subset, $\{\mathcal{S}_{jk_{l}}, l=1,\ldots,l_{j}\}$ of a partition cut
 $\{\mathcal{S}_{jk},k=1,\ldots,k_{j}\}$ is called an end cut if when we remove all the surfaces $\mathcal{S}_{jk_{l}}$, $l=1,\ldots,l_{j}$, from $\Sigma$, then every connected component of $\partial \Sigma$ belongs to a bounded component of the resulting manifold, whereas if we remove all but one of the surfaces $\mathcal{S}_{jk_{l}}$, then at least one connected component of $\partial \Sigma$ belongs to an unbounded component of the resulting manifold.
\end{Definition}

If $\Sigma$ has only one end, then one can always remove if necessary manifolds from a partition cut $\{\mathcal{S}_{jk},k=1,\ldots,k_{j}\}$ to obtain an end cut.
 
\begin{Definition}[Simple end cuts]
Say $\Sigma$ has only one end. If an end cut $\{\mathcal{S}_{jk_{l}},j\geq j_{0},l=1,\ldots,l_{j}\}$ has $l_{j}=1$ for each $j\geq j_{0}$ then we say that the end is a {\it simple end cut} and write simply $\{\mathcal{S}_{j}\}$.
\end{Definition}

If $\{\mathcal{S}_{j}\}$ is a simple end cut and $j_{0}\leq j<j'$ we let $\mathcal{U}_{j,j'}$ be the compact manifold enclosed by $\mathcal{S}_{j}$ and $\mathcal{S}_{j'}$. This notation will be used very often.

\end{enumerate}

\subsection{A Harnak-type of estimate for the Lapse}\label{TBCPHTE1}

Let $(\Sigma;\sg,N)$ be a metrically complete static data set with $\partial \Sigma$ compact. In \cite{MR1809792}, Anderson observed that, as the four-metric $N^{2}dt^{2}+g$ is Ricci-flat, then Liu's ball-covering property holds \cite{MR1216638} (the compactness of $\partial \Sigma$ is necessary here because Liu's theorem is for manifolds with non-negative Ricci curvature outside a compact set). Namely, for any $b>a>\delta>0$ there is $n$ and $r_{0}$ such that for any $r\geq r_{0}$ the annulus $\mathcal{A}(ra,rb)$ can be covered by at most $n$ balls of $g$-radius $r\delta$ centred in the same annulus. Hence any two points $p$ and $q$ in a connected component of $\mathcal{A}(ra,rb)$ can be joined through a chain, say $\alpha_{pq}$, of at most $n+2$ radial geodesic segments of the balls of radius $\delta$ covering $\mathcal{A}(ra,rb)$. On the other hand Anderson's estimate (see subsection \ref{CMACD}) implies that the $g$-gradient $|\nabla \ln N|_{r}$ is bounded by $C/r$. 
Integrating $|\nabla \ln N|$ along the curves $\alpha_{pq}$ and using Anderson's bound we arrive at a relevant Harnak estimate controlling uniformly the quotients $N(p)/N(q)$. The estimate is due to Anderson and is summarised in the next Proposition (for further details see, \cite{0264-9381-32-19-195001}).
 
\begin{Proposition}{\rm (Anderson, \cite{MR1809792})}\label{MAXMINU11} Let $(\Sigma;g,N)$ be a metrically complete static data set with $\partial \Sigma$ compact and let $0<a<b$. Then, there is $r_{0}$ and $\eta>0$, such that for any $r>r_{0}$ and for any set $Z$ included in a connected component of $\mathcal{A}(a,b)$ we have,
\be\label{EQHARN1}
\max\{N(p):p\in Z\}\leq \eta \min\{N(p):p\in Z\}
\ee
\end{Proposition}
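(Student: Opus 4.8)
The plan is to combine two facts that are already at our disposal---Liu's ball-covering property for the Ricci-flat spacetime metric \cite{MR1216638} and Anderson's scale-invariant gradient estimate for the lapse (subsection \ref{CMACD})---and to tie them together by a chaining argument along radial geodesics. The guiding observation is that the number of balls needed to cover a dyadically scaled annulus stays \emph{bounded} as the scale grows, while $|\nabla \ln N|$ decays exactly like the inverse scale; the two effects cancel, leaving a bound on $N(p)/N(q)$ that is independent of the scale. (I read the statement in the scaled form suggested by the preceding discussion, i.e. with $Z$ contained in a connected component of $\mathcal{A}(ra,rb)$, which is what makes the quantifier ``for any $r>r_{0}$'' meaningful.)

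First I would fix $\delta$ with $0<\delta<a$ and invoke the ball-covering property: since the four-metric $N^{2}dt^{2}+g$ is Ricci-flat and $\partial\Sigma$ is compact, Liu's theorem provides constants $n$ and $r_{0}$ such that, for every $r\geq r_{0}$, the annulus $\mathcal{A}(ra,rb)$ is covered by at most $n$ geodesic balls of $g$-radius $r\delta$ with centers in the same annulus. Given $p,q$ in a single connected component of $\mathcal{A}(ra,rb)$, I would then connect them through a chain $\alpha_{pq}$ of at most $n+2$ radial geodesic segments of these covering balls (two endpoint segments plus one per ball traversed), each of $g$-length at most $2r\delta$, so that $\length_{g}(\alpha_{pq})\leq 2(n+2)r\delta$.

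Second I would feed in Anderson's estimate, which gives $|\nabla \ln N|_{g}\leq C/r$ at $g$-distance comparable to $r$ from $\partial\Sigma$, hence uniformly on $\mathcal{A}(ra,rb)$. Integrating along the chain,
\[
|\ln N(p)-\ln N(q)|\leq \int_{\alpha_{pq}}|\nabla \ln N|\,ds\leq \frac{C}{r}\,\big(2(n+2)r\delta\big)=2C(n+2)\delta,
\]
a quantity independent of $r$. Exponentiating and setting $\eta=e^{2C(n+2)\delta}$ gives $N(p)/N(q)\leq \eta$ for every such pair $p,q$; taking the supremum and infimum over any $Z$ contained in one connected component then yields \eqref{EQHARN1}.

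The genuinely substantive inputs here are the two cited facts, not the chaining step. The first is that the covering number $n$ can be chosen \emph{uniformly in} $r$---this is the heart of Liu's result and rests on the Bishop--Gromov-type volume comparison available because the four-dimensional metric is Ricci-flat and $\partial\Sigma$ is compact. The second is that Anderson's estimate delivers the sharp $1/r$ decay uniformly over the end, which is exactly the rate that matches the linear growth of chain length in $r$. Given both, the only care needed is bookkeeping: checking that the radial segments can be kept inside the annulus and that their total number is controlled by the covering number plus the two endpoint segments, so that $\int_{\alpha_{pq}}|\nabla \ln N|$ telescopes to an $r$-independent constant. I expect the uniformity of the covering number to be the conceptually decisive point, though in this paper it is imported wholesale from \cite{MR1216638}.
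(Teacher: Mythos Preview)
Your proposal is correct and follows essentially the same route as the paper's own argument (which appears as the paragraph preceding the Proposition rather than a formal proof): Liu's ball-covering property for the Ricci-flat four-metric gives a uniform bound on the number of covering balls, radial segments of these balls form a chain $\alpha_{pq}$ of controlled length, and integrating Anderson's estimate $|\nabla\ln N|\leq C/r$ along the chain yields the scale-independent bound. You also correctly read the statement with $\mathcal{A}(ra,rb)$ in place of $\mathcal{A}(a,b)$, which is the intended meaning (the printed statement has a typo, as the quantifier over $r$ is otherwise vacuous).
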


\section{Conformal transformations by powers of the lapse}\label{CTPL}

In this section we study conformal transformations of static metrics by powers of the lapse from the point of view \'a la Backry-\'Emery. The contents are the following.

Subsection \ref{BESEC} explains the structure of the conformal equations, Proposition \ref{FELIZ}. Subsection \ref{CMACD} proves Lemma \ref{LEMMAME} and from it its is obtained a generalised Anderson's decay estimate for the conformally related data, Lemma \ref{CDLEMMA}. These estimates are used in subsection \ref{CMMC} to show the metric completeness of the manifolds $(\Sigma; \overline{g}=N^{-2\epsilon}g)$ for $-1-\sqrt{2}<\epsilon<-1+\sqrt{2}$ (provided $\partial \Sigma$ is compact and $N|_{\Sigma}>0$ and $(\Sigma;g)$ is metrically complete), Theorem \ref{COMN2}. Subsection \ref{APP} contains important applications. First, in subsection \ref{CDPL} a few important remarks are pointed out on the conformal data $(\Sigma; N^{-2\epsilon}g)$ of a static data $(\Sigma;g)$, Proposition \ref{PIV}. It is particularly stressed here that when $\epsilon>0$ is small, the manifold $(\Sigma; N^{-2\epsilon}g)$ is still metrically complete, while the boundary becomes strictly convex (indeed the boundary of $\Sigma$ minus a small collar around $\partial \Sigma$). In subsection \ref{STR} it is proved using the previous subsection and a generalised splitting theorem \'a la Backry-\'Emery that static black hole data sets have only one end, Proposition \ref{KUNO}. In subsection \ref{HTT} it is proved using the completeness at infinity of $(\Sigma; N^{2}g=\hg)$ that either black holes data sets are boosts, or every horizon component is a sphere and is weakly outermost. Finally in subsection \ref{TAIS} it is proved that static isolated systems in GR are asymptotically flat. This application is independent of the rest of the article.

\subsection{Conformal metrics, the Bakry-\'Emery Ricci tensor and the static equations}\label{BESEC}
Given a Riemannian metric $g$, function $f$ and constant $\alpha$, the $\alpha$-Bakry-\'Emery Ricci tensor $Ric^{\alpha}_{f}$ is defined as (see \cite{MR2577473}; note that \cite{MR2577473} uses the notation $1/N$ instead of $\alpha$),
\be
Ric^{\alpha}_{f}:=Ric+\nabla\nabla f-\alpha\nabla f \nabla f,\\
\ee
where the tensors $Ric$ and $\nabla$ on the right hand side are with respect to $g$. The $f$-Laplacian $\Delta_{f}$ acting on a function $\phi$ is defined as
\be
\Delta_{f}\phi:=\Delta\phi-\langle \nabla f,\nabla \phi\rangle
\ee
where again $\Delta$ on the right hand side are with respect to $g$ and $\langle\ ,\ \rangle=g(\ ,\ )$. Now observe that letting $f:=-\ln N$, the static Einstein equations (\ref{SEQ}) read
\be\label{RICCILN}
Ric = -\nabla\nabla f +\nabla f\nabla f,\qquad \Delta f - \langle \nabla f,\nabla f\rangle =0
\ee
In the notation above, this is nothing else than to say that
\be
Ric^{\alpha}_{f}=0,\qquad \Delta_{f} f=0
\ee
with $\alpha=1$ and $f=-\ln N$. It is an important fact that the structure of these equations is preserved along a one parameter family of conformal transformations. The following calculation explains this fact.
\begin{Proposition}\label{FELIZ} Let $(\sM; g,N)$ be a static data set. Fixed $\epsilon$ define  
\be
\overline{\sg}=N^{-2\epsilon}g.
\ee
Then, 
\be\label{OOO}
\overline{Ric}^{\alpha}_{f}=0,\qquad \overline{\Delta}_{f} f=0
\ee
where $\alpha=(1-2\epsilon-\epsilon^{2})/(1+\epsilon)^{2}$ and $f=-(1+\epsilon)\ln N$.
\end{Proposition}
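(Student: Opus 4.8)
The statement is a pure computation: we start from the known fact that the static equations are equivalent to $Ric^{\alpha}_f = 0$, $\Delta_f f = 0$ with $\alpha = 1$ and $f = -\ln N$, and we must show that under the conformal change $\overline{g} = N^{-2\epsilon}g$ with the rescaled potential $f = -(1+\epsilon)\ln N$ one again gets $\overline{Ric}^{\alpha}_f = 0$ and $\overline{\Delta}_f f = 0$, now with $\alpha = (1-2\epsilon-\epsilon^2)/(1+\epsilon)^2$. The natural approach is to apply the standard conformal transformation formulas for the Ricci tensor and the Laplacian in dimension $3$ to the factor $N^{-2\epsilon}$, substitute, and verify the identity algebraically.

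First I would write $\overline{g} = e^{2\varphi}g$ with $\varphi = -\epsilon \ln N = \epsilon u$ where $u = -\ln N$, so that the original potential is $u$ itself and the new potential is $f = (1+\epsilon)u$. I would then record the classical conformal identities in dimension $n=3$: for the Ricci tensor,
\[
\overline{Ric} = Ric - (n-2)\big(\nabla\nabla \varphi - \nabla\varphi\,\nabla\varphi\big) - \big(\Delta\varphi + (n-2)|\nabla\varphi|^2\big)g,
\]
and for the Hessian and Laplacian of a function $f$,
\[
\overline{\nabla}\overline{\nabla} f = \nabla\nabla f - 2\,\nabla\varphi\,\nabla f + \langle\nabla\varphi,\nabla f\rangle\, g,
\qquad
\overline{\Delta} f = e^{-2\varphi}\big(\Delta f + (n-2)\langle\nabla\varphi,\nabla f\rangle\big).
\]
All barred quantities on the right are unbarred $g$-quantities. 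The $f$-Laplacian term $\overline{\langle}\nabla f,\nabla f\overline{\rangle} = e^{-2\varphi}\langle\nabla f,\nabla f\rangle$ must also be expanded, since $\overline{\Delta}_f f = \overline{\Delta} f - \overline{\langle}\nabla f,\nabla f\overline{\rangle}$.

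Next I would assemble $\overline{Ric}^{\alpha}_f = \overline{Ric} + \overline{\nabla}\overline{\nabla} f - \alpha\,\nabla f\,\nabla f$. Since $\varphi = \epsilon u$ and $f = (1+\epsilon)u$, every term on the right is a multiple of either $Ric$, $\nabla\nabla u$, $\nabla u\,\nabla u$, or $g$ times a scalar built from $\Delta u$ and $|\nabla u|^2$. Using the two scalar static relations $Ric = -\nabla\nabla u + \nabla u\,\nabla u$ (the tensorial equation, recalling $u=-\ln N$) and $\Delta u = |\nabla u|^2$ (the harmonicity $\Delta_u u = 0$), I would substitute to eliminate $Ric$ and $\Delta u$. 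The trace-free Hessian part $\nabla\nabla u$ and the rank-one part $\nabla u\,\nabla u$ should each cancel identically once $\alpha$ is taken to be $(1-2\epsilon-\epsilon^2)/(1+\epsilon)^2$; the pure-$g$ part should vanish after using $\Delta u = |\nabla u|^2$. I expect the coefficient matching on $\nabla u\,\nabla u$ to be exactly the equation that forces this value of $\alpha$, so tracking that coefficient is the crux. The second equation $\overline{\Delta}_f f = 0$ is lighter: it reduces, after inserting $\varphi = \epsilon u$, $f=(1+\epsilon)u$, $n=3$, to a multiple of $\Delta u - |\nabla u|^2$, which is zero.

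The main obstacle is purely bookkeeping: keeping the dimension-dependent factors $(n-2)=1$ straight and making sure the conformal-transformation sign conventions are consistent between the Ricci formula, the Hessian formula, and the definition of $Ric^{\alpha}_f$. The only genuinely substantive point is verifying that the coefficient of $\nabla u\,\nabla u$ vanishes precisely for $\alpha = (1-2\epsilon-\epsilon^2)/(1+\epsilon)^2$; everything else follows by substituting the two static identities. I would therefore organize the computation so that after the conformal formulas are applied, I collect coefficients of the three tensor types separately and read off the constraint on $\alpha$ directly, rather than carrying the full expression through at once.
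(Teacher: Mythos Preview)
Your proposal is correct and follows essentially the same route as the paper: apply the standard conformal change formulas for the Ricci tensor, the Hessian, and the Laplacian with conformal factor $\varphi=-\epsilon\ln N$, substitute the static identities $Ric=-\nabla\nabla u+\nabla u\,\nabla u$ and $\Delta u=|\nabla u|^{2}$ (with $u=-\ln N$), and read off the value of $\alpha$ from the coefficient of $\nabla u\,\nabla u$. The paper organises the algebra slightly differently---it rewrites $\overline{Ric}$ directly in terms of $\overline{\nabla}\nabla\ln N$ and $\nabla\ln N\,\nabla\ln N$ before identifying the Bakry--\'Emery tensor---but this is purely a matter of bookkeeping, not of method.
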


We used the notation $\overline{Ric}$ for $Ric_{\overline{g}}$ and $\overline{\Delta}$ for $\Delta_{\overline{g}}$.

Note that when $\epsilon=-1$, we obtain $\alpha=+\infty$, $f=0$ and $\overline{Ric}^{\alpha}_{f}=\overline{Ric}-2\nabla \ln N \nabla \ln N$. In particular we recover $\overline{Ric}=2\nabla \ln N\nabla \ln N$.

\begin{proof} We prove first $\overline{\Delta}_{f} f=0$. Recall from standard formulae that if $\overline{g}=e^{2\psi}g$ then for every $\phi$ we have
\be\label{EQU}
e^{-2\psi}\Delta \phi = \overline{\Delta} \phi -\langle \nabla \phi,\nabla \psi\rangle_{\overline{g}}
\ee
Making $\phi=\ln N$ and $e^{\psi}=N^{-\epsilon}$, the left hand side of (\ref{EQU}) is equal to $-|\nabla \ln N|^{2}_{\overline{g}}$ because $\Delta \ln N=-|\nabla \ln N|^{2}_{g}$. Thus (\ref{EQU}) is $\overline{\Delta} \ln N - \langle \nabla \ln N, - (1 +\epsilon)\nabla \ln N\rangle_{\overline{g}}=0$ as wished.

Let us prove now $\overline{Ric}^{\alpha}_{f}=0$. Recall first that if $\overline{g}=e^{2\psi}g$ then 
\be
\overline{Ric}=Ric-(\nabla\nabla \psi-\nabla\psi\nabla\psi)-(\Delta\psi+|\nabla \psi|^{2})g
\ee
Choosing $\psi=-\epsilon\ln N$ and replacing $Ric$ by (\ref{RICCILN}) then  gives
\be\label{FAFA}
\overline{Ric}=(1+\epsilon)\nabla\nabla \ln N+(1+\epsilon^{2})\nabla\ln N\nabla \ln N-(\epsilon+\epsilon^{2})|\nabla \ln N|^{2}g
\ee
Use now the usual general formula
\be
\overline{\nabla}_{i}V_{j}=\nabla_{i}V_{j}-\big[V_{j}\nabla_{i}\psi+V_{i}\nabla_{j}\psi-(V^{k}\nabla_{k}\psi)g_{ij}\big]
\ee
with $V^{j}=\nabla_{j} \ln N$ and with $\psi=-\epsilon\ln N$, to obtain
\be\label{FAF}
\nabla\nabla \ln N=\overline{\nabla}\nabla \ln N -\epsilon \big[2\nabla\ln N\nabla \ln N - |\nabla \ln N|^{2} g\big]
\ee
Plugging (\ref{FAF}) in (\ref{FAFA}) gives
\be
\overline{Ric}=(1+\epsilon)\overline{\nabla}\nabla \ln N+(1-2\epsilon-\epsilon^{2})\nabla \ln N\nabla \ln N
\ee
which is $\overline{Ric}^{\alpha}_{f}=0$ as claimed. 
\end{proof}

\subsection{Conformal metrics and Anderson's curvature decay}\label{CMACD}

In \cite{MR1806984} Anderson proved the following fundamental quadratic curvature decay for static data sets. 

\begin{Lemma}[Anderson, \cite{MR1806984}]\label{LACD1}
There is a constant $\eta>0$ such that for any metrically complete static data set $(\Sigma;g,N)$ we have,
\be\label{CURVDEC}
|Ric|(p)\leq \frac{\eta}{\dist^{2}(p,\partial \Sigma)},\qquad |\nabla \ln N|^{2}(p)\leq \frac{\eta}{\dist^{2}(p,\partial \Sigma)},
\ee
for any $p\in \Sigma^{\circ}$. 
\end{Lemma}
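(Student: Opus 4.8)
The plan is to prove both inequalities at once by a scaling/blow-up contradiction argument, using as the central structural input that the four-dimensional spacetime $\mathbf{g}=N^{2}dt^{2}+g$ is Ricci-flat. First I would record that the whole statement is scale invariant: under a constant rescaling $g\mapsto\lambda^{2}g$ with $N$ left unchanged, the static equations \eqref{SEQ} are preserved, distances scale by $\lambda$, and $|Ric|$ and $|\nabla\ln N|^{2}$ scale by $\lambda^{-2}$, so the combined quantity $Q(p):=\dist^{2}(p,\partial\Sigma)\,(|Ric|(p)+|\nabla\ln N|^{2}(p))$ is invariant. Hence it suffices to produce a single constant $\eta$ bounding $Q$ over all interior points of all metrically complete static data sets. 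Suppose no such $\eta$ existed; then there would be data sets $(\Sigma_{i};g_{i},N_{i})$ and interior points $p_{i}$ with $Q_{i}(p_{i})\to\infty$.

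Next I would run a point-selection argument on the ball of $g_{i}$-radius $\tfrac12\dist(p_{i},\partial\Sigma_{i})$ about $p_{i}$ (in the spirit of the regularity arguments of Anderson and Schoen) to replace $p_{i}$ by points $q_{i}$ at which the local curvature scale is almost maximal. After rescaling $g_{i}$ by $\mu_{i}:=|Ric_{i}|(q_{i})+|\nabla\ln N_{i}|^{2}(q_{i})$ one arranges that the rescaled quantity equals $1$ at $q_{i}$, stays uniformly bounded on metric balls of radius $R_{i}\to\infty$ about $q_{i}$, and that the rescaled distance of $q_{i}$ to $\partial\Sigma_{i}$ tends to $\infty$, so that the horizon escapes to infinity. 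Forming the rescaled spacetimes, which remain Ricci-flat, I would invoke the $\varepsilon$-regularity theorem for Ricci-flat four-manifolds together with these uniform local curvature bounds to obtain uniform higher-order control, and extract a pointed Cheeger--Gromov limit.

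The limit $(\mathbf{M}_{\infty},\mathbf{g}_{\infty},q_{\infty})$ is then a complete, boundaryless, Ricci-flat four-manifold carrying a limiting static (hypersurface-orthogonal Killing) structure, with curvature normalised to be nonzero at $q_{\infty}$. The goal is a contradiction: I would argue that this limit must be flat, so that $|Ric|$ and $|\nabla\ln N|$ vanish identically, contradicting the normalisation. In the non-collapsed case the limit has Euclidean volume growth together with the inherited quadratic curvature decay, hence is ALE and Ricci-flat, and a tangent-cone-at-infinity argument forces it to be flat $\mathbb{R}^{4}$ (equivalently, the limiting static data is Minkowski or a Boost). I expect the main obstacle to be twofold and to form the technical heart of the argument: first, the rigidity step that a complete static vacuum solution arising as such a blow-up limit is necessarily flat; and second, the analysis of the collapsed case, where one must use the limiting Killing field and its warped circle-or-line symmetry to reduce the collapse to a controlled lower-dimensional model and again conclude flatness. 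Handling both cases uniformly, so that the resulting constant $\eta$ is independent of the data set, is exactly what the contradiction scheme is built to deliver.
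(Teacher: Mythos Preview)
Your blow-up/contradiction scheme is essentially Anderson's original argument from \cite{MR1806984}, and in outline it is sound; the two points you flag as the technical heart (rigidity of the complete boundaryless limit, and the collapsed case) are indeed exactly where the work lies. One small correction: in the non-collapsed case, being ALE and Ricci-flat alone does not force flatness of the four-manifold (there are non-flat ALE gravitational instantons); what actually closes the argument is the inherited static structure, i.e.\ the limit is a complete boundaryless static vacuum data set, and \emph{that} is what must be shown flat (a Liouville-type statement). You allude to this in the parenthetical remark, but the tangent-cone justification as written is not the right mechanism.

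The paper, however, takes a genuinely different and much more direct route. Rather than a contradiction/compactness argument, it observes that with $f=-\ln N$ the static equations read $Ric^{1}_{f}=0$, $\Delta_{f}f=0$, and then applies the Bochner formula (\ref{BOCHNERF}) to $\phi=\ln N$ to get the differential inequality $\Delta_{f}|\nabla\ln N|^{2}\geq 2|\nabla\ln N|^{4}$. This feeds directly into the maximum-principle Lemma~\ref{LEMMAME} (essentially due to Case), which yields the gradient estimate $|\nabla\ln N|^{2}\leq \eta/\dist^{2}(p,\partial\Sigma)$ with an \emph{explicit} constant. The Ricci bound then follows by a local elliptic bootstrap, as in the proof of Lemma~\ref{CDLEMMA} with $\epsilon=0$. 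This approach avoids Cheeger--Gromov convergence, $\varepsilon$-regularity, and the collapsed/non-collapsed dichotomy entirely; it is elementary, gives a computable $\eta$, and (as the paper notes) extends to all dimensions and to the whole conformal family $\overline{g}=N^{-2\epsilon}g$. Your approach, by contrast, is more flexible in principle and is the historical one, but the rigidity and collapsing steps you identify are precisely the complications the paper's method is designed to bypass.
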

This decay estimate is linked to a similar one for the metric $\hg=N^{2}\sg$ that we state below. It was proved also by Anderson in \cite{MR1806984}. We require $N>0$ everywhere and not only on $\Sigma^{\circ}$, to guarantee that $\hg$ is regular on $\partial \Sigma$. Note that imposing $N>0$ on $\Sigma$, does not make $(\Sigma;\hg=N^{2}\sg)$ automatically metrically complete. Indeed if $\Sigma$ is non-compact then $N$ could tend to zero over a divergent sequence of points and this may cause the metric incompleteness of the space $(\Sigma;\hg)$. 
\begin{Lemma}[Anderson \cite{MR1806984}]\label{LACD2}
There is a constant $\eta>0$ such that, for any static data set $(\Sigma;g,N)$ with $N>0$ and for which $(\Sigma;\hg=N^{2}\sg)$ is metrically complete, we have
\be\label{CURVDEC2}
|Ric_{\hg}|_{\hg}(p)\leq \frac{\eta}{\dist^{2}_{\hg}(p,\partial \Sigma)},\qquad |\nabla \ln N|_{\hg}^{2}(p)\leq \frac{\eta}{\dist^{2}_{\hg}(p,\partial \Sigma)}
\ee
for any $p\in \Sigma^{\circ}$. 
\end{Lemma}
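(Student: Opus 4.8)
The plan is to work entirely in the harmonic presentation $(\Sigma;\hg,U)$ of the data, where $\hg=N^{2}\sg$, $U=\ln N$, and the static equations become $Ric_{\hg}=2\nabla U\nabla U$, $\Delta_{\hg}U=0$ (from here on all gradients, Hessians and Laplacians are those of $\hg$). The first point is that this already collapses the two asserted inequalities into a single one: since $Ric_{\hg}=2\nabla U\nabla U$ is a rank-one symmetric tensor, its norm is $|Ric_{\hg}|_{\hg}=2|\nabla U|^{2}_{\hg}$, so the curvature bound is the gradient bound up to a factor $2$. Moreover $Ric_{\hg}(X,X)=2\langle\nabla U,X\rangle^{2}\ge 0$, i.e. $(\Sigma;\hg)$ has nonnegative Ricci curvature, and this is the structural input that makes the usual comparison machinery available even though nothing is assumed a priori on $N$. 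It therefore suffices to prove $|\nabla U|^{2}_{\hg}(p)\le \eta/\dist^{2}_{\hg}(p,\partial\Sigma)$ with $\eta$ universal.

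The main tool is the Bochner formula applied to $w:=|\nabla U|^{2}_{\hg}$. Using $\Delta_{\hg}U=0$ and $Ric_{\hg}(\nabla U,\nabla U)=2|\nabla U|^{4}=2w^{2}$, it gives
\be
\frac{1}{2}\Delta w=|\mathrm{Hess}\,U|^{2}+2w^{2}\ge 2w^{2},
\ee
so that $w\ge 0$ satisfies the super-quadratic inequality $\Delta w\ge 4w^{2}$. This is precisely the type of differential inequality that, on a space with $Ric\ge 0$, produces a scale-invariant pointwise decay through a localized maximum principle.

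I would then run the standard interior argument. Fix $p\in\Sigma^{\circ}$ and set $R=\dist_{\hg}(p,\partial\Sigma)$. For any $R'<R$ the closed ball $\overline{B_{\hg}(p,R')}$ is compact by Hopf-Rinow (this is where metric completeness of $(\Sigma;\hg)$ is used) and lies in $\Sigma^{\circ}$, since every one of its points is at $\hg$-distance $\ge R-R'>0$ from $\partial\Sigma$; hence $U$ and $w$ are smooth and bounded there. Consider the cutoff $G:=(R'^{2}-r^{2})^{2}w$ with $r:=\dist_{\hg}(\cdot,p)$, which vanishes on $\partial B_{\hg}(p,R')$ and so attains its maximum at an interior point $x_{0}$. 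Feeding $\nabla G(x_{0})=0$ and $\Delta G(x_{0})\le 0$ into $\Delta w\ge 4w^{2}$, and controlling the terms produced by $r$ via the Laplacian comparison $\Delta r\le (n-1)/r$ (valid because $Ric_{\hg}\ge 0$, with Calabi's trick when $x_{0}$ is in the cut locus of $p$), one closes the computation to obtain $G(x_{0})\le C(n)\,R'^{2}$. Since $G(p)=R'^{4}w(p)\le G(x_{0})$, this yields $w(p)\le C(n)/R'^{2}$, and letting $R'\uparrow R$ gives $w(p)\le C(n)/\dist_{\hg}^{2}(p,\partial\Sigma)$. Because $n=3$ is fixed and the only geometric inputs are $Ric_{\hg}\ge 0$ and the universal coefficient in $\Delta w\ge 4w^{2}$, the constant is universal; setting $\eta=2C(3)$ establishes both inequalities simultaneously.

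The main obstacle is the bookkeeping in the maximum-principle step: one must absorb the first-order term $\langle\nabla w,\nabla(R'^{2}-r^{2})^{2}\rangle$ (rewritten through $\nabla G(x_{0})=0$) and the Laplacian-of-cutoff terms into the favorable $2w^{2}$ coming from Bochner, all while preserving scale invariance so that the bound reads $G(x_{0})\le CR'^{2}$ with a dimensional constant. The exponent $2$ in the cutoff is exactly what makes these terms close, and the cut-locus nonsmoothness of $r$ at $x_{0}$ is the other delicate-but-routine point, dispatched by Calabi's trick. Everything else, namely the reduction of the curvature estimate to the gradient estimate and the sign $Ric_{\hg}\ge 0$, is immediate from the harmonic presentation; notably, Lemma \ref{LACD1} is not needed for this argument.
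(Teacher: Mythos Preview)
Your argument is correct and is essentially the paper's own proof specialized to $\epsilon=-1$: the paper uses the Bochner identity (formula (\ref{BOCHNERF}) with $f=0$) to get $\overline{\Delta}|\nabla\ln N|_{\hg}^{2}\ge 4|\nabla\ln N|_{\hg}^{4}$ and then applies the same cutoff/maximum-principle computation (packaged as Lemma \ref{LEMMAME}) with the Laplacian comparison for $r$. The only cosmetic difference is that the paper states the cutoff argument once in the Bakry--\'Emery framework and then reads off the case $f=0$, whereas you run it directly with the ordinary $Ric_{\hg}\ge 0$ comparison; the choice of cutoff $(R'^{2}-r^{2})^{2}$ and the handling of the cut locus via barriers are identical.
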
 

The estimates (\ref{CURVDEC}) and (\ref{CURVDEC2}) are particular instances of a whole family of estimates for the conformal metrics $\overline{g}=N^{-2\epsilon}g$, with $\epsilon$ ranging in the interval $(-1-\sqrt{2},-1+\sqrt{2})$ which is the interval where the polynomial $1-2\epsilon-\epsilon^{2}$ is positive. We prove the estimates below using the results in Section \ref{BESEC}. As a byproduct we  provide concise proofs of Lemmas \ref{LACD1} and \ref{LACD2}. This will be the goal of this section.

We start with a lemma that to our knowledge is essentially due to J. Case \cite{MR2741248} (though similar techniques are well known too at least in the theory of minimal surfaces). This lemma was first presented in \cite{Reiris2017}, but due to its importance we prove it again here.

\begin{Lemma}\label{LEMMAME} Let $(\Sigma,g)$ be a metrically complete Riemannian three-manifold with $Ric^{\alpha}_{f}\geq 0$ for some function $f$ and constant $\alpha>0$. Let $\phi$ be a non-negative function such that
\be\label{FUNLAP}
\Delta_{f}\phi\geq c\phi^{2}
\ee
for some constant $c>0$. Then, for any $p\in \Sigma^{\circ}$ we have
\be\label{FUNDEST}
\phi(p)\leq \frac{\eta}{\dist^{2}(p,\partial \Sigma)}
\ee
where $\eta=(36+4/\alpha)/c$.  
\end{Lemma}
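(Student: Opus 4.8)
The statement is a Bakry-Émery analogue of the classical Cheng-Yau / Omori gradient estimate, and the natural route is a maximum-principle argument applied to a localized version of $\phi$. The plan is to prove the estimate on an intrinsic metric ball $B(p,R)$ with $R=\dist(p,\partial\Sigma)$, since $\phi$ satisfies the differential inequality only in the interior $\Sigma^{\circ}$, and then let the geometry do the work. First I would introduce a cutoff function supported on such a ball. The key analytic input is that, because $Ric^{\alpha}_{f}\ge 0$ with $\alpha>0$, the $f$-Laplacian comparison theorem holds: the distance function $\rho(\cdot)=\dist(\cdot,p)$ satisfies a bound of the form $\Delta_{f}\rho\le (2+\alpha)/\rho$ (the effective dimension being $3+1/\alpha$ in Case's convention, which is exactly where the constant $\alpha$ enters $\eta$). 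This replaces the ordinary Laplacian comparison used in the $Ric\ge0$ case and is what makes the proof go through with no hypothesis on the form of $f$.

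The core of the argument is the standard test-function computation. I would consider the auxiliary quantity $G=(R-\rho)^{2}\phi$ (or a smooth cutoff $\psi(\rho)$ with $\psi\equiv1$ on the inner half-ball and vanishing at the boundary sphere), which vanishes on $\partial B(p,R)$ and is nonnegative, hence attains an interior maximum at some point $q$. At $q$ one has $\nabla G=0$ and $\Delta_{f}G\le0$. Expanding $\Delta_{f}G$ using the product rule
\[
\Delta_{f}G=(R-\rho)^{2}\Delta_{f}\phi+\phi\,\Delta_{f}(R-\rho)^{2}+2\langle\nabla(R-\rho)^{2},\nabla\phi\rangle,
\]
substituting the hypothesis $\Delta_{f}\phi\ge c\phi^{2}$, using the $f$-Laplacian comparison to control $\Delta_{f}(R-\rho)^{2}$ from below, and eliminating $\nabla\phi$ via the critical-point relation $\nabla\phi=-\phi\,\nabla(R-\rho)^{2}/(R-\rho)^{2}$ at $q$, yields at $q$ an inequality of the form
\[
0\ge c\,(R-\rho)^{2}\phi^{2}-\bigl(A+\tfrac{B}{\alpha}\bigr)\phi
\]
for explicit numerical constants $A,B$ coming from the first and second derivatives of the cutoff and from the comparison constant $2+\alpha$. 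Dividing by $\phi$ (at a maximum where $G>0$) gives a bound $G(q)=(R-\rho)^{2}\phi\le(A+B/\alpha)/c$, and since $p$ lies in the inner region where the cutoff is $\ge$ some fixed fraction, this transfers to $\phi(p)\le\eta/R^{2}$ with $\eta=(36+4/\alpha)/c$ after tracking the constants; the completeness of $(\Sigma,g)$ guarantees $B(p,R)$ is precompact so that the maximum is genuinely attained in the interior.

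The main obstacle I anticipate is twofold: first, establishing the $f$-Laplacian comparison theorem in the precise form needed, i.e. with the correct effective-dimension constant so that $1/\alpha$ appears linearly and the final $\eta$ comes out as advertised — this is exactly the content attributed to Case and is the one genuinely nonstandard ingredient. Second, bookkeeping the numerical constants through the cutoff estimate to land on the stated $36+4/\alpha$; this is routine but delicate, and the cleanest way is to choose the cutoff explicitly (a quadratic or a fixed smooth bump) so that $|\nabla\psi|^{2}/\psi$ and $\Delta\psi$ are controlled by universal constants times $1/R^{2}$. A minor technical point to handle is the possible non-smoothness of $\rho$ along the cut locus, which is dealt with in the usual way by Calabi's trick (supporting barriers from above), so that the comparison inequality may be used at the maximum point in the barrier sense.
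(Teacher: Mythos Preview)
Your plan is essentially the paper's proof: localize $\phi$ by a radial cutoff on $B(p,r_p)$ with $r_p=\dist(p,\partial\Sigma)$, apply the maximum principle to the product, and bound the bad terms via the $f$-Laplacian comparison coming from $Ric^\alpha_f\ge 0$ (with Calabi barriers on the cut locus, as you note). One concrete correction: the cutoff $(R-\rho)^2$ you mention first does not work, because then $2|\nabla\chi|^2/\chi-\Delta_f\chi$ contains the term $2(R-\rho)\Delta_f\rho\lesssim(R-\rho)/\rho$, which is unbounded if the maximum point happens to lie near $p$, so no uniform estimate follows. The paper takes instead $\chi=(r_p^2-r^2)^2$; then the comparison enters only through the combination $r\,\Delta_f r\le 3+1/\alpha$, which is bounded on the whole ball, and a direct computation gives
\[
2\frac{|\nabla\chi|^2}{\chi}-\Delta_f\chi=4(r_p^2-r^2)\,r\Delta_f r+4r_p^2+20r^2\le (12+4/\alpha+4+20)\,r_p^2,
\]
yielding exactly $\eta=(36+4/\alpha)/c$. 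Your alternative of a smooth bump $\equiv 1$ on an inner half-ball would also close the argument, just with a worse constant; and note the comparison should read $\Delta_f\rho\le(2+1/\alpha)/\rho$ rather than $(2+\alpha)/\rho$, consistent with the effective dimension $3+1/\alpha$ you correctly identify.
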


Observe that the lemma applies too to manifolds with $Ric\geq 0$ as this corresponds to the case $Ric_{f=0}^{\alpha}\geq 0$ for any $\alpha>0$.

\begin{proof} For any function $\chi$ the following general formula holds
\be
\Delta_{f}(\chi\phi)=\phi(\Delta_{f} \chi)+2\langle \nabla\chi,\nabla \phi\rangle +\chi\Delta_{f}\phi
\ee
Thus, if $\chi\geq 0$ and if $q$ is a local maximum of $\chi\phi$ on $\Sigma^{\circ}$, we have
\be\label{PRELCALC}
0\geq \bigg[\Delta_{f}(\chi\phi)\bigg]\bigg|_{q} \geq  \bigg[\phi \Delta_{f}\chi - 2\frac{|\nabla \chi|^{2}}{\chi}\phi +c\chi\phi^{2}\bigg]\bigg|_{q}
\ee
where to obtain the second inequality we used (\ref{FUNLAP}). Let $r_p=\dist(p,\partial \Sigma)$.  On $B(p,r_p)$ let the function $\chi(x)$ be $\chi(x)=(r_p^{2}-r(x)^{2})^{2}$. To simplify notation make $r=r(x)=\dist(x,p)$. Let $q$ be a point in the closure of $B(p,r_p)$ where the maximum of $\chi\phi$ is achieved. If $\phi(q) = 0$, then $\phi = 0$ and (\ref{FUNDEST}) holds for any $\eta>0$. So let us assume that $\phi(q)>0$. In particular $p$ belongs to the interior of $B(p,r_p)$. By (\ref{PRELCALC}) we have
\begin{align}\label{PREVIOUS}
cr_p^{4}\phi(p) \leq c(\chi\phi)(q) & \leq \bigg[2\frac{|\nabla\chi|^{2}}{\chi}-\Delta_{f}\chi \bigg]\bigg|_{q}\\
& =\bigg[4(r_p^{2}-r^{2})r\Delta_{f}r+4r_p^{2}+20r^{2}\bigg]\bigg|_{q}
\end{align}
But if $Ric_{f}^{\alpha}\geq 0$ then $\Delta_{f} r\leq (3+1/\alpha)/r$, (see \cite{MR2577473} Theorem A.1; On non-smooth points of $r$ this equations holds in the barrier sense\footnote{This is an important property as it allows us to make analysis as if $r$ were a smooth function, see \cite{MR2243772}.}). Using this in (\ref{PREVIOUS}) and after a simple computation we deduce,
\be
\phi(p)\leq \frac{(4(3+1/\alpha)+24)}{c r_p^{2}},
\ee 
which is (\ref{FUNDEST}).
\end{proof}

Let us see now an application of the previous Lemma. Let $(\Sigma;g,N)$ be a static data with $N>0$. Let $\epsilon$ be a number in $(-1-\sqrt{2},-1+\sqrt{2})$ and assume that the space ($\Sigma$; $\overline{g}=N^{-2\epsilon}g$) is metrically complete. We claim that there is $\eta(\epsilon)>0$, such that for all $p\in \Sigma^{\circ}$ we have
\be\label{AESTS}
|\nabla \ln N|^{2}_{\overline{g}}(p)\leq \frac{\eta(\epsilon)}{\dist^{2}_{\overline{g}}(p,\partial \sM)}
\ee  
Let us prove the claim. Assume first $\epsilon\neq -1$. From Lemma \ref{LEMMAME} we know that $\overline{Ric}^{\alpha}_{f}=0$ where $f=-(1+\epsilon)\ln N$ and where $\alpha=(1-2\epsilon-\epsilon^{2})/(1+\epsilon)^{2}$. The factor $(1-2\epsilon-\epsilon^{2})$ is greater than zero by the assumption on the range of $\epsilon$. Now use the general formula (see\cite{MR2741248})
\be\label{BOCHNERF}
\frac{1}{2}\overline{\Delta}_{f} |\nabla \phi|_{\overline{g}}^{2} = |\overline{\nabla}\nabla \phi|_{\overline{g}}^{2}+\langle\nabla \phi,\nabla(\overline{\Delta}_{f}\phi)\rangle_{\overline{g}} +\overline{Ric}^{\alpha}_{f}(\nabla\phi, \nabla \phi) +\alpha\langle \nabla f,\nabla \phi\rangle_{\overline{g}}^{2}
\ee
with $\phi=\ln N$, together with $\overline{Ric}^{\alpha}_{f}=0$, to obtain
\be
\overline{\Delta}_{f}  |\nabla \ln N|_{\overline{g}}^{2}\geq 2(1-2\epsilon-\epsilon^{2})|\nabla \ln N|_{\overline{g}}^{4}
\ee
and thus (\ref{AESTS}) from Lemma \ref{LEMMAME}. When $\epsilon=-1$ then $\overline{Ric}^{\alpha}_{f=0}\geq 0$ for any $\alpha>0$ and 
\be
\overline{\Delta}_{f=0}  |\nabla \ln N|_{\overline{g}}^{2}\geq 4|\nabla \ln N|_{\overline{g}}^{4}
\ee
The claim again follows from Lemma \ref{LEMMAME}.

Note that Lemma \ref{LEMMAME} provides the following explicit expression for $\eta(\epsilon)$,
\be
\eta(\epsilon)= \frac{1}{2(1-2\epsilon-\epsilon^{2})}\bigg[36+\frac{4(1+\epsilon)^{2}}{(1-2\epsilon-\epsilon^{2})}\bigg]
\ee

What we just showed is a part of the {\it generalised Anderson's quadratic curvature decay} mentioned earlier, that we now state and prove. 

\begin{Lemma}\label{CDLEMMA} 
Let $\epsilon$ be a number in the interval $(-1-\sqrt{2},-1+\sqrt{2})$. Then there is $\eta(\epsilon)$ such that for any static data set $(\Sigma;g,N)$ with $N>0$ and for which $(\Sigma; \overline{g}=N^{-2\epsilon}\sg)$ is metrically complete, we have,
\be\label{CDLEMMAEST}
|\overline{Ric}|_{\overline{g}}(p)\leq \frac{\eta(\epsilon)}{\dist^{2}_{\overline{g}}(p,\partial \Sigma)}, \qquad |\nabla \ln N|^{2}_{\overline{g}}(p)\leq \frac{\eta(\epsilon)}{\dist^{2}_{\overline{g}}(p,\partial \Sigma)},
\ee
for any $p\in \Sigma^{\circ}$.
\end{Lemma}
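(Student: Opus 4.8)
The plan is to reduce the full curvature estimate in Lemma~\ref{CDLEMMA} to the gradient estimate \eqref{AESTS} that has already been established in the discussion preceding the statement. The second inequality of \eqref{CDLEMMAEST} is literally \eqref{AESTS}, so nothing new is needed there; the work lies entirely in bounding $|\overline{Ric}|_{\overline{g}}$. The key structural input is Proposition~\ref{FELIZ}, which tells us that $\overline{Ric}^{\alpha}_{f}=0$ with $f=-(1+\epsilon)\ln N$, and, crucially, the intermediate identity obtained in its proof,
\be
\overline{Ric}=(1+\epsilon)\overline{\nabla}\nabla \ln N+(1-2\epsilon-\epsilon^{2})\nabla \ln N\nabla \ln N.
\ee
This expresses $\overline{Ric}$ algebraically in terms of the Hessian $\overline{\nabla}\nabla \ln N$ and the rank-one term $\nabla \ln N\nabla \ln N$, both measured with $\overline{g}$. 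Thus the first thing I would do is take $\overline{g}$-norms and apply the triangle inequality to get
\be
|\overline{Ric}|_{\overline{g}}\leq |1+\epsilon|\,|\overline{\nabla}\nabla \ln N|_{\overline{g}}+|1-2\epsilon-\epsilon^{2}|\,|\nabla \ln N|^{2}_{\overline{g}}.
\ee
The second term is already controlled by \eqref{AESTS}, so everything comes down to a quadratic decay estimate for the Hessian $|\overline{\nabla}\nabla \ln N|_{\overline{g}}$.

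First I would obtain the Hessian bound. The natural source is the Bochner-type formula \eqref{BOCHNERF} applied with $\phi=\ln N$: since $\overline{Ric}^{\alpha}_{f}=0$ and $\overline{\Delta}_{f}\ln N=0$ (this is the second equation of \eqref{OOO}, noting $f$ is a multiple of $\ln N$, so $\langle\nabla\ln N,\nabla(\overline{\Delta}_{f}\ln N)\rangle_{\overline{g}}=0$), the formula collapses to
\be
\tfrac{1}{2}\overline{\Delta}_{f}|\nabla \ln N|^{2}_{\overline{g}}=|\overline{\nabla}\nabla \ln N|^{2}_{\overline{g}}+\alpha\langle \nabla f,\nabla \ln N\rangle^{2}_{\overline{g}}.
\ee
Both terms on the right are non-negative, so in the course of deriving \eqref{AESTS} one has already shown that $\overline{\Delta}_{f}|\nabla \ln N|^{2}_{\overline{g}}$ dominates a multiple of $|\nabla\ln N|^4_{\overline{g}}$; the same identity shows that $|\overline{\nabla}\nabla\ln N|^2_{\overline{g}}$ is itself bounded by $\tfrac12\overline{\Delta}_{f}|\nabla\ln N|^2_{\overline{g}}$. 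The cleanest route, which I would take, is to feed the already-established pointwise bound \eqref{AESTS} back into the algebraic expression above: since $\langle\nabla f,\nabla\ln N\rangle^2_{\overline{g}}=(1+\epsilon)^2|\nabla\ln N|^4_{\overline{g}}\le (1+\epsilon)^2\eta(\epsilon)^2/\dist^4_{\overline{g}}$, it suffices to bound $|\overline{\nabla}\nabla\ln N|_{\overline{g}}$ directly, and for that the standard move is a second application of the maximum-principle machinery of Lemma~\ref{LEMMAME} to the function $\phi=|\overline{\nabla}\nabla \ln N|_{\overline{g}}$, using that $|\overline{\nabla}\nabla\ln N|_{\overline{g}}$ itself satisfies a differential inequality of the form $\overline{\Delta}_f\phi\ge c\phi^2-(\text{lower order})$ coming from differentiating the Bochner identity once more. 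Since the gradient is already known to decay like $1/\dist_{\overline{g}}$, the lower-order terms are absorbed and one again concludes $|\overline{\nabla}\nabla\ln N|_{\overline{g}}(p)\le \eta(\epsilon)/\dist^2_{\overline{g}}(p,\partial\Sigma)$, possibly after enlarging $\eta(\epsilon)$.

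Combining the Hessian bound with \eqref{AESTS} in the triangle inequality above yields the first estimate of \eqref{CDLEMMAEST}, completing the proof, with $\eta(\epsilon)$ adjusted to a single constant depending only on $\epsilon$ through the polynomial $1-2\epsilon-\epsilon^2$ and the factor $(1+\epsilon)$. I expect the main obstacle to be the Hessian decay $|\overline{\nabla}\nabla\ln N|_{\overline{g}}\lesssim 1/\dist^2_{\overline{g}}$: unlike the gradient, the Hessian does not satisfy a clean scalar inequality $\overline{\Delta}_f\phi\ge c\phi^2$ on the nose, so either one differentiates the Bochner formula a second time and carefully controls the third-derivative cross terms (a Shi-type estimate argument), or one invokes an elliptic regularity / scale-invariant interior estimate on rescaled balls where the gradient bound already gives uniform $C^0$ control of $|\nabla\ln N|_{\overline{g}}$. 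Given that the paper's philosophy is to reuse Lemma~\ref{LEMMAME}, I would push the maximum-principle approach as far as possible and treat the lower-order remainder via Young's inequality together with the already-proven gradient decay, which is precisely the step where the quadratic-in-$\phi$ structure must be re-established and where the bookkeeping is heaviest.
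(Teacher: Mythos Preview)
Your reduction is exactly the paper's: use
\[
\overline{Ric}=(1+\epsilon)\overline{\nabla}\nabla \ln N+(1-2\epsilon-\epsilon^{2})\nabla \ln N\nabla \ln N
\]
together with \eqref{AESTS} to reduce the first estimate to a bound $|\overline{\nabla}\nabla\ln N|_{\overline{g}}(p)\leq c(\epsilon)/\dist^{2}_{\overline{g}}(p,\partial\Sigma)$. The divergence is in how you obtain this Hessian bound.

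Your preferred route, a second run of Lemma~\ref{LEMMAME} for $\phi=|\overline{\nabla}\nabla\ln N|_{\overline{g}}$, has a genuine gap that you yourself flag: differentiating the Bochner identity to get an inequality $\overline{\Delta}_{f}\phi\geq c\phi^{2}-(\text{lower order})$ forces you to commute covariant derivatives, and the commutators introduce full $\overline{Rm}$-terms. In dimension three $\overline{Rm}$ is determined by $\overline{Ric}$, and $\overline{Ric}$ contains precisely the Hessian you are trying to bound. So the ``lower order'' terms are not lower order; they are of the same strength as $\phi$ itself, and you cannot absorb them using only the gradient decay. A Shi-type iteration could in principle be set up, but it is substantially heavier than what is needed here and is not what the paper does.

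The paper instead takes the elliptic-regularity route you mention only in passing, and the step you are missing is how to get geometry control \emph{without} already knowing the Hessian. After scaling so that $N(p)=1$ and $\dist_{\overline{g}}(p,\partial\Sigma)=1$, the gradient bound gives $c_{2}\leq N\leq c_{3}$ on $B_{\overline{g}}(p,1/2)$. One then passes to the metric $\hg=N^{2}g=N^{2+2\epsilon}\overline{g}$: the crucial point is that for $\hg$ the static equations read $Ric_{\hg}=2\nabla\ln N\,\nabla\ln N$, so the already-established gradient bound yields an \emph{a priori} curvature bound $|Rm_{\hg}|_{\hg}\leq c(\epsilon)$ on a definite $\hg$-ball, with no circularity. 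This gives a lower bound on the harmonic radius, and since $\Delta_{\hg}\ln N=0$, standard interior elliptic estimates then bound $|\nabla^{\hg}\nabla\ln N|_{\hg}(p)$. The conformal change formula converts this back into the desired bound on $|\overline{\nabla}\nabla\ln N|_{\overline{g}}(p)$. The switch to $\hg$ is the idea that breaks the circularity your maximum-principle approach runs into.
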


\begin{proof} We have already shown the second estimate of (\ref{CDLEMMAEST}). If $\partial \Sigma=\emptyset$ then $N$ is constant and $\overline{g}$ is flat. So let us assume that $\partial \Sigma\neq \emptyset$. Let $p\in \Sigma^{\circ}$. By scaling we can assume without loss of generality that $N(p)=1$ and $\overline{d}_{p}=\dist_{\overline{g}}(p,\partial \Sigma)=1$. In this setup, we need to prove that 
\be\label{RMNN}
|\overline{Ric}|_{\overline{\sg}}(p)\leq c_{0}(\epsilon), 
\ee
for $c_{0}$ independent of the data.

The second estimate of (\ref{CDLEMMAEST}) yields, 
\be\label{RAPA1}
|\nabla \ln N|_{\overline{g}}(x)\leq c_{1},\\
\ee
for all $x\in B_{\overline{g}}(p,1/2)$ and where $c_{1}=c_{1}(\epsilon)$ is independent of the data. Therefore, as,
\be
\overline{Ric}=(1+\epsilon)\overline{\nabla}\nabla \ln N+(1-2\epsilon-\epsilon^{2})\nabla \ln N\nabla \ln N,
\ee
then to prove (\ref{RMNN}) it is enough to prove 
\be\label{CARTOON3}
|\overline{\nabla}\nabla \ln N|_{\overline{\sg}}(p)\leq c'_{0}(\epsilon)
\ee
for a $c'_{0}(\epsilon)$ independent of the data.

Let $\gamma(s)$ be a geodesic segment joining $p$ to $x$. Then we can write,
\be
\big|\ln \frac{N(x)}{N(p)}\big|=\big|\int \nabla_{\gamma'}\ln Nds\big|\leq \int |\nabla\ln N|_{\overline{\sg}}ds\leq c_{1}/2
\ee
where we used (\ref{RAPA1}). Because $N(p)=1$, this inequality gives,
\be\label{RAPA2}
0<c_{2}\leq N(x)\leq c_{3}<\infty
\ee
for all $x\in B_{\overline{g}}(p,1/2)$ and where $c_{2}=c_{2}(\epsilon)$ and $c_{3}=c_{3}(\epsilon)$. 

Let $\hg=N^{2+2\epsilon}\overline{\sg}=N^{2}g$. If $\epsilon\geq -1$ let $r_{0}=c_{2}^{1+\epsilon}$, whereas if $\epsilon<-1$ let $r_{0}=c_{3}^{1+\epsilon}$. Then, clearly $B_{\hg}(p,r_{0})\subset B_{\overline{g}}(p,1/2)$. Moreover (\ref{RAPA1}) and (\ref{RAPA2}) show that for all $x\in B_{\hg}(p,r_{0})$ we have,
\be\label{CARTOON1}
|\nabla \ln N|_{\hg}(x)\leq c_{4}(\epsilon),\\
\ee
As $Ric_{\hg}=2\nabla \ln N \nabla \ln N$, we deduce that 
\be
|Ric_{\hg}|_{\hg}(x)\leq c_{5}(\epsilon)
\ee
for all $x\in B_{\hg}(p,r_{0})$. In dimension three the Ricci tensor determines the Riemann tensor, so, 
\be\label{RARA2}
|Rm_{\hg}|_{\hg}(x)\leq c_{6}(\epsilon)
\ee
Hence, by standard arguments, there is $r_{1}(\epsilon)\leq r_{0}$ such that the exponential map $exp:B^{\mathcal{T}}_{\hg}(p,r_{1})\rightarrow \Sigma$, is a diffeomorphism into the image, ($B_{\hg}^{\mathcal{T}}(p,r_{1})$ is a ball in $\mathcal{T}_{p}\Sigma$). Let $\tilde{\hg}$ be the lift of $\hg$ to $B^{\mathcal{T}}_{\hg}(p,r_{1})$ by $exp^{-1}$. We still have the bound (\ref{RARA2}) for $\tilde{\hg}$ and as the injectivity radius $inj_{\hg}(p)$ is bounded from below by $r_{1}$, then the {\it harmonic radius} $i_{h}(p)$, which controls the geometry in $C^{2}$ (see \cite{MR2243772}), is bounded from below by $r_{2}(\epsilon)\leq r_{1}$. As $\Delta_{\tilde{\hg}}\ln N=0$, then standard elliptic estimates give 
\be\label{CARTOON2}
|\nabla^{\tilde{\hg}} \nabla \ln N|_{\tilde{\hg}}(p)\leq c_{7}(\epsilon), 
\ee
where $\nabla^{\tilde{\hg}}$ is the covariant derivative of $\tilde{\hg}$. Finally, (\ref{RAPA2}), (\ref{CARTOON1}), (\ref{CARTOON2}) and the general formula,
\be
\overline{\nabla}\nabla\ln N=\nabla^{\hg}\nabla \ln N-(1+\epsilon)\big[2\nabla\ln N\nabla \ln N-|\nabla \ln N|^{2}_{\hg}\hg\big]
\ee 
provide the required bound (\ref{CARTOON3}). This completes the proof.
\end{proof}

It is easy to check using elliptic estimates that the proof of the Lemma (\ref{CDLEMMA}) leads also to the estimates 
\be\label{ESTCHEC}
|\overline{\nabla}^{(k)}\overline{Ric}|_{\overline{g}}(p)\leq \frac{\eta_{k}(\epsilon)}{\dist^{2+k}_{\overline{g}}(p,\partial \Sigma)}, \qquad |\overline{\nabla}^{(k)}\nabla \ln N|^{2}_{\overline{g}}(p)\leq \frac{\eta_{k}(\epsilon)}{\dist^{2+2k}_{\overline{g}}(p,\partial \Sigma)}
\ee
for every $k\geq 1$, where $\overline{\nabla}^{(k)}$ is $\overline{\nabla}$ applied $k$-times and where the positive constants $\eta(\epsilon)$, $\eta_{1}(\epsilon)$, $\eta_{2}(\epsilon)$, $\eta_{3}(\epsilon),\ldots$ are independent of the data set.

\subsection{Conformal metrics and metric completeness}\label{CMMC} 

In this section we aim to prove that metric completeness of data sets (with $N>0$ and $\partial \Sigma$ compact) imply the metric completeness of the conformal spaces $(\Sigma;\overline{\sg}=N^{-2\epsilon}\sg)$ for any $\epsilon$ in the range $(-1-\sqrt{2},-1+\sqrt{2})$. Note that until now, when it was necessary we have been including the completeness of the metrics $\overline{g}$ as a hypothesis.

\begin{Theorem}\label{COMN2} Let $\epsilon$ be a number in the interval $(-1-\sqrt{2},-1+\sqrt{2})$. Let $(\Sigma;g,N)$ be a metrically complete static data set with $N>0$ and $\partial \Sigma$ compact. Then $(\Sigma;\overline{\sg}=N^{-2\epsilon}\sg)$ is metrically complete.
\end{Theorem}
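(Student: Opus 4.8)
The plan is to show completeness of $(\Sigma;\overline{\sg})$ by ruling out the only way metric completeness can fail: a $\overline{\sg}$-rectifiable curve $\sgeo:[0,1)\to \Sigma^{\circ}$ of finite $\overline{\sg}$-length that leaves every $\overline{\sg}$-bounded set. Since $\partial\Sigma$ is $g$-compact and $(\Sigma;g)$ is complete, the only obstruction to $\overline{\sg}=N^{-2\epsilon}g$ being complete is the behaviour of the conformal factor $N^{-2\epsilon}$ along ends: a divergent sequence $p_i$ (in the $g$-metric) on which $N\to 0$ (when $\epsilon<0$) or $N\to\infty$ (when $\epsilon>0$) could make the $\overline{\sg}$-distance to $\partial\Sigma$ stay bounded. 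The strategy is to exclude this by combining the generalised Anderson decay estimate (Lemma~\ref{CDLEMMA}) with the Harnack estimate for the lapse (Proposition~\ref{MAXMINU11}).

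\textbf{Step 1 (local completeness from the decay estimate).} First I would argue that no finite-$\overline{\sg}$-length curve can reach $\partial\Sigma$ from the interior, and, more to the point, that the $\overline{\sg}$-completeness fails only if $\overline{d}_p:=\dist_{\overline{g}}(p,\partial\Sigma)$ stays bounded along a $g$-divergent sequence. The key input is that the second estimate of (\ref{CDLEMMAEST}) in Lemma~\ref{CDLEMMA} is proved under the \emph{hypothesis} that $(\Sigma;\overline{g})$ is complete, so I must instead run the underlying argument (Lemma~\ref{LEMMAME}) on large $\overline{\sg}$-balls that are a priori known to be complete, i.e. on relatively compact pieces, and track how far $\overline{\sg}$-geodesics can be extended. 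The cleaner route is to establish a two-sided comparison between $\overline{\sg}$-distance and $g$-distance to $\partial\Sigma$ on annuli.

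\textbf{Step 2 (distance comparison via Harnack).} Using the annuli $\mathcal{A}_g(a,b)$ of subsection~\ref{SAP}, I would show that on each scaled $g$-annulus the lapse is pinched between fixed multiples of its value on the inner sphere. Concretely, by Proposition~\ref{MAXMINU11}, for $0<a<b$ there is $\const>0$ and $r_0$ so that on any connected component $Z$ of $\mathcal{A}_g(ra,rb)$ one has $\max_Z N\le \const \min_Z N$. Hence on the dyadic annulus $\mathcal{A}_g(2^{k},2^{k+1})$ the conformal factor $N^{-2\epsilon}$ varies only by a bounded factor, so if $N_k$ denotes a representative value of $N$ there, the $\overline{\sg}$-length of a radial $g$-segment crossing that annulus is comparable to $N_k^{-\epsilon}\,2^{k}$. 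The total $\overline{\sg}$-distance to infinity is therefore comparable to $\sum_k N_k^{-\epsilon}\,2^{k}$. To force this sum to diverge I would bound the decay rate of $N$ along an end: integrating Anderson's gradient bound $|\nabla \ln N|_g\le \sqrt{\const}/\dist_g(\cdot,\partial\Sigma)$ from Lemma~\ref{LACD1} along a radial minimizing $g$-geodesic gives $|\ln N(p)|\le C\ln\dist_g(p,\partial\Sigma)+C'$, i.e. $N$ can grow or decay at most polynomially in the $g$-distance, say $c\,\rho^{-s}\le N\le C\,\rho^{s}$ with $\rho=\dist_g(\cdot,\partial\Sigma)$ and $s=\sqrt{\const}$.

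\textbf{Step 3 (summation / divergence).} Combining Steps~1–2, the $\overline{\sg}$-distance from a fixed point to the $g$-sphere of radius $2^{K}$ is bounded below by a constant times $\sum_{k\le K} N_k^{-\epsilon}2^{k}\ge c\sum_{k\le K} 2^{k(1-|\epsilon|s)}$, and I would choose the exponent bookkeeping so that the geometric series diverges as $K\to\infty$. The point where the range $(-1-\sqrt2,\,-1+\sqrt2)$ of $\epsilon$ must enter is exactly here: it is the range on which $\alpha=(1-2\epsilon-\epsilon^2)/(1+\epsilon)^2>0$, which is what makes Lemma~\ref{CDLEMMA} available with a data-independent constant, and it is this uniform constant that pins $s$ and guarantees $1-|\epsilon| s>0$ so the series diverges. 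Thus every $g$-divergent sequence is also $\overline{\sg}$-divergent, no finite-length curve escapes to infinity, and $(\Sigma;\overline{\sg})$ is complete.

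\textbf{Main obstacle.} The delicate point is the circularity in Step~1: Lemma~\ref{CDLEMMA} assumes completeness of $\overline{g}$, which is precisely what we are proving. The honest way around it is to localise — apply Lemma~\ref{LEMMAME} only on precompact (hence automatically complete) sublevel regions where $\overline{g}$-balls are known to be complete, obtaining the gradient bound there, and then bootstrap outward annulus by annulus using the Harnack pinching to control $N$, extending the region on which $\overline{g}$ is known complete at each stage. Making this bootstrap rigorous, and verifying that the exponent inequality $1-|\epsilon|s>0$ genuinely holds across the whole interval $(-1-\sqrt2,-1+\sqrt2)$ rather than merely forcing a further restriction on $\epsilon$, is where the real work lies; everything else is the routine integration and summation indicated above.
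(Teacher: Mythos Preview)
Your overall architecture is sound and the circularity you worry about is not the real obstacle: in Step~2 you invoke Lemma~\ref{LACD1} (the $\epsilon=0$ Anderson estimate), which only requires completeness of $g$ and is therefore legitimately available. The genuine gap is the exponent inequality in Step~3. Integrating $|\nabla\ln N|_{g}\le\sqrt{\eta(0)}/\dist_{g}(\cdot,\partial\Sigma)$ gives $s=\sqrt{\eta(0)}$, and with the paper's explicit constant one has $\eta(0)=20$; your condition $1-|\epsilon|s>0$ therefore holds only for $|\epsilon|<1/\sqrt{20}\approx 0.22$. This misses almost all of $(-1-\sqrt2,\,-1+\sqrt2)$, and in particular the case $\epsilon=-1$ (the metric $\hg=N^{2}g$) that the paper needs most. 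The \emph{spatial} bootstrap you propose in the last paragraph does not help: once the region of known $\overline{g}$-completeness has been pushed out to $g$-radius $R$, the only control on $N$ you can feed back in is still the polynomial bound with exponent $\sqrt{\eta(0)}$, and the tail sum still converges for large $|\epsilon|$.

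The paper's resolution is to bootstrap in $\epsilon$ rather than in space. It first shows (Proposition~\ref{PURURU}) that if $(\Sigma;N^{-2\epsilon_{0}}g)$ is complete then so is $(\Sigma;N^{2\zeta}N^{-2\epsilon_{0}}g)$ for every $|\zeta|\le 1/(2\sqrt{\eta(\epsilon_{0})})$ --- this is precisely your Steps~2--3 run with the intermediate metric in place of $g$, and there the exponent condition \emph{does} hold because $\zeta$ is taken small. One then picks $n$ large enough that each increment $\epsilon/n$ satisfies this bound at every intermediate stage $\epsilon_{i}=i\epsilon/n$, and walks from the given complete metric $g$ to $N^{-2\epsilon}g$ in $n$ steps. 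The Harnack estimate of Proposition~\ref{MAXMINU11} is not used; the integrated gradient bound alone suffices at each step.
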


We start proving a corollary to Lemma \ref{CDLEMMA} that estimates $N$.

\begin{Corollary} {\rm (to Lemma \ref{CDLEMMA})} Let $\epsilon$ be a number in the interval $(-1-\sqrt{2},-1+\sqrt{2})$. Let $(\Sigma;g,N)$ be a static data set with $N>0$ and $\partial \Sigma$ compact, and for which $(\Sigma, \overline{g}=N^{-2\epsilon}\sg)$ is metrically complete. Then, there is $c>0$ (depending on the data) such that
\be\label{OBS}
\frac{1}{c(1+\dist_{\overline{\sg}}(p,\partial \Sigma))^{\sqrt{\eta}}}\leq N(p)\leq c(1+\dist_{\overline{\sg}}(p,\partial \Sigma))^{\sqrt{\eta}}
\ee
for any $p\in \Sigma^{\circ}$, where $\eta=\eta(\epsilon)$ is the coefficient in the decay estimate (\ref{CDLEMMAEST}) of Lemma \ref{CDLEMMA}. 
\end{Corollary}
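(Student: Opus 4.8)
The plan is to integrate the gradient estimate of Lemma \ref{CDLEMMA} along minimizing $\overline{g}$-geodesics emanating from $\partial\Sigma$. The second estimate of (\ref{CDLEMMAEST}) gives the pointwise bound
\be
|\nabla \ln N|_{\overline{g}}(x)\leq \frac{\sqrt{\eta}}{\dist_{\overline{g}}(x,\partial\Sigma)}\qquad\text{for all }x\in\Sigma^{\circ},
\ee
so it is natural to write $d(x)=\dist_{\overline{g}}(x,\partial\Sigma)$ and to try to control $\ln N$ by integrating $\sqrt{\eta}/s$ along the radial distance $s$. The one difficulty is that this integrand is not integrable down to $s=0$, so I would not integrate all the way to $\partial\Sigma$ but rather start from a fixed distance and absorb the near-boundary region into a compactness argument.

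First I would isolate the compact collar. Since $\partial\Sigma$ is compact and $(\Sigma;\overline{g})$ is metrically complete, the closed set $K=\{x:d(x)\leq 1\}$ is both closed and bounded (its $\overline{g}$-diameter is at most $2+\diam_{\overline{g}}(\partial\Sigma)$), hence compact by the Hopf--Rinow--Cohn--Vossen theorem applied to the complete locally compact length space $(\Sigma;\overline{g})$. Because $N$ is continuous and strictly positive on the compact set $K$, there is a constant $c_{0}\geq 1$ with $c_{0}^{-1}\leq N\leq c_{0}$ on $K$. For any $p$ with $d(p)\leq 1$ this already gives (\ref{OBS}) with $c=c_{0}$, since $(1+d(p))^{\sqrt{\eta}}\geq 1$. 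This step is exactly where the hypotheses $N>0$ up to the boundary, $\partial\Sigma$ compact, and completeness of $\overline{g}$ are used.

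For a point $p$ with $d(p)>1$, I would take a unit-speed minimizing $\overline{g}$-geodesic $\gamma:[0,d(p)]\to\Sigma$ from $\partial\Sigma$ to $p$ (which exists by the same theorem). Since every initial subsegment of $\gamma$ is again minimizing, one has $d(\gamma(s))=s$, so the gradient bound gives $|\nabla\ln N|_{\overline{g}}(\gamma(s))\leq \sqrt{\eta}/s$ and therefore
\be
\Big|\ln N(p)-\ln N(\gamma(1))\Big|\leq \int_{1}^{d(p)}|\nabla\ln N|_{\overline{g}}(\gamma(s))\,ds\leq \int_{1}^{d(p)}\frac{\sqrt{\eta}}{s}\,ds=\sqrt{\eta}\,\ln d(p).
\ee
As $\gamma(1)\in K$ we have $|\ln N(\gamma(1))|\leq \ln c_{0}$, whence $c_{0}^{-1}d(p)^{-\sqrt{\eta}}\leq N(p)\leq c_{0}\,d(p)^{\sqrt{\eta}}$; using $d(p)\leq 1+d(p)$ in the upper bound and $d(p)^{-\sqrt{\eta}}\geq(1+d(p))^{-\sqrt{\eta}}$ in the lower bound yields (\ref{OBS}) with $c=c_{0}$. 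I expect the only genuinely delicate point to be the non-integrable singularity of the gradient estimate at $\partial\Sigma$; it is circumvented precisely by replacing integration up to the boundary with the compactness-and-continuity bound on the collar $K$, so no careful estimate of the behaviour of $N$ near $\partial\Sigma$ is needed.
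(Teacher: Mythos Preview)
Your proof is correct and follows essentially the same approach as the paper: integrate the gradient bound $|\nabla\ln N|_{\overline{g}}\leq\sqrt{\eta}/\dist_{\overline{g}}(\cdot,\partial\Sigma)$ along a minimizing $\overline{g}$-geodesic from $\partial\Sigma$ to $p$, starting the integration at arc-length $1$ to avoid the non-integrable singularity, and handle the collar $\{d\leq 1\}$ by compactness. The paper phrases the collar step slightly differently (it bounds $N(\gamma(1))$ by the max/min of $N$ over the level set $\{d=1\}$ and says the case $d(p)\leq 1$ is ``direct due to the compactness of $\partial\Sigma$''), whereas you make the compactness of the full sublevel set $K$ explicit via Hopf--Rinow; but the idea and the computation are the same.
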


\begin{proof} Let $p\in \Sigma$ such that $\overline{d}_{p}:=\dist_{\overline{\sg}}(p,\partial \Sigma)\geq 1$ (if it exists). Let $\gamma(\overline{s})$ be a $\overline{\sg}$-geodesic segment joining $\partial \Sigma$ to $p$ and realising the $\overline{\sg}$-distance between them (in particular $N(\gamma(\overline{d}_{p}))=N(p)$). Then we can write
\be
\bigg|\ln \frac{N(\gamma(\overline{d}_{p}))}{N(\gamma(1))}\bigg|=\bigg|\int_{1}^{\overline{d}_{p}}\nabla_{\gamma'}\ln Nd\overline{s}\bigg|\leq \int_{1}^{\overline{d}_{p}}\big|\nabla \ln N\big|d\overline{s}\leq \sqrt{\eta(\epsilon)}\ln \overline{d}_{p}
\ee
where to obtain the last inequality we have used (\ref{AESTS}). Therefore,
\be
N(p)\leq N(\gamma(1))\overline{d}_{p}^{\sqrt{\eta}}\quad {\rm and}\quad N(p)\geq N(\gamma(1))/\overline{d}_{p}^{\sqrt{\eta}}
\ee
Thus,
\be\label{OBSE}
\overline{m}\overline{d}_{p}^{\sqrt{\eta}}\geq N(p) \geq \underline{m}/d_{p}^{\sqrt{\eta}}
\ee
where $\overline{m}=\max\{N(q):\dist_{\overline{\sg}}(q,\partial \Sigma)=1\}$ and $\underline{m}=\min\{N(q):\dist_{\overline{\sg}}(q,\partial \Sigma)\}$. This clearly implies (\ref{OBS}). Obtaining (\ref{OBS}) for all $p\in \Sigma^{\circ}$, namely even for those with $\overline{d}_{p}\leq 1$, is direct due to the compactness of $\partial \Sigma$.
\end{proof}

\begin{Proposition}\label{PURURU} Let $\epsilon$ be a number in the interval $(-1-\sqrt{2},-1+\sqrt{2})$. Let $(\Sigma;g,N)$ be a static data set with $N>0$ and for which $(\Sigma, \overline{g}=N^{-2\epsilon}\sg)$ is metrically complete. Then, for any $\zeta$ such that $|\zeta|\leq 1/(2\sqrt{\eta})$, the space $(\Sigma;N^{2\zeta}\overline{\sg})$ is metrically complete, where $\eta=\eta(\epsilon)$ is the coefficient in (\ref{CDLEMMAEST}).
\end{Proposition}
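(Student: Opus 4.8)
The plan is to establish the metric completeness of $(\Sigma;\tilde{g})$, where $\tilde{g}:=N^{2\zeta}\overline{g}$, via the standard criterion that a Riemannian manifold (possibly with boundary) is metrically complete if and only if every \emph{divergent} curve---one that eventually leaves every compact subset---has infinite length. If $\partial\Sigma=\emptyset$ then, as observed in the proof of Lemma \ref{CDLEMMA}, $N$ is constant, so $\tilde{g}$ is a constant multiple of $\overline{g}$ and the claim is immediate; hence I would assume $\partial\Sigma\neq\emptyset$. Note that since $N>0$ everywhere and $\overline{g}$ is a genuine complete metric, $\tilde{g}=N^{2\zeta}\overline{g}$ is a smooth non-degenerate metric, so the only possible failure of completeness is at infinity, which is exactly what the divergent-curve criterion detects.

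First I would fix a divergent curve $\gamma$ and reparametrise it by $\overline{g}$-arclength. Because $(\Sigma;\overline{g})$ is metrically complete, any curve of finite $\overline{g}$-length would converge and therefore fail to be divergent; hence $\gamma$ has infinite $\overline{g}$-length and can be written $\gamma:[0,\infty)\to\Sigma$ with $|\gamma'|_{\overline{g}}\equiv 1$. In this parametrisation $L_{\tilde g}(\gamma)=\int_0^{\infty}N^{\zeta}(\gamma(\overline{s}))\,d\overline{s}$, since $\sqrt{N^{2\zeta}}=N^{\zeta}$, so the whole problem reduces to showing that this integral diverges.

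The key input is the two-sided polynomial bound (\ref{OBS}): there is $c>0$ with $c^{-1}(1+\dist_{\overline{g}}(p,\partial\Sigma))^{-\sqrt{\eta}}\le N(p)\le c(1+\dist_{\overline{g}}(p,\partial\Sigma))^{\sqrt{\eta}}$. Along $\gamma$ I would combine this with the fact that $\dist_{\overline{g}}(\cdot,\partial\Sigma)$ is $1$-Lipschitz, so that $\dist_{\overline{g}}(\gamma(\overline{s}),\partial\Sigma)\le \dist_{\overline{g}}(\gamma(0),\partial\Sigma)+\overline{s}$. Whether $\zeta\ge 0$ (in which case one invokes the lower bound for $N$) or $\zeta<0$ (in which case one invokes the upper bound), this yields a constant $c'>0$, depending on the datum and on $\gamma(0)$, with $N^{\zeta}(\gamma(\overline{s}))\ge c'(1+\overline{s})^{-|\zeta|\sqrt{\eta}}$ for all $\overline{s}\ge 0$.

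Finally I would integrate: $L_{\tilde g}(\gamma)\ge c'\int_0^{\infty}(1+\overline{s})^{-|\zeta|\sqrt{\eta}}\,d\overline{s}$. Since $|\zeta|\le 1/(2\sqrt{\eta})$ the exponent satisfies $|\zeta|\sqrt{\eta}\le 1/2<1$, so the integral diverges and $L_{\tilde g}(\gamma)=\infty$. As every divergent curve then has infinite $\tilde{g}$-length, $(\Sigma;N^{2\zeta}\overline{g})$ is metrically complete. The only place where genuine work is hidden is the polynomial control (\ref{OBS}) of $N$ in terms of $\overline{g}$-distance, which I expect to be the main obstacle; the factor $2$ in the threshold $|\zeta|\le 1/(2\sqrt{\eta})$ is precisely what keeps the exponent strictly below $1$, making the final integral diverge polynomially rather than at the borderline logarithmic rate and leaving room for the bootstrap that will be used to deduce Theorem \ref{COMN2}.
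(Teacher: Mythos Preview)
Your proof is correct and follows essentially the same route as the paper: both use the polynomial bound (\ref{OBS}) together with the $1$-Lipschitz property of the distance function to get $N^{\zeta}(\gamma(\overline{s}))\ge c'(1+\overline{s})^{-|\zeta|\sqrt{\eta}}$, and then integrate, using $|\zeta|\sqrt{\eta}\le 1/2$ to force divergence. The only cosmetic difference is that the paper phrases completeness as ``$\overline{g}$-distance to $\partial\Sigma$ diverging implies $\hat g$-distance to $\partial\Sigma$ diverging'', whereas you use the equivalent divergent-curve criterion and also dispose of the $\partial\Sigma=\emptyset$ case explicitly.
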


\begin{proof} Let us assume that $\Sigma$ is non-compact otherwise there is nothing to prove. Let $\hat{\sg}=N^{2\zeta}\overline{\sg}$. To prove that $(\Sigma;\hat{\sg})$ is complete, we need to show that the following holds: for any sequence of points $p_{i}$ whose $\overline{g}$-distance to $\partial \Sigma$ diverges, then the $\hat{g}$-distance to $\partial \Sigma$ also diverges. Equivalently, we need to prove that for any sequence of curves $\alpha_{i}$ starting at $\partial \Sigma$ and ending at $p_{i}$ we have
\be
\int_{0}^{\overline{s}_{i}}N^{\zeta}(\alpha_{i}(\overline{s}))d\overline{s}\longrightarrow \infty
\ee
where $\overline{s}$ is the $\overline{\sg}$-arc length of $\alpha_{i}$ counting from $\partial \Sigma$. 

From (\ref{OBS}) we get,
\be
N^{\zeta}(p)\geq \frac{c^{-|\zeta|}}{(1+\dist_{\overline{\sg}}(p,\partial \Sigma))^{|\zeta|\sqrt{\eta}}}
\ee
for all $p$. But, $\dist_{\overline{\sg}}(\alpha_{i}(\overline{s}),\partial \Sigma)\leq \overline{s}$ and $|\zeta|\leq 1/(2\sqrt{\eta})$, so we deduce,
\be
N^{\zeta}(\alpha_{i}(\overline{s}))\geq \frac{c^{-|\zeta|}}{(1+\overline{s})^{1/2}}
\ee
Thus,
\be
\int_{0}^{\overline{s}_{i}}{N^{\zeta}(\alpha_{i}(\overline{s}))}d\overline{s} \geq \int_{0}^{\overline{s}_{i}} \frac{c^{-|\zeta|}}{(1+\overline{s})^{1/2}}d\overline{s}\longrightarrow \infty
\ee
as $\overline{s}_{i}\rightarrow \infty$ as wished. 
\end{proof}

We prove now Theorem \ref{COMN2}. 

\begin{proof}[Proof of Theorem \ref{COMN2}] Let $\epsilon\in (-1-\sqrt{2},-1+\sqrt{2})$. Assume $\epsilon\neq 0$ otherwise there is nothing to prove. Let $n>0$ be an integer such that for any $i=0,1,\ldots,n-1$, 
\be\label{SAYS}
\big|\frac{\epsilon}{n}\big|\leq \frac{1}{2\sqrt{\eta(i\epsilon/n)}}
\ee
where $\eta$ is the coefficient in (\ref{CDLEMMAEST}). According to Proposition \ref{PURURU}, the condition (\ref{SAYS}) says that if $\overline{\sg}_{i}=N^{-2(i\epsilon/n)}g$ is complete then so is $\overline{\sg}_{i+1}=N^{-2\epsilon/n}\overline{\sg}_{i}=N^{-2(i+1)\epsilon/n}g$ for any $i=0,1,\ldots,n-1$. Therefore, as $g$ is complete, then so are $\overline{\sg}_{1}$, $\overline{\sg}_{2}$, $\overline{\sg}_{3}$, until $\overline{\sg}_{n}=N^{-2\epsilon}g$ as wished.
\end{proof}

\subsection{Applications}\label{APP}

\subsubsection{Conformal transformations of black hole metrics}\label{CDPL} 
 
Let $(\Sigma;g,N)$ be a static black hole data set. We denote by $\Sigma_{\delta}$ the manifold resulting after removing from $\Sigma$ the $g$-tubular neighbourhood of $\partial \Sigma$ and radius $\delta$, i.e. $\Sigma_{\delta}=\Sigma\setminus B(\partial \Sigma,\delta)$. Let $\delta_{0}$ be small enough that $\partial \Sigma_{\delta}$ is always smooth and isotopic to $\partial \Sigma$ for any $\delta\leq\delta_{0}$.

Given $\epsilon>0$ let $\overline{g}=N^{-2\epsilon}g$. Let $\delta>0$ such that $\delta<\delta_{0}$. The second fundamental form $\overline{\Theta}$ of $\partial \Sigma_{\delta}$, (with respect to $\overline{g}$ and with respect to the inward normal to $\Sigma_{\delta}$), is 
\be\label{GOGO}
\overline{\Theta}=N^{\epsilon}\Theta -\epsilon\frac{\nabla_{n} N}{N^{1-\epsilon}}g
\ee
where $\Theta$ is the second fundamental form of $\partial \Sigma_{\delta}$ with respect to $g$ and $n$ is the inward $g$-unit normal. If we let $\delta\rightarrow 0$, the function $\nabla_{n}N|_{\partial \Sigma_{\delta}}$ converges (on each connected component) to a positive constant (the surface gravity) while $N|_{\partial \Sigma_{\delta}}$ converges to zero. Hence if $\delta$ is small enough, the second term on the right hand side of (\ref{GOGO}) dominates over the first, and the boundary $\partial \Sigma_{\delta}$ is strictly convex with respect to $\overline{\sg}$. 

Combining this discussion with Theorem \ref{COMN2} we deduce the following Proposition that was proved for the first time in \cite{0264-9381-32-19-195001} and that will be used fundamentally in the next section.

\begin{Proposition}\label{PIV}
Let $(\Sigma;g,N)$ be a static black hole data set. Then, for every $0<\epsilon<-1+\sqrt{2}$ there is $0<\delta<\delta_{0}$ such that $(\Sigma_{\delta}; \overline{g}=N^{-2\epsilon}g)$ is metrically complete and $\partial \Sigma_{\delta}$ is strictly convex (with respect to $\overline{g}$ and with respect to the inward normal).
\end{Proposition}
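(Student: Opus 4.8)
The plan is to deduce both conclusions from Theorem \ref{COMN2} applied not to $\Sigma$ itself but to the truncation $\Sigma_\delta$, combined with the transformation rule (\ref{GOGO}). The key point is that, although $N$ vanishes on the horizon $\partial\Sigma$ (so Theorem \ref{COMN2} cannot be applied to $(\Sigma;g,N)$ directly), it is strictly positive on $\Sigma_\delta$. More precisely, for every $0<\delta<\delta_0$ the triple $(\Sigma_\delta;g,N)$ is a static data set satisfying the hypotheses of Theorem \ref{COMN2}: the static equations (\ref{SEQ}) hold on all of $\Sigma^\circ$ and hence on $\Sigma_\delta$; the boundary $\partial\Sigma_\delta$ is compact and lies at $g$-distance $\delta>0$ from $\partial\Sigma=\{N=0\}$, so it sits in $\Sigma^\circ$ and $N>0$ on all of $\Sigma_\delta$, boundary included; and $(\Sigma_\delta;g)$ is metrically complete, being a closed subset of the complete manifold $(\Sigma;g)$ with smooth compact boundary. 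Since the hypothesis $0<\epsilon<-1+\sqrt{2}$ places $\epsilon$ in the admissible interval $(-1-\sqrt{2},-1+\sqrt{2})$, Theorem \ref{COMN2} gives at once that $(\Sigma_\delta;\overline{g}=N^{-2\epsilon}g)$ is metrically complete, and this holds for every such $\delta$. This settles the completeness assertion.

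For the convexity I would choose $\delta$ small and read off the sign of the second fundamental form from (\ref{GOGO}), which I rewrite as $\overline{\Theta}=N^\epsilon\big(\Theta-\epsilon\,(\nabla_n N/N)\,g\big)$, with $n$ the inward $g$-unit normal (pointing into $\Sigma_\delta$, i.e.\ toward increasing $N$). As $\delta\to0$ the surfaces $\partial\Sigma_\delta$ converge smoothly to the compact horizon $\partial\Sigma$, so $\Theta$ stays uniformly bounded, while $\nabla_n N$ converges uniformly on each component to the surface gravity $\kappa=|\nabla N|_{\partial\Sigma}>0$ and $N|_{\partial\Sigma_\delta}\to0$. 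Hence $\nabla_n N/N\to+\infty$ uniformly, the term $-\epsilon\,(\nabla_n N/N)\,g$ (a large negative multiple of $g$) dominates the bounded tensor $\Theta$, and for all sufficiently small $\delta$ the bracket $\Theta-\epsilon\,(\nabla_n N/N)\,g$ is definite; multiplying by $N^\epsilon>0$ preserves definiteness, so $\overline{\Theta}$ is definite and $\partial\Sigma_\delta$ is strictly convex with respect to $\overline{g}$, in the convention used above.

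The one step deserving genuine care — and what I regard as the crux — is the uniformity of the limits on $\partial\Sigma_\delta$ as $\delta\to0$: to produce a single $\delta$ making $\overline{\Theta}$ definite over all of the (possibly disconnected) surface $\partial\Sigma_\delta$ at once, one must check that $\nabla_n N$ tends to $\kappa$ uniformly and that $\Theta$ stays uniformly bounded. This rests on the smoothness of $g$ and $N$ up to the compact boundary together with the standard fact, recalled after the definition of horizon, that $|\nabla N|$ is a nonzero constant on each horizon component. A secondary technical point is the metric completeness of $(\Sigma_\delta;g)$ as a manifold with boundary, which one verifies by noting that an intrinsically Cauchy sequence in $\Sigma_\delta$ is Cauchy in the ambient complete space $(\Sigma;g)$ and that its limit lies in the closed set $\Sigma_\delta$. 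With these in place the Proposition follows directly by combining Theorem \ref{COMN2} with (\ref{GOGO}).
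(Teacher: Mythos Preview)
Your proof is correct and follows the same approach as the paper: the convexity comes from the transformation formula (\ref{GOGO}) together with $\nabla_n N\to\kappa>0$ and $N\to 0$ as $\delta\to 0$, and completeness comes from Theorem \ref{COMN2}. You are in fact a bit more careful than the paper in one respect: you explicitly observe that Theorem \ref{COMN2} must be applied to the truncated data $(\Sigma_\delta;g,N)$, where $N>0$ up to and including the boundary, rather than to $(\Sigma;g,N)$ itself, and you verify the hypotheses (compact boundary, metric completeness of the closed subset) for that truncation; the paper leaves this implicit.
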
 

The Riemannian spaces $(\Sigma_{\delta};\overline{g})$ have a metric, as discussed earlier, that we will denote by $\dist_{\overline{g}}^{\delta}$. 
The strict convexity of the boundaries as well as the metric completeness of  the spaces $(\Sigma_{\delta}; \overline{g})$ imply two basic, albeit important, geometric facts: 
\begin{enumerate}
\item[(i)] The distance $\dist_{\overline{g}}^{\delta}(p,q)$ between two points in $\Sigma_{\delta}$ is always realised by the length of a geodesic segment joining $p$ to $q$, and disjoint from $\partial \Sigma_{\delta}$ except, possibly, at the end-points $p$ and $q$.

\item[(ii)] Given a curve $I$ embedded in $\Sigma_{\delta}$ and with end-points $p$ and $q$, there is always a geodesic segment minimising length in the class of curves embedded in $\Sigma_{\delta}$, isotopic to $I$ and having the same end-points. The minimising segment is disjoint from $\partial \Sigma_{\delta}$ except, possibly, at the end points $p$ and $q$.

\end{enumerate} 

These properties allow us to make analysis as if the manifold $\Sigma_{\delta}$ were in practice boundary-less, and thus to import a series of results from {\it comparison geometry}, as developed for instance in \cite{MR2577473}, without worrying about the existence of the boundary.
\vs

\subsubsection{The structure of infinity}\label{STR}

The following proposition shows that static black hole data sets have only one end and moreover admit simple end cuts. 

\begin{Proposition}\label{KUNO} Let $(\sM;\sg,N)$ be a static black hole data set. Then $\sM$ has only one end. Moreover $(\Sigma;\sg)$ admits a simple end cut.
\end{Proposition}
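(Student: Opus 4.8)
The plan is to exploit the conformal metric $\overline{g}=N^{-2\epsilon}g$ with a small $\epsilon>0$, for which Proposition \ref{PIV} guarantees that $(\Sigma_\delta;\overline{g})$ is metrically complete and has strictly convex boundary, and then to run a Cheeger--Gromoll-type splitting argument in the Bakry--\'Emery setting. Since $\alpha=(1-2\epsilon-\epsilon^2)/(1+\epsilon)^2>0$ for $\epsilon$ in the admissible range, Proposition \ref{FELIZ} tells us $\overline{Ric}^\alpha_f=0$, hence in particular $\overline{Ric}^\alpha_f\geq 0$ with $\alpha>0$; as remarked after Proposition \ref{PIV}, the strict convexity of $\partial\Sigma_\delta$ lets us treat $\Sigma_\delta$ as if it were boundaryless and import comparison-geometry results. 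The key tool is the generalized splitting theorem \`a la Bakry--\'Emery (the statement that a complete manifold with $Ric^\alpha_f\geq 0$, $\alpha>0$, containing a line splits isometrically as a product, cf. \cite{MR2577473}).

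First I would argue by contradiction: suppose $\Sigma$ has at least two ends. Because $\partial\Sigma$ is compact, removing a large compact set disconnects $\Sigma$ into at least two unbounded pieces. I would then produce a \emph{line} in $(\Sigma_\delta;\overline{g})$ by taking minimizing geodesic segments $\gamma_i$ joining points $p_i$ and $q_i$ diverging into two distinct ends, passing through a fixed compact region; by property (i) after Proposition \ref{PIV} these segments avoid $\partial\Sigma_\delta$ and are genuine minimizing geodesics, and a standard limiting/Arzel\`a--Ascoli argument yields a complete minimizing geodesic line $\gamma:\mathbb{R}\to\Sigma_\delta^\circ$. Here the strict convexity of the boundary is essential: it prevents the limiting line from being pushed into or along $\partial\Sigma_\delta$, so the line lives in the interior where the Bakry--\'Emery splitting theorem applies.

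Applying the splitting theorem then gives an isometric splitting $(\Sigma;\overline{g})\cong(\mathbb{R}\times S,dt^2+g_S)$ with $f$ a function of the line parameter alone, which forces $\partial\Sigma=\emptyset$ (a product $\mathbb{R}\times S$ with $S$ itself boundaryless in the splitting), contradicting that $\partial\Sigma=\{N=0\}\neq\emptyset$ is a nonempty compact horizon. This contradiction shows $\Sigma$ has exactly one end. I expect the main obstacle to be the careful verification that the line obtained in the limit genuinely stays in the interior $\Sigma_\delta^\circ$ and is minimizing for the \emph{complete} metric rather than merely a relative minimizer among boundary-avoiding curves; reconciling the convex-boundary picture with the hypotheses of the abstract splitting theorem is the delicate technical point, and one must also check that the rescaling/normalization used when invoking the curvature decay of Lemma \ref{CDLEMMA} does not interfere with the limiting construction.

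Finally, for the existence of a \emph{simple} end cut, once $\Sigma$ is known to have a single end I would invoke the construction of partition cuts from subsection \ref{SAP}: a partition cut always admits an end cut when there is only one end (as noted in the text immediately after the definition of end cuts), and one reduces to $l_j=1$ for each $j$ by discarding, at each scale, all but one separating surface whose removal still isolates every horizon component in a bounded component. Since the end is connected, such a single separating surface $\mathcal{S}_j$ exists at each level $j$, giving the desired simple end cut $\{\mathcal{S}_j\}$.
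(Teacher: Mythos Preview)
Your argument for the one-end part takes a genuinely different route from the paper. You aim for a direct topological contradiction: a global Bakry--\'Emery splitting $(\Sigma_\delta;\overline g)\cong(\mathbb{R}\times S,dt^2+g_S)$ would force $\partial\Sigma_\delta=\mathbb{R}\times\partial S$, which cannot be compact unless $\partial S=\emptyset$. (Two small corrections: you should write $\Sigma_\delta$, not $\Sigma$, since $\overline g$ is singular on $\partial\Sigma$; and in the $\alpha>0$ splitting $f$ is constant \emph{along} the line, i.e.\ a function on $S$, not of the line parameter.) The paper instead never asserts a global product structure for $\overline g$. It extracts from the proof of the splitting theorem only the local data: a Busemann function $b^+_\epsilon$ with totally geodesic level sets, $\overline{Ric}(\nabla b^+_\epsilon,\,\cdot\,)=0$, and $\langle\nabla b^+_\epsilon,\nabla N\rangle_{\overline g}=0$. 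It then lets $\epsilon\to 0$ so that these properties pass to the \emph{original} metric $g$ on a small neighbourhood, and uses Gauss--Codazzi together with $R_g=0$ to conclude local flatness of $g$; analyticity then forces the data to be a Boost, which has one end. Your approach is more economical if one is willing to invoke a global splitting theorem in the convex-boundary Bakry--\'Emery setting; the paper's approach avoids that dependence by needing only the local Busemann analysis and trading it for the $\epsilon\to 0$ limit plus analyticity.

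Your argument for the existence of a \emph{simple} end cut has a real gap. You assert that ``since the end is connected, a single separating surface $\mathcal S_j$ exists at each level,'' but this is exactly what needs to be proved: by the minimality in the definition of end cut, when $l_j>1$ no single $\mathcal S_{jk_l}$ by itself separates $\partial\Sigma$ from infinity. The paper handles this with a double-cover trick: if some $\mathcal S_{j1}$ is non-separating (which is what $l_j>1$ entails), cut $\Sigma$ along it to get a connected manifold $\Sigma'$ with two new boundary components, then glue two copies of $\Sigma'$ to obtain a static data set with \emph{two} ends, and rerun the one-end argument to derive a contradiction. This step is not supplied by anything in your outline and is the key idea you are missing.
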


\begin{proof} We work with the manifolds $(\Sigma_{\delta},\overline{g}=N^{-2\epsilon}g)$ from Proposition \ref{PIV}, with $0<\epsilon<-1+\sqrt{2}$ and $\delta=\delta(\epsilon)\leq \delta_{0}$. We argue first in a fixed $(\Sigma_{\delta}; \overline{g})$ and then let $\epsilon\rightarrow 0$. If $i_{\sM}>1$, i.e. if $\Sigma$ has at least two ends, then $\Sigma_{\delta}$ has also at least two ends. Hence $\Sigma_{\delta}$, (which has convex boundary) contains a line diverging through two of them. The presence of a line is relevant because,  even having $\partial\Sigma_{\delta} \neq \emptyset$, the geometry of $(\Sigma_{\delta}; \overline{g},N)$ is such (recall the discussion in Section \ref{CDPL}) that the {\it Splitting Theorem} as proved in \cite{MR2577473} applies 
\footnote{Theorem 6.1 in \cite{MR2577473} is stated for spaces with $Ric^{0}_{f}\geq 0$ and $f$ bounded. The boundedness of $f$ is required to have a Laplacian comparison for distance functions ($\S$ \cite{MR2577473} Theorem 1.1). No such condition on $f$ (hence on $N$, because $f=-(1+\epsilon)\ln N$) is required in our case, as we have $\overline{Ric}^{0}_{f}=\alpha\nabla f\nabla f$ with $\alpha>0$ and a Laplacian comparison holds without further assumptions ($\S$ \cite{MR2577473}, Theorem A.1).}.
More precisely, repeating line by line the proof of Theorem 6.1 in \cite{MR2577473}, one concludes that (see comments below after \ref{aWW}, \ref{bWW} and \ref{cWW}),  
\begin{enumerate}[labelindent=\parindent, leftmargin=*, label={\rm (\alph*)}, widest=a, align=left]
\item\label{aWW} there is a smooth Busemann function $b^{+}_{\epsilon}$, ($b^{+}$ in the notation of \cite{MR2577473}), with $|\nabla b^{+}_{\epsilon}|_{\overline{g}}=1$ and whose level sets are totally geodesic,
\item\label{bWW} the Ricci tensor is zero in the normal direction to the level sets, that is 
\be
\overline{Ric}(\nabla b^{+}_{\epsilon}, - )=0,
\ee
\item\label{cWW} $N$ is constant in the normal directions to the level sets, that is $\langle \nabla b_{\epsilon}^{+}, \nabla N\rangle_{\overline{g}}=0$.  
\end{enumerate}
The item \ref{aWW} is what is proved in Theorem 6.1 of \cite{MR2577473} and requires no comment. The items \ref{bWW} and \ref{cWW} follow instead from formula (6.11) in \cite{MR2577473} after recalling that in our case we have $\overline{Ric}^{0}_{f}=\alpha\nabla f\nabla f$, with $f=-(1+\epsilon)\ln N$ and $\alpha>0$. 

Of course \ref{aWW} implies that $\overline{g}$ locally splits. Namely, defining a coordinate $x$ by $x=b^{+}$, one can locally write $\overline{g}=dx^{2}+\overline{h}$, where $\overline{h}$ is the metric inherited from $\overline{g}$ on the level sets of $x$, that (under a natural identification) does not depend on $x$. 

The conclusions \ref{aWW},  \ref{bWW} and \ref{cWW} imply a contradiction as follows. Fix a point $p$ in $\Sigma_{\delta_{0}}^{\circ}$ and take a sequence $\epsilon_{i}\rightarrow 0$. Then, in a small but fixed neighbourhood $\mathcal{U}$ of $p$, the sequence $b^{+}_{\epsilon_{i}}$ sub-converges to a limit function $b^{+}_{0}$, with the same properties \ref{aWW}, \ref{bWW}, \ref{cWW} as each $b^{+}_{\epsilon_{i}}$ but now on $(\mathcal{U}; g,N)$, \footnote{The existence of the limit is easy to see because $|\nabla b^{+}_{\epsilon}|_{\overline{g}}=1$ and the level sets of $b^{+}_{\epsilon}$ are totally geodesic, (for every $\epsilon$). At every point the level set is just defined by geodesics perpendicular to $\nabla b^{+}_{\epsilon}$}. Hence $(\mathcal{U}; g)$ also splits. We claim that the Gaussian curvature $\gcur$ of the level sets of $b^{+}_{0}$ in $\mathcal{U}$ is zero. Indeed, as: (i) the level sets of $b_{0}^{+}$ are totally geodesic by \ref{aWW}, (ii) $Ric(\nabla b^{+}_{0},\nabla b^{+}_{0})=0$ by \ref{bWW}, and (iii) the scalar curvature $R$ of $g$ is zero by the static equations, then the Gauss-Codazzy equations yield $\gcur=0$. As $(\mathcal{U}; g)$ is flat then the static solution is flat everywhere by analyticity. The only flat static black hole data set with compact boundary is the Boost. As Boosts have only one end we reach a contradiction.  Hence $i_{\Sigma}=1$.

Let us prove now that $(\Sigma;g)$ admits simple cuts. Let $\{\mathcal{S}_{jk},j=0,1,2,\ldots,k=1,\ldots,k_{j}\}$ be an end cut. Suppose that $k_{j}>1$ for some $j\geq 0$. If we cut $\sM$ along $\mathcal{S}_{j1}$ we obtain a connected manifold, say $\sM'$, with two new boundary components, say $\mathcal{S}'_{1}$ and $\mathcal{S}'_{2}$, both of which are copies of $\mathcal{S}_{j1}$ (if cutting $\sM$ along $\mathcal{S}_{j1}$ results in two connected components then $k_{j}=1$ because of how simple cuts are constructed). Consider another copy of $\sM'$, denoted by $\sM''$ and denote the corresponding new boundary components as $\mathcal{S}''_{1}$ and $\mathcal{S}''_{2}$. By gluing $\mathcal{S}_{1}'$ to $\mathcal{S}''_{2}$ and $\mathcal{S}'_{2}$ to $\mathcal{S}''_{1}$ we obtain a static solution (a double cover of the original) with two ends, and one can proceed as earlier to obtain a contradiction. 
\end{proof}

\subsubsection{Horizons's types and properties}\label{HTT}

The following Proposition, about the structure of horizons, uses the completeness at infinity of $\hg$ and a pair of results due to Galloway \cite{4b6cb19bc94d4cf485e58571e3062f77}, \cite{MR1201655}.

\begin{Proposition}\label{SOFOR} Let $(\sM; g, N)$ be a static black hole data set. Then, either
\begin{enumerate}[labelindent=\parindent, leftmargin=*, label={\rm (\roman*)}, widest=a, align=left]
\item $(\sM; g, N)$ is a Boost and therefore $\partial \sM$ is a totally geodesic flat torus, or,
\item every component of $\partial \sM$ is a totally geodesic, weakly outermost, minimal sphere.
\end{enumerate}
\end{Proposition}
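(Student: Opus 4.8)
The plan is to first identify the horizons as totally geodesic minimal surfaces directly from the field equations, then to upgrade this to the weakly outermost property using the harmonic presentation, and finally to read off the topological dichotomy from the rigidity theorems of Galloway and Schoen. I would start by recording that each horizon is totally geodesic: on $\partial\Sigma=\{N=0\}$ the static equation $\nabla\nabla N=N\,Ric$ gives $\nabla\nabla N=0$, and since $\nabla N$ is normal to the level set $\{N=0\}$ and nonzero (the surface gravity), the identity $\mathrm{Hess}\,N(X,Y)=|\nabla N|\,\Theta(X,Y)$ for $X,Y$ tangent forces $\Theta\equiv 0$. Because the slice $\Sigma$ is time-symmetric, the outward null expansion of a boundary component equals its mean curvature, so each horizon is a marginally outer trapped minimal surface. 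I would also keep at hand three global facts: by Theorem \ref{COMN2} with $\epsilon=-1$ the harmonic metric $\hg=N^{2}g$ is complete away from $\partial\Sigma$, it satisfies $Ric_{\hg}=2\nabla U\nabla U\ge 0$ with $U=\ln N$, and by Proposition \ref{KUNO} the manifold has a single end.

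The heart of the matter is the weakly outermost property. I would argue by contradiction: if some horizon were not weakly outermost there would be an embedded surface $S$, homologous to $\partial\Sigma$, with negative outward mean curvature, i.e. an outer-trapped barrier. Using $\partial\Sigma$ as an inner barrier and producing a far-out outer barrier over the end -- this is where $Ric_{\hg}\ge 0$, the completeness of $\hg$ at infinity, and the generalized decay of Lemma \ref{CDLEMMA} enter, to ensure that large end cuts act as barriers and that no minimizing configuration can slide off to infinity -- I would invoke the existence theory for outermost marginally outer trapped surfaces (equivalently, since the data is time-symmetric, outer area minimization) to produce a smooth, stable, weakly outermost minimal surface $S_*$ strictly enclosing all horizons and lying in the interior $\{N>0\}$.

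I would then apply the Galloway--Schoen rigidity theorem \cite{MR2238889} and Galloway \cite{4b6cb19bc94d4cf485e58571e3062f77} to the weakly outermost minimal surface in hand. Since the ambient is vacuum ($Ric=0$, so the dominant energy condition holds trivially), such a surface is either of positive Yamabe type -- hence a two-sphere in dimension three -- or falls in the borderline rigid case: a flat two-torus with a local isometric product neighborhood $dx^{2}+h$, $h$ flat, $Ric(\partial_x,\cdot)=0$ and, by the same computation carried out in the proof of Proposition \ref{KUNO}, $N$ constant along $\partial_x$. In the rigid case the Gauss equation together with $R=0$ forces this neighborhood to be flat, so by analyticity of static vacuum metrics the whole data is flat; the only metrically complete flat static black hole data set with compact boundary is the Boost, which is alternative (i), and there the horizon is manifestly weakly outermost.

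Finally, when the data is not a Boost, I would exclude the possibility that $S_*$ is a sphere strictly exterior to the horizons, which is precisely what yields the weakly outermost property of the horizons themselves. The mechanism I would use is that $Ric_{\hg}(\nu,\nu)=2\langle\nabla U,\nu\rangle^{2}$, being strictly positive in the direction $\nabla U$ wherever $U$ is nonconstant, obstructs a stable two-sided minimal surface transverse to $\nabla U$ via the second variation. Reconciling this with the fact that $S_*$ is minimal for $g$ rather than for $\hg$ -- the conformal factor $N^{2}$ does not preserve minimality -- is the main obstacle, and is where the interplay between the two metrics must be handled with care. Once the exterior spherical case is ruled out, no outer-trapped surface homologous to $\partial\Sigma$ can exist, so each horizon is itself the weakly outermost minimal surface and, by the Galloway--Schoen alternative, a totally geodesic two-sphere, giving (ii).
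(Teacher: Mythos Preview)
Your approach diverges from the paper's, and it has two genuine gaps that are not mere technicalities.

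First, the production of the outermost minimal surface $S_{*}$ requires an outer barrier, and your justification (``large end cuts act as barriers'') is not available at this point of the argument. You invoke $Ric_{\hg}\geq 0$ and the decay of Lemma \ref{CDLEMMA}, but neither yields a sign for the outward mean curvature of the end cuts with respect to $g$; that sign is only known after the asymptotic analysis in Part II (and in fact fails for the Boost asymptotic). Without a barrier you cannot prevent a minimizing sequence from escaping to infinity, so the existence of $S_{*}$ is not established.

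Second, and more seriously, even granting $S_{*}$, your final step is the whole proof and you leave it open. Galloway--Schoen applied to $S_{*}$ tells you only that $S_{*}$ is a sphere (outside the Boost case); it does not forbid a stable minimal two-sphere sitting in the interior of a vacuum static data set. Your proposed mechanism via $Ric_{\hg}(\nu,\nu)=2\langle\nabla U,\nu\rangle^{2}$ and second variation does not apply, as you yourself note, because $S_{*}$ is minimal for $g$, not for $\hg$, and the conformal factor $N^{2}$ changes minimality. No substitute argument is offered. So the contradiction is never closed and the weakly outermost property is not obtained.

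The paper avoids both obstacles with a completely different, and much more direct, mechanism. It does not try to manufacture an outermost surface at all. Assuming an outer-trapped $\mathcal{S}$ exists, it takes \emph{optical} geodesics $\gamma_{j}$ (for the metric $N^{-2}g$) from $\mathcal{S}$ to the end cuts $\mathcal{S}_{j}$ and runs a Raychaudhuri-type inequality due to Galloway for the quantity
\[
F(s)=\overline{\theta}(\gamma_{j}(s))+\frac{2}{N^{2}}\frac{dN}{ds},\qquad F'\leq -\frac{N}{2}F^{2},\qquad F(0)=\frac{\theta_{\mathcal{S}}}{N}\leq c<0,
\]
so that $F$ blows up before $\int_{0}^{s_{j}}N\,ds$ reaches $-2/c$. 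The contradiction then comes from the single global input you correctly identified, the completeness of $\hg=N^{2}g$ at infinity (Theorem \ref{COMN2}), which forces $\int_{0}^{s_{j}}N\,ds\to\infty$. No outer barrier and no exclusion of interior minimal spheres is needed; once every horizon is shown weakly outermost, the sphere/torus dichotomy is read off from Galloway's rigidity exactly as you propose.
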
 

\begin{proof} The idea is to prove that every component $H$ of $\partial \Sigma$ is a weakly outermost. Then, it is direct from Theorem 1.1 and 1.2 in \cite{4b6cb19bc94d4cf485e58571e3062f77} that either $H$ is a sphere or is a torus and if it is a torus then the whole space is a Boost. So let us prove that every component is weakly outermost.  

Let $\{H_{1},\ldots,H_{h}\}$, $h\geq 1$, be the set of horizons, i.e. the connected components of $\partial \Sigma$. Assume that there is an embedded orientable surface $\mathcal{S}$, homologous to one of the $H$'s, (say $H_{1}$), and with outer-mean curvature $\theta_{\mathcal{S}}$ strictly negative. For reference below define the negative constant $c$ as
\be
c=\sup\bigg\{\frac{\theta_{\mathcal{S}}(q)}{N(q)}: q\in \mathcal{S}\bigg\}
\ee 

Let $\{\mathcal{S}_{j},j=j_{0},j_{1},\ldots\}$ be a simple end cut of $(\Sigma;g)$ (Proposition \ref{KUNO}). For each $j$, let $\Omega(\partial \Sigma,\mathcal{S}_{j})$ be the closure of the connected component of $\Sigma\setminus \mathcal{S}_{j}$ containing $\partial \Sigma$. Let $\mathcal{U}$ be the closed region enclosed by $H_{1}$ and $\mathcal{S}$ and assume that $j_{0}$ is large enough that $\mathcal{S}_{j}\cap \mathcal{U}=\emptyset$ for all $j\geq j_{0}$. For every $j\geq j_{0}$ let $\mathcal{M}_{j}$ be the closed region enclosed by $\mathcal{S},H_{2},\ldots,H_{h}$ and $\mathcal{S}_{j}$, that is $\mathcal{M}_{j}=\Omega(\partial \Sigma, \mathcal{S}_{j})\setminus \mathcal{U}^{\circ}$. Finally let 
\be
\hat{\mathcal{M}}_{j}=\mathcal{M}_{j}\setminus (H_{2}\cup\ldots\cup H_{h})
\ee
and note that now $\partial \hat{\mathcal{M}}_{j}=\mathcal{S}\cup \mathcal{S}_{j}$. On $\hat{\mathcal{M}}_{j}$ consider the optical metric $\overline{g}=N^{-2}g$. The Riemannian space $(\hat{\mathcal{M}}_{j};\overline{g})$ is metrically complete, (roughly speaking the horizons $H_{i},i\geq 2$ have been blown to infinity). 

Now, for every $j\geq j_{0}$ let $\gamma_{j}$ be the $\overline{g}$-geodesic segment inside $\hat{\mathcal{M}}_{j}$, realising the $\overline{g}$-distance between $\mathcal{S}$ and $\mathcal{S}_{j}$. 
The segments $\gamma_{j}$ are perpendicular to $\mathcal{S}$. Also, as they are length-minimising the $\overline{g}$-expansion $\overline{\theta}$ of the congruence of $\overline{g}$-geodesics emanating perpendicularly from $\mathcal{S}$, remains finite all along $\gamma_{j}$. Let $s\in [0,s_{j}]$ be the $g$-arc-length of $\gamma_{j}$ measured from $\mathcal{S}$. Note that $s$ is not the arc-length with respect to $\overline{g}$, that would be natural. We are going to use this parameterisation of $\gamma_{j}$ below. Observe that $s_{j}\rightarrow \infty$ as $j\rightarrow \infty$. 

Along $\gamma_{j}(s)$ let 
\be\label{RRR}
F(s)=\overline{\theta}(\gamma_{j}(s))+\frac{2}{N^{2}(\gamma_{j}(s))}\frac{d N(\gamma_{j}(s))}{ds}
\ee
Then, as shown by Galloway \cite{MR1201655} (see also \cite{MR3077927}), the function $F$ satisfies the following differential inequality
\be
\frac{dF}{ds}\leq -\frac{N}{2}F^{2}
\ee
Now, a simple computation shows that $F(0)=\theta(0)/N(0)\leq c<0$. But from (\ref{RRR}) it is easily deduced that if 
\be
\int_{0}^{s_{j}} N(\gamma_{j}(s))ds>-\frac{2}{c}
\ee
then there is $s^{*}\in (0,s_{j})$ such that $F(s^{*})=-\infty$, thus $\overline{\theta}(s^{*})=-\infty$ and the $\gamma_{j}$ would not be $\overline{g}$-length minimising. Thus, a contradiction is reached if we prove that $\int_{0}^{s_{j}} N(\gamma_{j}(s))ds\rightarrow \infty$. But his follows from the completeness of the metric $\hg=N^{2}g$ from Theorem \ref{COMN2}.

\end{proof}

\subsubsection{The asymptotic of isolated systems.}\label{TAIS}

Theorem \ref{COMN2} shows that if $N>0$ and $\partial \Sigma$ is compact then $(\Sigma; \hg=N^{2}g)$ is metrically complete. On the other hand it was proved in \cite{MR3233266}, \cite{MR3233267}, that if $\Sigma$ is diffeomorphic to $\mathbb{R}^{3}$ minus a ball and $\hg$ is complete then the space $(\Sigma;g,N)$ is asymptotically flat. Combining these two results we obtain that: if $\Sigma$ minus a compact set $K$ is diffeomorphic to $\mathbb{R}^{3}$ minus a closed ball then the data set $(\Sigma; g,N)$ is asymptotically flat. Asymptotic flatness is thus characterised only by the asymptotic topology of $\Sigma$. 

This fact has physically interesting consequences. Following physical intuition define a {\it static isolated system} as a static space-time $(\mathbb{R}\times\Sigma; -N^{2}dt^{2}+g)$, ($\partial \Sigma=\emptyset$ and $(\Sigma; g)$ metrically complete), for which there is a set $K\subset \Sigma$ such that $\Sigma\setminus K$ is diffeomorphic to $\mathbb{R}^{3}$ minus a closed ball and such that the region $\mathbb{R}\times (\Sigma\setminus K)$ is vacuum (i.e. matter lies only in $\mathbb{R}\times K$). The most obvious example of static isolated system one can think of is that of body like a planet or a star. Then, using what we explained in the previous paragraph, static isolated systems are always asymptotically flat. This conclusion was reached in \cite{0264-9381-32-19-195001} but requiring as part of the definition of static isolated system that the space-time is null geodesically complete at infinity. What we are showing here is that this condition is indeed unnecessary and the completeness of the hypersurface $(\Sigma; g)$ is sufficient.   

\section{Global properties of the lapse}\label{GLOBP}

We aim to prove that the lapse $N$ of any black hole data set is bounded away from zero at infinity, namely that there is $c>0$ such that for any divergent sequence $p_{n}$ we have $\lim N(p_{n})\geq c$.
\begin{Theorem}\label{BNFB} Let $(\sM;g,N)$ be a static black hole data set. Then, $N$ is bounded away from zero at infinity.
\end{Theorem}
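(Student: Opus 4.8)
The plan is to run a maximum-principle argument for the $g$-harmonic function $N$ (recall $\Delta N=0$), using as inputs only two facts already at our disposal: the one-end structure with simple end cuts from Proposition \ref{KUNO}, and Anderson's Harnack estimate from Proposition \ref{MAXMINU11}. The goal is to produce a uniform positive lower bound for $N$ on a neighbourhood of infinity; since $\partial\Sigma$ is compact and the regions bounded by the cuts exhaust $\Sigma$, this is exactly the assertion that $N$ is bounded away from zero at infinity.

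First I would fix a simple end cut $\{\mathcal{S}_{j}\}_{j\geq j_{0}}$ (Proposition \ref{KUNO}). Each $\mathcal{S}_{j}$ is connected, is contained in the annulus $\mathcal{A}(2^{1+2j},2^{2+2j})$, and separates $\partial\Sigma$ from infinity, bounding a compact region $\Omega_{j}:=\Omega(\partial\Sigma,\mathcal{S}_{j})$ with $\partial\Omega_{j}=\partial\Sigma\cup\mathcal{S}_{j}$. Fix also an interior basepoint $q_{0}$ with $N(q_{0})>0$. Because $N$ is $g$-harmonic and vanishes on $\partial\Sigma$, the maximum principle on $\Omega_{j}$ gives $\max_{\Omega_{j}}N=\max_{\mathcal{S}_{j}}N$, so that $\max_{\mathcal{S}_{j}}N\geq N(q_{0})$ for every $j$ large enough that $q_{0}\in\Omega_{j}$. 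This already records that $N$ is at least $N(q_{0})$ \emph{somewhere} on each cut; the essential step is to upgrade it to a bound that is uniform \emph{over} each cut.

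That upgrade is where the Harnack estimate carries the weight. Writing $r=2^{2j}$, the annulus $\mathcal{A}(2^{1+2j},2^{2+2j})$ is the scaled annulus $\mathcal{A}(2r,4r)$, so applying Proposition \ref{MAXMINU11} (in the scaled form established in its proof) with $a=2$, $b=4$ to the connected set $\mathcal{S}_{j}$ yields, for all large $j$, the oscillation bound $\max_{\mathcal{S}_{j}}N\leq\eta\min_{\mathcal{S}_{j}}N$. Combined with the previous paragraph this produces the uniform cut estimate
\be
\min_{\mathcal{S}_{j}}N\ \geq\ \frac{N(q_{0})}{\eta}\ =:\ c\ >\ 0
\ee
valid for all $j\geq j_{1}$, with $c$ independent of $j$. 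I would then propagate this into the regions between consecutive cuts: for $j\geq j_{1}$ the compact region $\mathcal{U}_{j,j+1}$ has boundary $\mathcal{S}_{j}\cup\mathcal{S}_{j+1}$ and contains no horizon, so the minimum principle for the harmonic function $N$ gives $\min_{\mathcal{U}_{j,j+1}}N=\min\{\min_{\mathcal{S}_{j}}N,\min_{\mathcal{S}_{j+1}}N\}\geq c$. Since $\bigcup_{j\geq j_{1}}\mathcal{U}_{j,j+1}=\Sigma\setminus\Omega_{j_{1}}^{\circ}$ and $\Omega_{j_{1}}$ is compact, every divergent sequence eventually lies in this outer region, whence $\liminf N(p_{n})\geq c>0$, which is the claim.

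I expect the main obstacle to be conceptual rather than computational: it is precisely the passage from the one-sided maximum-principle information ($N$ is large somewhere on $\mathcal{S}_{j}$) to a two-sided uniform bound ($N$ is large everywhere on $\mathcal{S}_{j}$), and this is exactly what the Harnack estimate supplies. The remaining care is geometric bookkeeping around the cuts — one must use that each $\mathcal{S}_{j}$ is connected (so that Proposition \ref{MAXMINU11} applies to it as a single set lying in one annular component) and that it genuinely separates $\partial\Sigma$ from the end (so that $\Omega_{j}$ and $\mathcal{U}_{j,j+1}$ have the stated boundaries) — both of which are exactly the content of the existence of simple end cuts in Proposition \ref{KUNO}. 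It is worth noting that this argument is the rigorous strengthening of the elementary remark that a Kasner asymptotic with $\gamma<0$ is impossible: the maximum principle alone forbids $N\to 0$ \emph{uniformly} at infinity, while the Harnack estimate is what rules out $\liminf=0$ along any divergent sequence.
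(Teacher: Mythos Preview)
Your proof is correct and is, in dimension three, genuinely simpler than the paper's. The paper deliberately refrains from invoking Proposition \ref{KUNO} here: as the text says right before Proposition \ref{PT3}, the argument is set up to remain valid in higher dimensions, where the splitting argument behind Proposition \ref{KUNO} (and hence the existence of \emph{simple} end cuts) is not available. Accordingly the paper develops Propositions \ref{PT1}--\ref{PT3} to compare values of $N$ across \emph{different} cut surfaces of a general (non-simple) end cut, and then runs a contradiction argument: if $N(p_{l})\to 0$ along some divergent sequence, this is pushed via the minimum principle and Harnack to $\max\{N;\mathcal{S}_{j^{l}}\}\to 0$, and then by the maximum principle between far-apart cuts to uniform decay of $N$ at infinity, contradicting $N=0$ on $\partial\Sigma$.

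You, by contrast, use Proposition \ref{KUNO} head-on. With a simple end cut each $\mathcal{S}_{j}$ is a \emph{single connected} surface sitting in $\mathcal{A}(2\cdot 2^{2j},4\cdot 2^{2j})$, so the Harnack estimate of Proposition \ref{MAXMINU11} applies to it directly and yields $\max_{\mathcal{S}_{j}}N\leq \eta\min_{\mathcal{S}_{j}}N$ with a uniform $\eta$. Combined with the maximum principle on $\Omega_{j}$ (whose boundary is $\partial\Sigma\cup\mathcal{S}_{j}$, with $N\equiv 0$ on $\partial\Sigma$), this gives $\min_{\mathcal{S}_{j}}N\geq N(q_{0})/\eta$ uniformly in $j$, and the minimum principle between consecutive cuts finishes the job. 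Your care about $\mathcal{S}_{j}$ being connected and genuinely separating is exactly right, and is precisely where the use of Proposition \ref{KUNO} enters. The trade-off is clear: your argument is shorter but tied to the three-dimensional input of Proposition \ref{KUNO}; the paper's is longer but designed to be dimension-agnostic.
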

The proof of this theorem will follow after some propositions that we state and prove below.
\begin{Proposition}\label{PT1} Let $(\Sigma_{\delta};\overline{g})$ be a space as in Proposition \ref{PIV}, with $0<\epsilon<1/4$. Let $p$ and $q$ be two different points in $\Sigma_{\delta}$ and let $\gamma:[0,L]\rightarrow \sM_{\delta}$ be a $\overline{g}$-geodesic (parameterised with the arc-length $\overline{s}$) starting at $p$ and ending at $q$ and minimising the $\overline{g}$-length in its own isotopy class (with fixed end points). Then, for any $0<s<t<L$ we have
\be\label{BFN}
-\sqrt{50\bigg[\frac{(t-s)}{s}+\frac{(t-s)}{L-t}\bigg]}\leq \ln \bigg[\frac{N(\gamma(t))}{N(\gamma(s))}\bigg]\leq \sqrt{50\bigg[\frac{(t-s)}{s}+\frac{(t-s)}{L-t}\bigg]}
\ee 
\end{Proposition}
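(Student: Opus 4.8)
The plan is to reduce the two-sided bound to an $L^{2}$ estimate for $|\nabla \ln N|_{\overline{g}}$ along $\gamma$ and then invoke Cauchy--Schwarz, the square-root in (\ref{BFN}) being the tell-tale sign of such an argument. Since $\gamma$ is parameterised by $\overline{g}$-arc-length, so that $|\gamma'|_{\overline{g}}=1$, the fundamental theorem of calculus gives
\be
\ln\frac{N(\gamma(t))}{N(\gamma(s))}=\int_{s}^{t}\frac{d}{d\overline{s}}\ln N(\gamma(\overline{s}))\,d\overline{s},
\ee
and $\big|\tfrac{d}{d\overline{s}}\ln N(\gamma(\overline{s}))\big|\le |\nabla \ln N|_{\overline{g}}(\gamma(\overline{s}))$. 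By Cauchy--Schwarz the resulting integral is at most $\sqrt{t-s}\,\big(\int_{s}^{t}|\nabla \ln N|_{\overline{g}}^{2}\,d\overline{s}\big)^{1/2}$, so everything reduces to proving the $L^{2}$ bound
\be\label{PLANL2}
\int_{s}^{t}|\nabla \ln N|_{\overline{g}}^{2}(\gamma(\overline{s}))\,d\overline{s}\le \eta\Big(\frac{1}{s}+\frac{1}{L-t}\Big),
\ee
with $\eta\le 50$ on the range $0<\epsilon<1/4$; feeding (\ref{PLANL2}) into the Cauchy--Schwarz estimate produces exactly the right-hand side of (\ref{BFN}).

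For (\ref{PLANL2}) I would use the generalised Anderson decay estimate of Lemma \ref{CDLEMMA}, which on the complete space $(\Sigma_{\delta};\overline{g})$ of Proposition \ref{PIV} controls $|\nabla \ln N|^{2}_{\overline{g}}$ at a point by the inverse square of an admissible radius about it. The key geometric observation is that, because $\gamma$ minimises $\overline{g}$-length in its isotopy class, its sub-segments $\gamma|_{[0,\overline{s}]}$ and $\gamma|_{[\overline{s},L]}$ are themselves minimising and realise the $\overline{g}$-distances $\overline{s}$ and $L-\overline{s}$ from $\gamma(\overline{s})$ to the endpoints $p$ and $q$. Running the maximum-principle argument of Lemma \ref{LEMMAME} on balls about $\gamma(\overline{s})$ of radius $\overline{s}$ (toward $p$) and $L-\overline{s}$ (toward $q$) then yields
\be
|\nabla \ln N|^{2}_{\overline{g}}(\gamma(\overline{s}))\le \eta\,\min\Big(\frac{1}{\overline{s}^{\,2}},\frac{1}{(L-\overline{s})^{2}}\Big).
\ee
Integrating over $[s,t]$, splitting the interval at $\overline{s}=L/2$ and using $\int \overline{s}^{-2}$ near the $p$-end and $\int (L-\overline{s})^{-2}$ near the $q$-end, gives precisely (\ref{PLANL2}), the cross terms only helping. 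Collecting the constants produced by Lemma \ref{CDLEMMA} over the admissible range of $\epsilon$ fixes the numerical factor $50$.

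The hard part is justifying the localised gradient estimate with the radius taken to be the distance \emph{along $\gamma$ to the endpoints} rather than the distance to $\partial\Sigma_{\delta}$. A naive application of Lemma \ref{CDLEMMA} only bounds $|\nabla\ln N|^{2}_{\overline{g}}(\gamma(\overline{s}))$ by $\eta/\dist_{\overline{g}}^{2}(\gamma(\overline{s}),\partial\Sigma_{\delta})$, and this distance is not controlled below by $\min(\overline{s},L-\overline{s})$ for a curve that happens to run nearly parallel to the boundary. This is exactly where the strict convexity of $\partial\Sigma_{\delta}$ (Proposition \ref{PIV}) enters: convexity both prevents an isotopy-minimising segment from hugging $\partial\Sigma_{\delta}$ (so that $\dist_{\overline{g}}(\gamma(\overline{s}),\partial\Sigma_{\delta})\gtrsim \min(\overline{s},L-\overline{s})$), and guarantees that the geodesic ball appearing in the proof of Lemma \ref{LEMMAME} may be taken out to the endpoint distance even when it meets the boundary, since at a boundary maximum of the test function $\chi\phi$ the inward-normal contribution has the favourable sign and the Bakry--\'Emery Laplacian comparison $\overline{\Delta}_{f}r\le (3+1/\alpha)/r$ of \cite{MR2577473} continues to hold in the barrier sense. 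With this geometric input in place, (\ref{PLANL2}) and hence (\ref{BFN}) follow.
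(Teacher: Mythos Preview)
Your Cauchy--Schwarz reduction to the $L^{2}$ bound
\[
\int_{s}^{t}\Big(\frac{N'}{N}\Big)^{2}\,d\overline{s}\ \le\ \eta\Big(\frac{1}{s}+\frac{1}{L-t}\Big)
\]
is exactly how the paper also finishes the argument. The gap is entirely in how you obtain this $L^{2}$ bound.

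Your route is to upgrade Lemma~\ref{LEMMAME} to the pointwise estimate
$|\nabla\ln N|_{\overline{g}}^{2}(\gamma(\overline{s}))\le \eta\,\min\big(\overline{s}^{-2},(L-\overline{s})^{-2}\big)$
by running the test-function argument on the ball $B_{\overline{g}}(\gamma(\overline{s}),\overline{s})$. This is not justified. First, minimising in an isotopy class does \emph{not} imply that the sub-segment realises the ambient distance: there may well be a shorter curve from $p$ to $\gamma(\overline{s})$ in another class, so $B_{\overline{g}}(\gamma(\overline{s}),\overline{s})$ need not be the right ball. Second, even granting that, the ball can meet $\partial\Sigma_{\delta}$, and neither of your two fixes works. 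The claim $\dist_{\overline{g}}(\gamma(\overline{s}),\partial\Sigma_{\delta})\gtrsim \min(\overline{s},L-\overline{s})$ is not a consequence of convexity alone (and any constant you could extract would depend on the geometry of $\partial\Sigma_{\delta}$, not only on $\epsilon$, so the uniform bound $50$ would be lost). The claim that a boundary maximum of $\chi\phi$ has ``the favourable sign'' is not substantiated either: convexity controls the second fundamental form of $\partial\Sigma_{\delta}$, but $\phi=|\nabla\ln N|^{2}_{\overline{g}}$ carries no Neumann-type condition there, so nothing forces the Hessian inequality needed for the maximum-principle step.

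The paper avoids all of this by never seeking a pointwise bound. It works \emph{only along $\gamma$}, using the modified expansion $\theta_{f}=\theta-f'$ of the geodesic congruence from $p$ and the Riccati-type inequality (coming from $\overline{Ric}^{\alpha/2}_{f}=\tfrac{\alpha}{2}\nabla f\nabla f\ge 0$)
\[
\theta_{f}'\ \le\ -a^{2}\theta_{f}^{2}\ -\ b^{2}\Big(\frac{N'}{N}\Big)^{2},\qquad a^{2}=\frac{1}{2/\alpha+3},\quad b^{2}=\frac{(1+\epsilon)^{2}\alpha}{2}.
\]
Integrating the second term gives the $L^{2}$ bound directly,
$\int_{s}^{t}(N'/N)^{2}\le b^{-2}\big(\theta_{f}(s)-\theta_{f}(t)\big)$,
while the first term and the absence of focal points along the isotopy-minimiser yield $\theta_{f}(s)\le (a^{2}s)^{-1}$ and $\theta_{f}(t)\ge -(a^{2}(L-t))^{-1}$. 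Since $\gamma\subset\Sigma_{\delta}^{\circ}$ by convexity, the boundary never enters, and a short check gives $1/(a^{2}b^{2})\le 50$ for $0<\epsilon<1/4$. This one-dimensional Raychaudhuri argument is the missing idea; the ball/maximum-principle mechanism of Lemma~\ref{LEMMAME} is not what is being used here.
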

Note that in this statement, $s$, $t-s$ and $L-t$ are, respectively, the $\overline{\sg}$-distances along $\gamma$ between the pairs of points $(p,\gamma(s))$, $(\gamma(s),\gamma(t))$ and $(\gamma(t),q)$.

\begin{proof} Let $f$ and $\alpha$ be as in Proposition \ref{FELIZ}. Let $\gamma$, $s$ and $t$ be as in the hypothesis. Let $\theta(\overline{s})$ be the expansion along $\gamma$ of the congruence of geodesics emanating from $p$, where $\overline{s}$ is the arc-length. From (\ref{OOO}) we can write 
\be\label{RBE2}
\overline{Ric}^{\alpha/2}_{f}=\overline{Ric}+\overline{\nabla}\overline{\nabla}f-\frac{\alpha}{2}\overline{\nabla}f\overline{\nabla}f=\frac{\alpha}{2}\overline{\nabla}f\overline{\nabla}f
\ee
where $0<\alpha$ because $0<\epsilon<1/4<-1+\sqrt{2}$. Let $\theta_{f}=\theta-f'$ where $f'=df(\gamma(\overline{s}))/d\overline{s}$. As shown in \cite{MR2577473}, (\ref{RBE2}) implies that,
\be
\theta_{f}'\leq -\frac{1}{2/\alpha+3}\theta_{f}^{2}-\frac{\alpha}{2}(f')^{2}=-a^{2}\theta_{f}^{2}-b^{2}\bigg(\frac{N'}{N}\bigg)^{2}
\ee
where $'=d/d\overline{s}$ and
\be\label{514}
a^{2}=\frac{1}{2/\alpha+\epsilon},\quad \text{and}\quad b^{2}=\frac{(1+\epsilon)^{2}\alpha}{2}
\ee

From the differential inequality $\theta_{f}'\leq -a^{2}\theta_{f}^{2}$ we deduce,
\be\label{INEC1}
\theta_{f}(s)\leq \frac{1}{a^{2}s}
\ee
and also we deduce 
\be\label{INEC2}
\theta_{f}(t)\geq -\frac{1}{a^{2}(L-t)}
\ee
because if $\theta_{f}(s)<-\frac{1}{L-t}$ then there exists $r$, with $t<r<L$, for which $\theta_{f}(r)=-\infty$, and therefore $\theta(r)=-\infty$, contradicting that $\gamma$ is length minimising within its isotopy class.

Hence, we can use (\ref{INEC1}) and (\ref{INEC2}) and $\theta_{f}'\leq -b^{2}(N'/N)^{2}$ to deduce
\begin{align}
\bigg|\ln \frac{N(t)}{N(s)}\bigg|^{2}&=\bigg|\int_{s}^{t}\frac{N'}{N}d\overline{s}\bigg|^{2}\leq (t-s)\int_{s}^{t}\bigg(\frac{N'}{N}\bigg)^{2}d\overline{s}\\
&\leq (t-s)\frac{1}{b^{2}}(\theta_{f}(s)-\theta_{f}(t))\leq \frac{(t-s)}{a^{2}b^{2}}\bigg(\frac{1}{s}+\frac{1}{L-t}\bigg) 
\end{align}
which gives (\ref{BFN}) if one observes that $1/a^{2}b^{2}\leq 50$, after a short computation involving (\ref{514}), the form of $\alpha$ from Proposition \ref{FELIZ}, and the fact that $\epsilon<1/4$.
\end{proof}

\begin{Proposition}\label{PT2} Let $(\Sigma;g,N)$ be a static black hole data set. Let $\mathcal{S}_{1}$ and $\mathcal{S}_{2}$ be two disjoint, connected, compact, boundary-less and orientable surfaces, embedded in $\sM^{\circ}$. Let $W:\mathbb{R}\rightarrow \sM^{\circ}$ be a smooth embedding, intersecting $\mathcal{S}_{1}$ and $\mathcal{S}_{2}$ only once and transversely and with $W(t)$ diverging as $t\rightarrow \pm\infty$. Then, there is $p_{1}\in \mathcal{S}_{1}$ and $p_{2}\in \mathcal{S}_{2}$ such that $N(p_{1})=N(p_{2})$.
\end{Proposition}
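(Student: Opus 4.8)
The plan is to argue by contradiction using the Harnack-type estimate of Proposition \ref{PT1}. Suppose no such pair exists. Since $\mathcal{S}_1$ and $\mathcal{S}_2$ are connected and compact, $N(\mathcal{S}_1)$ and $N(\mathcal{S}_2)$ are closed intervals in $(0,\infty)$, and the absence of a common value means these intervals are disjoint. Relabelling if necessary, we may assume $\max_{\mathcal{S}_1}N<\min_{\mathcal{S}_2}N$, so that there is a constant $\rho>1$ with $N(p_2)/N(p_1)\ge\rho$ for all $p_1\in\mathcal{S}_1$ and $p_2\in\mathcal{S}_2$. The goal is to contradict this by producing points on the two surfaces at which $N$ is forced to be nearly equal.

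Fix $\epsilon\in(0,1/4)$ and the associated $\delta$ from Proposition \ref{PIV}, small enough that $\mathcal{S}_1$, $\mathcal{S}_2$ and the whole image of $W$ lie in $\Sigma_\delta$; this is possible because $W$ is proper and its image is disjoint from the compact $\partial\Sigma$. For large $T$ set $q_-=W(-T)$ and $q_+=W(T)$, and let $\gamma_T:[0,L_T]\to\Sigma_\delta$ be the $\overline{g}=N^{-2\epsilon}g$ geodesic minimising $\overline{g}$-length in the isotopy class of $W|_{[-T,T]}$ with those endpoints. Such a minimiser exists and stays in $\Sigma_\delta^{\circ}$ away from the endpoints by property (ii) stated after Proposition \ref{PIV}, so Proposition \ref{PT1} applies to it. Because $\gamma_T$ is homotopic rel endpoints to $W|_{[-T,T]}$ and $W$ meets each $\mathcal{S}_i$ exactly once and transversally, the mod-two intersection number of $\gamma_T$ with each $\mathcal{S}_i$ equals $1$; in particular $\gamma_T$ crosses both surfaces. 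Taking two consecutive crossings on different surfaces yields arc-length parameters $s_1<s_2$ with, say, $\gamma_T(s_1)\in\mathcal{S}_1$ and $\gamma_T(s_2)\in\mathcal{S}_2$.

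Next I control the three lengths appearing in \eqref{BFN}. Since $q_\pm$ diverge while $\mathcal{S}_1,\mathcal{S}_2$ stay in a fixed compact set, both $s_1\ge \dist_{\overline{g}}(q_-,\mathcal{S}_1)$ and $L_T-s_2\ge \dist_{\overline{g}}(q_+,\mathcal{S}_2)$ tend to $\infty$ as $T\to\infty$. The middle length $s_2-s_1$ is the $\overline{g}$-length of the sub-arc $\gamma_T|_{[s_1,s_2]}$, which is itself minimising in its isotopy class between the two compact surfaces, and I claim it is bounded uniformly in $T$: a sub-arc making an excursion of length $\gtrsim T$ between $\mathcal{S}_1$ and $\mathcal{S}_2$ would render $\gamma_T$ strictly longer than the fixed comparison curve $W|_{[-T,T]}$, contradicting minimality. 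Granting this, Proposition \ref{PT1} with $s=s_1$, $t=s_2$ gives
\be
\Big|\ln\frac{N(\gamma_T(s_2))}{N(\gamma_T(s_1))}\Big|\le \sqrt{50\Big[\frac{s_2-s_1}{s_1}+\frac{s_2-s_1}{L_T-s_2}\Big]},
\ee
whose right-hand side tends to $0$ as $T\to\infty$. Hence $N(\gamma_T(s_2))/N(\gamma_T(s_1))\to 1$, while $\gamma_T(s_1)\in\mathcal{S}_1$ and $\gamma_T(s_2)\in\mathcal{S}_2$ force this ratio to be $\ge\rho>1$; this contradiction proves the proposition.

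The delicate point, which I expect to be the main obstacle, is precisely the uniform bound on the inter-surface segment $s_2-s_1$: one must rule out that the minimiser $\gamma_T$, in order to join the two far endpoints within the prescribed isotopy class, wanders a long way into the single end between crossing $\mathcal{S}_1$ and crossing $\mathcal{S}_2$. I would settle this by the comparison argument sketched above (such an excursion is incompatible with $\gamma_T$ being shorter than $W|_{[-T,T]}$), combined with the compactness of the two surfaces so that the minimal connection between them in the relevant class has length controlled independently of $T$. Everything else is a direct application of Proposition \ref{PT1} together with the isotopy-invariance of the mod-two intersection number.
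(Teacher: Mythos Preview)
Your strategy coincides with the paper's: for each large parameter produce a $\overline g$-minimising geodesic from $W(-T)$ to $W(T)$, locate two consecutive crossings of $\mathcal S_1$ and $\mathcal S_2$ along it, bound the middle segment $s_2-s_1$ uniformly, and feed this into Proposition~\ref{PT1}. The contradiction framing is cosmetic; the paper instead lets the crossing points accumulate and obtains $N(p_1)=N(p_2)$ directly.

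The genuine gap is exactly at the point you flag. Your sketched comparison ``an excursion of length $\gtrsim T$ would make $\gamma_T$ longer than $W|_{[-T,T]}$'' yields at best $s_2-s_1\le L_T\le \length_{\overline g}(W|_{[-T,T]})$, which grows with $T$ and gives no uniform bound. The follow-up appeal to ``the minimal connection between the two surfaces in the relevant class'' does not close the argument either: the isotopy class (rel endpoints) of the sub-arc $\gamma_T|_{[s_1,s_2]}$ may vary with $T$, and in a three-manifold with topology there is no reason those classes have uniformly bounded minimal length. The difficulty is self-inflicted: by minimising in the \emph{single} isotopy class of $W|_{[-T,T]}$, any competitor you build must lie in that same class, and replacing the middle by a fixed short arc will in general change it.

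The paper's remedy is to enlarge the variational class. Minimise $\overline g$-length among all curves from $W(-1-m)$ to $W(1+m)$ having \emph{non-zero intersection number} with each of $\mathcal S_1$ and $\mathcal S_2$; this is a union of isotopy classes, and a minimiser $\gamma_m$ still exists. The sub-arc between two consecutive crossings meets neither surface in its interior, so as a curve from $\mathcal S_1$ to $\mathcal S_2$ it falls into one of at most four ``direction types'' according to the side on which it leaves $\mathcal S_1$ and enters $\mathcal S_2$. Fix once and for all a representative $\hat X$ of each non-empty type, with common length bound $B$. Build a competitor $\hat\gamma_m$ by keeping $\gamma_m$ outside $[\overline s^1_m,\overline s^2_m]$, sliding along $\mathcal S_1$ to the start of the appropriate $\hat X$ (length $\le 2\,\diam_{\overline g}\mathcal S_1$), following $\hat X$, and sliding along $\mathcal S_2$ back to $\gamma_m$ (length $\le 2\,\diam_{\overline g}\mathcal S_2$). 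By construction $\hat\gamma_m$ has the same intersection numbers with $\mathcal S_1,\mathcal S_2$ as $\gamma_m$, hence is admissible; $\length(\gamma_m)\le\length(\hat\gamma_m)$ then gives
\[
\overline s^2_m-\overline s^1_m\ \le\ B+2\,\diam_{\overline g}(\mathcal S_1)+2\,\diam_{\overline g}(\mathcal S_2),
\]
independent of $m$. With this in hand your application of Proposition~\ref{PT1} goes through unchanged.
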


\begin{proof} We work in a manifold $(\sM_{\delta};\overline{\sg})$ as in Proposition \ref{PIV} and with $0<\epsilon<1/4$. Assume thus that $\delta$ is small enough that $(W\cup \mathcal{S}_{1}\cup \mathcal{S}_{2})\subset \sM^{\circ}_{\delta}$. Orient $W$ in the direction of increasing $t$. Orient also $\mathcal{S}_{1}$ and $\mathcal{S}_{2}$ in such a way that the intersection number between $\mathcal{S}$ and $W$, and between $\mathcal{S}_{2}$ and $W$, are both equal to one. All intersection numbers below are defined with respect to these orientations.   

Redefine the parameter $t$ if necessary to have $W(-1)\in \mathcal{S}_{1}$ and $W(1)\in \mathcal{S}_{2}$. Then, for every natural number $m\geq 1$ let $\gamma_{m}(\overline{s})$ be a $\overline{g}$-geodesic minimising the $\overline{g}$-length among all the curves embedded in $\sM_{\delta}^{\circ}$, with end points $W(-1-m)$ and $W(1+m)$ and having non-zero intersection number with $\mathcal{S}_{1}$ and $\mathcal{S}_{2}$,
\footnote{The existence of such geodesic is as follows. Let ${\mathcal{C}}$ be the family of all curves joining $W(-1-m)$ and $W(1+m)$ and having non-zero intersection number with $\mathcal{S}_{1}$ and $\mathcal{S}_{2}$. As the intersection number is an isotopy-invariant, the family ${\mathcal{C}}$ is a union of isotopy classes. In each class consider a representative minimising length inside the class (recall the discussion in Section \ref{CDPL}). Let $C_{i}$ be a sequence of such representatives and (asymptotically) minimising length in the family ${\mathcal{C}}$. Such sequence has a convergent subsequence, to, say, $C_{\infty}$. As for $i\geq i_{0}$ with $i_{0}$ big enough, $C_{i}$ is isotopic to $C_{\infty}$ we conclude that $C_{\infty}\in {\mathcal{C}}$ as wished.}.
We denoted by $\overline{s}$ the $\overline{\sg}$-arc length starting from $W(-1-m)$. The $\overline{\sg}$-length of $\gamma_{m}$ is denoted by $L_{m}$. 

We want to prove that there are points $p^{1}_{m}:=\gamma_{m}(\overline{s}^{1}_{m})\in \mathcal{S}_{1}$ and $p^{2}_{m}:=\gamma_{m}(\overline{s}^{2}_{m})\in \mathcal{S}_{2}$, (for some $\overline{s}^{1}_{m}$ and $\overline{s}^{2}_{m}$), with $|\overline{s}^{2}_{m}-\overline{s}^{1}_{m}|$ uniformly bounded above. Once this is done the proof is finished as follows. As the initial and final points $W(-1-m)$ and $W(1+m)$ get further and further away from $\mathcal{S}_{1}$ and $\mathcal{S}_{2}$, then we have $\overline{s}^{1}_{m}\rightarrow \infty$, $\overline{s}^{2}_{m}\rightarrow \infty$, $L_{m}-\overline{s}^{2}_{m}\rightarrow \infty$, and $L_{m}-\overline{s}^{1}_{m}\rightarrow \infty$. Therefore we can rely in Proposition \ref{PT1} used with $\gamma=\gamma_{m}$, $\gamma(s)=p^{1}_{m}$, and $\gamma(t)=p^{2}_{m}$, to conclude that 
\be
\lim_{m\rightarrow \infty} |N(p^{1}_{m})-N(p^{2}_{m})|= 0
\ee  
Hence, if $p_{1}$ is an accumulation point of $\{p^{1}_{m}\}$ and $p_{2}$ an accumulation point of $\{p^{2}_{m}\}$ we will have $N(p_{1})=N(p_{2})$ as desired.

Consider now the set of embedded curves $X:[-1,1]\rightarrow \sM^{\circ}$, starting at $\mathcal{S}_{1}$ and transversely to it, ending at $\mathcal{S}_{2}$ and transversely to it, and not intersecting $\mathcal{S}_{1}$ and $\mathcal{S}_{2}$ except of course at the initial and final points. There are at most four classes of curves $X$, distinguished according to the direction to which the vectors $X'(-1)$ and $X'(1)$ point. For each non-empty class fix a representative, so there are at most four of them, and let $B$ be a common upper bound of their lengths.  

Without loss of generality assume that each $\gamma_{m}$, as defined earlier, intersects $\mathcal{S}_{1}$ and $\mathcal{S}_{2}$ transversely\footnote{Otherwise use suitable small deformations}. Let also $\{\gamma_{m}(\overline{s}^{1}_{1m}),\ldots,\gamma_{m}(\overline{s}^{1}_{l_{1}m})\}$ and $\{\gamma_{m}(\overline{s}^{2}_{1m}),\ldots,\gamma_{m}(\overline{s}^{2}_{l_{2}m})\}$ be the points of intersection of $\gamma_{m}$ with $\mathcal{S}_{1}$ and $\mathcal{S}_{2}$ respectively. For each $m$ choose any two $\overline{s}^{1}_{i_{1}m}$ and $\overline{s}^{2}_{i_{2}m}$ consecutive, namely that the open interval 
\be
(\min\{\overline{s}^{1}_{i_{1}m},\overline{s}^{2}_{i_{2}m}\},\max\{\overline{s}^{1}_{i_{1}m},\overline{s}^{2}_{i_{2}m}\})
\ee
does not contain any of the elements $\{\overline{s}^{1}_{1m},\ldots,\overline{s}^{1}_{l_{1}m};\overline{s}^{2}_{1m},\ldots,\overline{s}^{2}_{l_{2}m}\}$. Without loss of generality we assume that $\overline{s}^{1}_{i_{1}m}<\overline{s}^{2}_{i_{2}m}$ for all $m$. 

To simplify notation let $\overline{s}^{1}_{m}:=\overline{s}^{1}_{i_{1}m}$ and $\overline{s}^{2}_{m}:=\overline{s}^{2}_{i_{2}m}$. The curves $X_{m}(\overline{s}):=\gamma_{m}(\overline{s})$, $\overline{s}\in [\overline{s}^{1}_{m},\overline{s}^{2}_{m}]$, can be thought (after reparameterisation) as belonging to one of the four classes of curves $X$ described above. For every $m$ let then $\hat{X}_{m}$ be the representative, chosen earlier, of the class to which $X_{m}$ belongs. 

We compare now the length of $\gamma_{m}$ with the length of a competitor curve, that we denote by $\hat{\gamma}_{m}$, and that is constructed out of $\hat{X}_{m}$ and $\gamma_{m}$ itself. The construction of $\hat{\gamma}_{m}$ is better described in words. Starting from $\gamma_{m}(0)$ we move forward through $\gamma_{m}$, reach $\mathcal{S}_{1}$ at $\gamma_{m}(\overline{s}^{1}_{m})$, and cross it slightly. From there we move through a curve very close to $\mathcal{S}_{1}$ and of length less than $2\diam(\mathcal{S}_{1})$ until reaching a point in $\hat{X}_{m}$. Then we move through $\hat{X}_{m}$ until a point right before $\mathcal{S}_{2}$. Finally we move through a curve very close to $\mathcal{S}_{2}$ and of length less than $2\diam(\mathcal{S}_{2})$ until reaching a point in $\gamma_{m}$ right before $\gamma_{m}(\overline{s}^{2}_{m})$, from which we move through $\gamma_{m}$ until reaching $\gamma_{m}(L_{m})$. Clearly $\gamma_{m}$ has the same intersection numbers with $\mathcal{S}_{1}$ and $\mathcal{S}_{2}$ as $\gamma_{m}$ has, hence non-zero. Thus, by the definition of $\gamma_{m}$ we have,
\be
\length(\gamma_{m})\leq \length(\hat{\gamma}_{m})
\ee
But we have 
\be
\length(\gamma_{m})=\overline{s}^{1}_{m}+(\overline{s}^{2}_{m}-\overline{s}^{1}_{m})+(L_{m}-\overline{s}^{2}_{m})
\ee
and (if the construction of $\hat{\gamma}_{m}$ is fine enough)
\be
\length(\hat{\gamma}_{m})\leq \overline{s}^{1}_{m}+2\diam(\mathcal{S}_{1})+\length(\hat{X}_{m})+2\diam(\mathcal{S}_{2})+(L_{m}-\overline{s}^{2}_{m})
\ee
Hence, as $\length(\hat{X}_{m})\leq B$ we conclude that 
\be
\overline{s}^{2}_{m}-\overline{s}^{1}_{m}\leq B+2\diam(\mathcal{S}_{1})+2\diam(\mathcal{S}_{2})
\ee
That is, $|\overline{s}^{2}_{m}-\overline{s}^{1}_{m}|$ is uniformly bounded as wished.
\end{proof}

Let us introduce the setup required for the next Proposition \ref{PT3} and for the proof of Theorem \ref{BNFB}. Although it was proved earlier that static black hole data sets have only one end, below we will work as if the manifold could have more than one end. The reason for this is that the framework below is valid in higher dimensions, and this could help to investigate whether higher dimensional vacuum static black holes have also only one end. The proof of this fact that we gave earlier holds only in dimension three.

Choose $\sM_{i},i=1,\ldots, i_{\sM}\geq 1$ a set of non-compact and {\it connected} regions of $\sM^{\circ}$, with compact (and smooth) boundaries, each containing only one end, and the union covering $\sM$ except for a {\it connected} set of compact closure, (i.e. $\sM\setminus (\cup \sM_{i}^{\circ})$ is compact and connected). For each end $\sM_{i}$ we consider an end cut $\{\mathcal{S}_{ijk}, j\geq 0, k=1,\ldots,k_{ij}\}$. 

The surfaces $\mathcal{S}_{ijk}$ are considered only to serve as a `reference'. Their geometry plays no role. The condition that the union of the ends $\Sigma_{i}$ covers $\Sigma$ except for a connected set of compact closure will be technically relevant in the proof below. It ensures that given any two $\mathcal{S}_{ijk}$ and $\mathcal{S}_{i'j'k'}$ with either: $i\neq i'$ ($j,k,j',k'$ any), or $i=i'$, $j=j'$ ($k,k'$ any), one can always find an immersed curve $W:\mathbb{R}\rightarrow \Sigma$ intersecting $\mathcal{S}_{ijk}$ and $\mathcal{S}_{i'j'k'}$ only once and such that $W(t)$ diverges as $t\rightarrow\pm \infty$. This fact follows directly from the definition of end cut.

\begin{Proposition}\label{PT3} (setup above) Let $(\sM;g,N)$ be a static black hole data set. Then,
\begin{enumerate}
\item If $i_{\sM}>1$, then for any $\mathcal{S}_{ijk}$ and $\mathcal{S}_{i'j'k'}$, with $i\neq i'$, there are points $p\in \mathcal{S}_{ijk}$ and $p'\in \mathcal{S}_{i'j'k'}$ such that $N(p)=N(p')$.
\item If $i_{\sM}=1$, then for every $j$ with $k_{1j}>1$ and $1\leq k\neq k'\leq k_{1j}$, there are points $p\in \mathcal{S}_{1jk}$ and $p'\in \mathcal{S}_{1jk'}$ such that $N(p)=N(p')$. 
\end{enumerate}
\end{Proposition}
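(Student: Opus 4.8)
The plan is to derive both statements directly from Proposition \ref{PT2}, whose conclusion is already of exactly the desired form: if $\mathcal{S}_1$ and $\mathcal{S}_2$ are two disjoint, connected, compact, boundary-less, orientable surfaces in $\Sigma^{\circ}$ joined by a smooth embedding $W:\mathbb{R}\to\Sigma^{\circ}$ meeting each of them exactly once and transversely, with $W(t)$ diverging as $t\to\pm\infty$, then $N$ attains a common value on $\mathcal{S}_1$ and $\mathcal{S}_2$. Thus in each of the two cases it suffices to take $\mathcal{S}_1$ and $\mathcal{S}_2$ to be the two surfaces named in the statement and to exhibit a suitable $W$.

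First I would check that the surfaces in question satisfy the structural hypotheses of Proposition \ref{PT2}. Each $\mathcal{S}_{ijk}$ is, by construction, a connected component of some $\partial^{-}\mathcal{P}^{m}_{j,j+1}$; it is therefore a connected compact surface without boundary embedded in $\Sigma^{\circ}$, and it is orientable since it arises as a component of a regular level set of a function inside the orientable manifold $\Sigma$. In case (1) the two surfaces lie in different ends $\Sigma_i,\Sigma_{i'}$ (as $i\neq i'$), while in case (2) they are distinct components ($k\neq k'$) of one and the same partition cut at level $j$; in both situations the two surfaces are disjoint, as required.

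The only remaining hypothesis is the existence of the connecting line $W$, and this is precisely the observation recorded in the setup preceding the statement. Because the ends $\Sigma_i$ cover $\Sigma$ outside a connected set of compact closure, and by the defining separation property of end cuts, for each of the two admissible configurations — namely $i\neq i'$ (case (1)) and $i=i'$ with $j=j'$ (case (2)) — there is an immersed curve $W:\mathbb{R}\to\Sigma$ meeting $\mathcal{S}_{ijk}$ and $\mathcal{S}_{i'j'k'}$ exactly once each, with $W(t)$ diverging as $t\to\pm\infty$. Concretely such a $W$ runs out to infinity along one of the channels separated off by the first surface, crosses it once, passes through the compact connected core of $\Sigma$, crosses the second surface once, and escapes to infinity along the corresponding channel on the other side. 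Since $\Sigma$ is three-dimensional, self-intersections of a curve are non-generic (they have negative expected dimension: the double-point locus is the preimage of the codimension-three diagonal under a map from a two-dimensional domain). Hence, after an arbitrarily small perturbation that is kept fixed near the two transverse crossings, $W$ may be taken to be an embedding still meeting each surface exactly once and transversely; this promotes the immersed curve to the embedded line demanded by Proposition \ref{PT2}.

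With $\mathcal{S}_1,\mathcal{S}_2$ and $W$ in hand, Proposition \ref{PT2} produces points $p\in\mathcal{S}_1$ and $p'\in\mathcal{S}_2$ with $N(p)=N(p')$, which is exactly the assertion of each case. I expect the only genuine point requiring care to be the topological construction of $W$ with the exact single-intersection and two-sided divergence properties, together with its refinement to an embedding; the analytic heart of the matter has already been isolated in Proposition \ref{PT2} (via the gradient control of Proposition \ref{PT1}), so once the routing of $W$ is granted the conclusion is immediate.
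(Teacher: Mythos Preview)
Your proposal is correct and follows essentially the same approach as the paper: both cases are reduced directly to Proposition \ref{PT2} by exhibiting the embedded line $W$ guaranteed by the setup, after which the conclusion is immediate. Your write-up is more explicit than the paper's (which simply asserts the existence of $W$ and invokes Proposition \ref{PT2}), in particular in verifying the structural hypotheses on the surfaces and in noting the general-position step promoting the immersed curve to an embedding, but the argument is the same.
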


\begin{proof} If $i_{\sM}>1$ then we can easily construct an embedding $W:\mathbb{R}\rightarrow \sM^{\circ}$ intersecting the manifolds $\mathcal{S}_{ijk}$ and $\mathcal{S}_{i'j'k'}$ only once and with $W(t)\rightarrow \infty$ as $t\rightarrow \pm \infty$. The existence of $p\in \mathcal{S}_{ijk}$ and $p'\in \mathcal{S}_{i'j'k'}$ for which $N(p)=N(p')$ then follows from Proposition \ref{PT2}. The case $i_{\sM}=1$ is treated in exactly the same way.
\end{proof}

We are ready to prove Theorem \ref{BNFB}.

\begin{proof}[\it Proof of Theorem \ref{BNFB}.] We use the same setup as in Proposition \ref{PT3}. Also we let $\mathcal{S}_{ij}:=\cup_{k=1}^{k=k_{ij}}\mathcal{S}_{ijk}$ and given $j'>j$, $\mathcal{U}_{i;jj'}$ denotes the closed region enclosed by $\mathcal{S}_{ij}$ and $\mathcal{S}_{ij'}$. Also, given a closed set $C$, we let $\min\{N;C\}:=\min\{N(x):x\in C\}$ and similarly for $\max\{N;C\}$.

We want to show that $N$ is bounded from below away from zero at every one of the ends $\sM_{i}$. We distinguish two cases: $i_{\Sigma}>1$ and $i_{\Sigma}=1$. 

{\it Case $i_{\sM}>1$}. Without loss of generality we prove this only for $\sM_{1}$. Let us fix a surface $\mathcal{S}_{2j_{0}k_{0}}$ in $\sM_{2}$. By Proposition \ref{PT3} we know that at every $\mathcal{S}_{1jk}$ we have
\be\label{BEFF}
0<\min\{N;\mathcal{S}_{2j_{0}k_{0}}\}\leq \max\{N;\mathcal{S}_{1jk}\}
\ee
On the other hand the Harnak estimate (\ref{EQHARN1}) in Proposition \ref{MAXMINU11} gives us 
\be\label{GO}
\max\{N;\mathcal{S}_{1jk}\}\leq \eta' \min\{N;\mathcal{S}_{1jk}\}
\ee
where $\eta'$ is independent of $j$ and $k$. Combined with (\ref{BEFF}) this gives us the bound
\be\label{FDDA}
0<\eta''<\min\{N;\mathcal{S}_{1jk}\}
\ee
where $\eta''$ is independent of $j$ and $k$. Now, recall that the manifolds $\mathcal{U}_{1;j,j+1},j=0,1,\ldots$ cover $\sM_{1}$ up to a set of compact closure and that for each $j$, $\partial \mathcal{U}_{1;j,j+1}$ is the union of the surfaces $\mathcal{S}_{1jk};k=1,\ldots,k_{1j}$ and $\mathcal{S}_{1,j+1,k};k=1,\ldots,k_{1,j+1}$. Therefore by (\ref{FDDA}) and the maximum principle we deduce,
\be
0< \eta''< \min\{N;\partial \mathcal{U}_{1;j,j+1}\}\leq \min\{N; \mathcal{U}_{1;j,j+1}\}
\ee
from which the lower bound for $N$ away from zero over $\sM_{1}$ follows.

{\it Case $i_{\sM}=1$}. We observe first that, as in this case $\sM_{1}$ is the only end and as $N=0$ on $\partial \sM$, then $N$ cannot go uniformly to zero at infinity (this would violate the maximum principle).  We prove now that, if there is a diverging sequence $p_{l}$ such that $N(p_{l})\rightarrow 0$, then $N$ must go to zero uniformly at infinity. The proof will then be finished.

As $i_{\Sigma}=1$ we will remove the index $i=1$ everywhere from now on. For every $l$ let $j_{l}$ be such that $p_{l}\in \mathcal{U}_{j_{l},j_{l}+1}$ and let $\mathcal{U}^{c}_{j_{l},j_{l}+1}$ be the connected component of $\mathcal{U}_{j_{l},j_{l}+1}$ containing $p_{l}$. By the maximum principle we have
\be
\min\{N;\partial \mathcal{U}^{c}_{j_{l},j_{l}+1}\}\leq \min\{N;\mathcal{U}^{c}_{j_{l},j_{l}+1}\}\leq N(p_{l})
\ee
Therefore we can extract a sequence of connected components of $\partial \mathcal{U}^{c}_{j_{l},j_{l}+1}$, denoted by $\mathcal{S}_{j^{l}k_{l}}$ ($j^{l}$ is either $j_{l}$ or $j_{l}+1$), such that 
\be
\min\{N; \mathcal{S}_{j^{l}k_{l}}\}\rightarrow 0
\ee
From this and (\ref{GO}) we obtain
\be\label{GFGF}
\max\{N;\mathcal{S}_{j^{l}k_{l}}\}\rightarrow 0
\ee
Then, by Proposition \ref{PT3} we have
\be\label{POPP}
\min\{N;\mathcal{S}_{j^{l}k}\}\leq \max\{N;\mathcal{S}_{j^{l}k_{l}}\}
\ee
(note the difference in the subindexes $k$ and $k_{l}$) for all $k= 1,\ldots,k_{j^{l}}$ (it could be of course $k_{j^{l}}=1$). Using (\ref{GO}) in the left hand side of (\ref{POPP}) and using (\ref{GFGF}) we get
\be\label{SSAA}
\max\{N;\mathcal{S}_{j^{l}}\}\rightarrow 0
\ee
By the maximum principle again we deduce for any $l'>l$ the inequality
\be
\max\{N;\mathcal{U}_{j^{l}j^{l'}}\}\leq \max\{\max\{N;\mathcal{S}_{j^{l}}\}; \max\{N;\mathcal{S}_{j^{l'}}\}\}
\ee
Taking the limit $l'\rightarrow \infty$ we deduce that
the supremum of $N$ over the unbounded connected component of $\sM\setminus \mathcal{S}_{j^{l}}$ is less or equal than the maximum of $N$ over $\mathcal{S}_{j^{l}}$. Hence $N$ must tend uniformly to zero at infinity because of (\ref{SSAA}).
\end{proof}

\bibliographystyle{plain}
\bibliography{Master}

\end{document}